\documentclass[twocolumn]{IEEEtran}
 
\usepackage{latexsym}
\usepackage{amssymb}
\usepackage{bm}
\usepackage{stmaryrd}
\usepackage[dvips]{graphics}
\usepackage{graphicx}
\usepackage{psfrag}
\usepackage{amsfonts,amsmath,amssymb}
\usepackage{algorithm}
\usepackage{algorithmic,amsthm}
\usepackage{dsfont,color,subfigure,setspace}
\usepackage{cite}

\DeclareMathOperator*{\argmin}{arg\,min}


\usepackage{xspace}
\usepackage{bbm}

%
%
%
%
%
%
%
%
%
%
%
%

\newcommand{\Ac}{\mathcal{A}}
\newcommand{\Bc}{\mathcal{B}}
\newcommand{\Cc}{\mathcal{C}}
\newcommand{\Dc}{\mathcal{D}}
\newcommand{\Ec}{\mathcal{E}}
\newcommand{\Fc}{\mathcal{F}}
\newcommand{\Gc}{\mathcal{G}}

\newcommand{\Nc}{\mathcal{N}}

\newcommand{\Qc}{\mathcal{Q}}

\newcommand{\Sc}{\mathcal{S}}
\newcommand{\Tc}{\mathcal{T}}

\newcommand{\Xv}{{\bf X}}
\newcommand{\Yv}{{\bf Y}}

\newcommand{\xv}{{\bf x}}
\newcommand{\yv}{{\bf y}}
\newcommand{\zv}{{\bf z}}
\newcommand{\uv}{{\bf u}}




\newcommand{\fh}{{\hat{f}}}


\def\a{\alpha}
\def\b{\beta}
\def\g{\gamma}
\def\d{\delta}
\def\e{\epsilon}


\let\P\relax
\DeclareMathOperator\P{P}





\newcommand\ie{i.e.,\xspace}
\def\textiid{i.i.d.\@\xspace}
\newcommand\iid{\ifmmode\text{ i.i.d. } \else \textiid \fi}

\newcommand{\ind}{\mathbbmss{1}}


\newcommand{\ex}{{\rm e}}
\newcommand{\y}{\yv }

\newcommand{\xvt}{\tilde{\bf x} }
\newcommand{\xvh}{\hat{\bf x} }

\newcommand{\cv}{{\bf c}}

\newcommand{\s}{\sigma}

\newtheorem{definition}{Definition}

\newtheorem{theorem}{Theorem}
\newtheorem{lemma}{Lemma}

\newtheorem{corollary}{Corollary}
\newtheorem{remark}{Remark}
\newtheorem{proposition}{Proposition}
\newtheorem{example}{Example}

\begin{document}

\title{From Compression to Compressed Sensing\footnote{This paper was presented in part at  Information Theory and Applications Workshop, San Diego, CA 2013, at IEEE International Symposium on Information Theory,  Istanbul, Turkey, 2013,   and at Signal Processing with Adaptive Sparse Structured Representations, EPFL, Lausanne 2013.}}

\author{Shirin Jalali and Arian Maleki}
\date{} 

\maketitle

\begin{abstract}
Can compression algorithms be employed for recovering signals from their underdetermined set of linear measurements? Addressing this question is the first step towards  applying compression algorithms for compressed sensing (CS). In this paper, we consider  a family of compression algorithms $\Cc_r$, parametrized by rate $r$, for a compact class of signals $\Qc \subset \mathds{R}^n$. The set of natural images and JPEG at different rates  are examples of $\Qc$ and $\Cc_r$, respectively.     
We establish a connection  between the rate-distortion performance of $\Cc_r$, and the number of linear measurements required for successful recovery in CS. We then propose compressible signal pursuit (CSP) algorithm and prove that, with high probability, it  accurately and robustly  recovers   signals from an underdetermined set of linear measurements.  We also explore the performance of CSP in the recovery of infinite dimensional signals.  
\end{abstract}

\section{Introduction}
The field of compressed sensing (CS) was established on a keen observation that if a signal is sparse in a certain basis it can be recovered from far fewer random linear measurements than its ambient dimension \cite{Donoho1, CaRoTa06}. In the last decade, CS recovery algorithms have evolved to capture more complicated signal structures  such as group sparsity, low-rankness \cite{BakinThesis, eldar2010block, YuLi06, ji2009multi, MaAnYaBa11, stojnic2009block, stojnic2009reconstruction, stojnic2010ell, MeVaBu08, ChRePaWi10, RichModelBasedCS, ReFaPa10, VeMaBl02, som2012compressive, CaLiMaWr09, ChSaPaWi11, duarte2011performance}, and other broader notions of ``structure'' \cite{DoKaMe06, BaDu12, jalali2011minimum, JaMaBa12}.
In this paper, we consider a different type of structure based on compression algorithms. Suppose that a class of signals can be ``efficiently'' compressed by a compression algorithm. Intuitively speaking,  such classes of signals have a certain ``structure'' that enables the compression algorithm to represent them with fewer bits. These structures are often much more complicated than sparsity, and employing them in CS can potentially  reduce the number of measurements required for signal recovery. 
 
In this paper, we aim to address the following problem. Is it possible to employ compression algorithms in  CS and design compression-based CS  algorithms that  recover  signals  either exactly or with ``small error'',  from their under-determined  set of linear measurements? As we will prove in this paper, the answer to this question is affirmative. 
We propose a CS recovery algorithm based on  exhaustive search over the set of ``compressible'' signals, that, under certain condition on the rate-distortion performance of the code,  recovers signals from fewer measurements   than their ambient dimensions.  This result provides the first theoretical basis for using generic compression algorithms in CS.

The theoretical framework we develop shows a connection between the problem of compressed sensing and relevant problems in the filed of embedology \cite{SaYoCa91, Mane81, HuKa99, BeEdFoNi93}. This connection has also been  explored in \cite{WuVe10}. As an application of this connection, we will derive several fundamental results of embedology as corollaries of our main results. 

We also extend our results   to analog signals. Such extensions are important for many applications including spectrum sensing \cite{TrLaDuRoBa10, MiElDoSh11, MiEl09}. Our generalization employs a new measurement technique, that is based on the projection of the signal onto several independent Wiener processes, and the CSP algorithm. We show that these two ingredients enable us to utilize   most of our proof techniques in the problem of analog-CS as well.

The organization of the paper is as follows. Section \ref{sec:background} reviews the main concepts used in this paper. Section \ref{sec:problemstate} formally states the problem addressed in the paper and introduces our compressible signal pursuit algorithm. Section \ref{sec:lower} establishes a lower bound for the number of measurements any recovery method (based on compression algorithm) requires for accurate recovery.  Section \ref{sec:contributions} and Section \ref{sec:unifCS} summarize our main contributions. Section \ref{sec:extension} extends our results to the class of analog signals. Section \ref{sec:related-work} reviews the related  work in the literature. Section \ref{sec:proofs} includes the proofs of our main theorems. Finally, Section \ref{sec:conclusion} concludes the paper.


\section{Background}\label{sec:background}


\subsection{Notation}
Boldfaced letters such as $\xv$ and $\Xv$ represent vectors. Calligraphic letters denote sets and operators; the distinction will be clear from the context. Given a finite set $\Ac$, $|\Ac|$ denotes its size.  The $\ell_p$-norm of  $\xv \in \mathds{R}^n$ is defined as $\|\xv\|_p \triangleq (\sum_{i=1}^n |x_i|^p)^{1/p}$. The $\ell_0$-norm is also defined as $\|\xv\|_0 \triangleq |\{i \ : \ x_i \neq 0 \}|$. Note that for $p<1$, $\|\cdot \|_p$ is a semi-norm since it does not satisfy the triangle inequality. Throughout the paper $\log$ denotes logarithm in base ${\rm e}$, and logarithm in base 2 is denoted explicitly as $\log_2$.


\subsection{Compression}
Let $\Qc$ denote a compact subset of $\mathds{R}^n$.\footnote{We extend our results to infinite dimensional spaces in Section \ref{sec:extension}.} Consider a compression algorithm for $\Qc$  described by encoder and decoder mappings $(\Ec,\Dc)$. Encoder \[
\Ec: \Qc \to \{1,2, \ldots, 2^{r}\},
\] 
maps  signal $\xv\in\Qc$ to codeword $\Ec(\xv)$.  Decoder  
\[
\Dc:\{1,2, \ldots, 2^r\} \to \hat{\Qc},
\] 
maps the coded signal  $\Ec(\xv)$ back to the reconstruction domain $\hat{\Qc}\subset\mathds{R}^n$. Let $\hat{\xv}\triangleq \Dc(\Ec(\xv))$ denote the reconstruction of signal $\xv\in\Qc$. Let $\Cc$ denote the {\em codebook} of code $(\Ec,\Dc)$, \ie 
\[
\Cc\triangleq \{ \Dc(\Ec(\xv)): \xv \in \Qc\}.
\]
Clearly, $|\Cc|\leq 2^{r}$.
The performance of the described coding scheme is measured in terms of its rate $r$ and its induced  distortion $\d$ defined as
\begin{align}
\d \triangleq \sup_{\xv \in \Qc} \| \xv -\Dc(\Ec(\xv)) \|_2.\label{eq:D}
\end{align}

Consider \emph{a family of compression algorithms} $\{(\Ec_r,\Dc_r): r>0\}$ that are parametrized by  rate $r$. The distortion defined by \eqref{eq:D},  corresponding to  code $(\Ec_r,\Dc_r)$, is denoted by $\d(r)$. Furthermore,  for this family of compression algorithms, we define 
\[
r(\d) \triangleq\inf\{r: \d(r)\leq \d\}.
\]

Note that  $\d(r)$ and similarly  $r(\d)$ are functions  of \emph{both} the family of  compression algorithms and  set $\Qc$, and are different from the Shannon's rate distortion function that characterizes the fundamental limits of compressing a \emph{stochastic} process. Furthermore,  as discussed in Section \ref{sec:kol-d-r}, $r(\d)$ is also different from Kolmogorov's $\e$-entropy. 

The following example clarifies the concepts we have introduced so far. Let $\Qc$ denote the class of all natural images of a certain size and consider the JPEG compression algorithm. Nearly all software implementations of JPEG (including the one in Matlab) provide a parameter that determines the  tradeoff between the size of the compressed file (rate in our terminology) and the quality of the image (distortion in our terminology). Denote this parameter with $\tau$. Fig.~\ref{fig:fig1} shows the distortion-rate performance of JPEG for three different images as $\tau$ varies. As shown in this figure, the performance of the algorithm depends on the image. To characterize $\d(r)$ (defined in \eqref{eq:D}),  we consider \emph{all} natural images and let $\d(r)$  denote the supremum of all achieved distortions  at rate $r$. For instance, according to this definition, the  distortion-rate function of the JPEG algorithm over the class of three images in Fig.~1, is given by the red curve.  Note that characterizing the rate-distortion performance of JPEG or any other heuristic compression algorithm on the class of natural images is computationally prohibitive. However, it is possible to obtain a good lower bound by considering large libraries of natural images.

  \begin{figure}
  \begin{center}
  \includegraphics[width = 7cm]{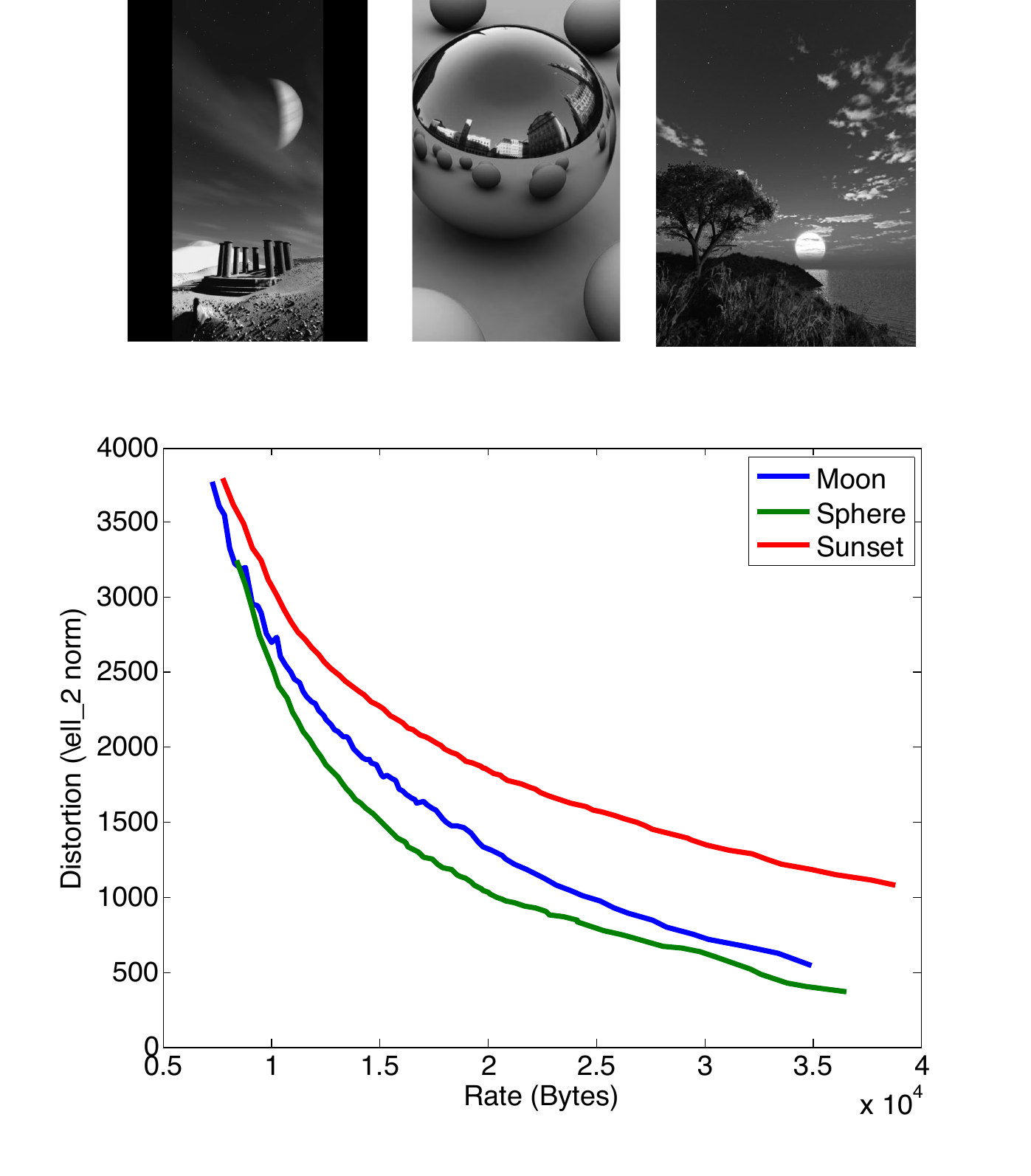}
  \caption{Rate-distrotion performance of JPEG compression for three different images (shown on the top) as $\tau$ varies. }
  \label{fig:fig1}
  \end{center}
\end{figure}

In the remainder   of this section we consider several simpler and theoretically more popular classes of signals to illustrate the concepts. Let $\Bc_p^n(\rho) \triangleq \{\xv \in \mathds{R}^n: \ \|\xv\|_p \leq \rho \}$ represent a ball of radius $\rho$ in $\mathds{R}^n$. Also, let $\Gamma_k^n$ denote the set of all $k$-sparse signals in $\mathds{R}^n$, i.e.,
\begin{equation}\label{eq:ksparse}
\Gamma_k^n \triangleq \{ \xv \in \mathds{R}^n  \ : \ \|\xv\|_0 \leq k\}.
\end{equation}

\begin{example}\label{example:1}
There exists a family of compression algorithms  for $\Bc_2^n(\rho)$ that achieves 
\[
r(\d) \leq \frac{1}{2} n \log_2 n + n \log_2 \left(\frac{\rho}{\d} \right) +cn,
\]
for $\d \leq \rho \sqrt{n}$. Here, $c$ is a constant less than $3$.
\end{example}

\begin{example}\label{example:2}
There exists a family of compression algorithms for $\Bc_2^n(\rho)\cap \Gamma_k^n$ that achieves
\[
r(\d) \leq \log_2{n \choose k} + k \log_2 \left(\frac{\sqrt{k} \rho}{\d} \right) + ck,
\]
where $c$ is a constant less than 3.
\end{example}

These are classic examples in the literature. However, we review  their proofs in Section \ref{proof:examples} to clarify the concepts introduced here. Note that since $\Bc_2^n(\rho)\cap \Gamma_k^n \subset \Bc_2^n(\rho)$, we expect to compress $\Bc_2^n(\rho)\cap \Gamma_k^n$ more efficiently. This is specially clear as $\d \downarrow 0 $.

\subsection{Connection with Kolmogorov's $\epsilon$-entropy}\label{sec:kol-d-r}
The $\epsilon$-entropy of a compact set $\mathcal{Q}$ is defined as 
\[
H_{\epsilon}(\mathcal{Q}) = \log_2 N_{\epsilon}(\mathcal{Q}),
\] 
where $N_\epsilon(\mathcal{Q})$ is the minimum number of elements in an $\epsilon$-covering of $\Qc$ \cite{Koleps59, Clements63}. According to the definition, $H_\epsilon$ provides a lower bound on the rate distortion of any family of compression algorithms.  In other words, if $r(\d)$ is the rate distortion function of a family of compression algorithms on $\mathcal{Q}$, then
\begin{equation}\label{eq:lowerrd}
r(\d) \geq H_{\d}(\Qc).
\end{equation}
As the problem of finding the optimal encoder (the one that achieves $H_{\d}(\Qc)$) is complicated for many real-world signals, heuristic algorithms are being used with rate-distortion performances that might be much worse than $H_{\d}(\Qc)$. Therefore, in this paper we consider a generic compression algorithm that is not necessarily optimal.  We then discuss the implications of our results for $H_{\epsilon}$ and the connection between our work and embedology in Section \ref{sec:relworkepsilon}.\footnote{We are thankful to an anonymous reviewer who pointed out this connection.}


\section{Problem statement}\label{sec:problemstate}

Consider the problem of recovering  ``structured'' signal $\xv\in\mathds{R}^n$ from its undersampled set of linear measurements $\yv = A\xv$,  where $\mathbf{y} \in  \mathds{R}^d$, and $A\in\mathds{R}^{d\times n}$ ($d<n$) denotes the measurement matrix.  For various types of structure such as sparsity, it is well-known that $\xv$ may be recovered from  measurements $\yv$ even with $d<n$. In this paper we explore a more elaborate type of structure based on {\rm compressibility}. 

Instead of being structured as sparse, smooth, etc., suppose that the signal belongs to a  compact set $\Qc\subset \mathds{R}^n$ and there exists a family of compression algorithms $\{(\Ec_r,\Dc_r): r>0\}$ with rate-distortion function $r(\d)$ for signals in $\Qc$. For instance, we can consider the JPEG compression algorithm \cite{JPEGbook} at different rates for the class of images. This family of  compression algorithms might be exploiting the sparsity of the signal in a certain domain or any other type of structure. The actual mechanism by which  the algorithm is  compressing the signals in $\Qc$  is not important for the purpose of this paper. Instead, we are interested in recovering  vector $\xv\in\Qc$ from an undersampled set of linear equations $\yv=A\xv$ by employing the compression algorithms $\{(\Ec_r,\Dc_r): r>0\}$.

 Toward this goal, we follow Occam's principle; Among all the signals that satisfy $\yv = A \xv$, we search for the one that can be compressed well by our compression scheme. More formally, given compression algorithm $(\Ec_r,\Dc_r)$ on $\Qc\subset\mathds{R}^n$ with codebook $\Cc_r$, for recovering $\xv_o\in\Qc$ from its measurements $\yv_o=A\xv_o$, consider  {\em compressible signal pursuit} (CSP) algorithm  defined as
\begin{align}
\hat{\xv}_o=\argmin_{\cv \in \Cc_r}  \|\yv_o- A\cv \|_2^2.\label{eq:CSP}
\end{align}
In case we have access to a family of compression schemes, e.g. JPEG,  the rate $r$ can be considered as a free parameter that can be tuned. The optimal value of this parameter depends on several aspects of the CS system that will become clear later in this section. 

So far we have ignored two important aspects of practical algorithms, which are important in evaluating the performance of CSP:
\begin{itemize}
\item[(i)] Robustness to the measurement noise: The assumption of noiseless measurements, i.e., observing  $\yv_o = A \xv_o$ with no noise, is quite strong for many applications.  A more realistic assumption is to consider $\yv_o = A \xv_o+ \zv$, where $\zv$ denotes the measurement noise. In such settings, we still require to have an ``accurate'' estimate of $\xv_o$. To obtain such an estimate, we can still employ the CSP algorithm. We will show in Sections \ref{sec:measnoisei} and \ref{ssec:noise:ind} that the CSP algorithm is robust to noise, and if the noise is small enough, CSP can still provide an accurate estimate of $\xv_o$. Note that the form of the CSP algorithm is the same for both noiseless and noisy measurements. However, if we have a family of compression algorithms (with $r$ as a free parameter) the optimal value of $r$ depends on the power of the noise. (See Sections \ref{sec:measnoisei} and \ref{ssec:noise:ind} for more details.)

\item[(ii)] Computational complexity: CSP is based on an exhaustive search and hence is computationally very demanding. Practical implementations or approximations  are left for future research.  
\end{itemize}

Before  analyzing the performance of CSP, it is important to determine the conditions under which it is possible to obtain an accurate estimate of $\xv_o$ from $\yv_o=A\xv_o$ by employing a family of compression algorithms. The next section investigates this problem.


\section{CS-applicability}\label{sec:lower}
In this section we address the following questions: Does  existence of a compression algorithm for set $\Qc$ is  equivalent to  existence of a CS-recovery method that can recover  $\xv_o \in \Qc$ from $d < n$ measurements. Under what conditions the existence of a family of compression algorithms leads to a successful recovery algorithm from undersampled set of linear measurements?
Since for every compact set, there exist families of compression algorithms, it seems that the answer to the first question ought to  be negative.  The following lemma confirms this intuition.

\begin{lemma}\label{ex:nocs}
Let $\Qc = \Bc_2^n(1)$. If the number of linear measurements $d$ is less than the ambient dimension $n$, then for  any measurement matrix $A\in\mathds{R}^{d\times n}$,  any CS-recovery algorithm will result in $\ell_2$ reconstruction error of at least $1$. In other words, for any measurement matrix $A$, if $\hat{\xv}(\yv)$ denotes the reconstruction of $\xv\in\Qc$ from  measurements $\yv=A\xv$, then
\[
\inf_{\hat{\xv}: \mathds{R}^{d}\to \mathds{R}^n} \sup_{\xv} \|\hat{\xv}(\yv)- \xv \|_2 = 1.
\]
\end{lemma}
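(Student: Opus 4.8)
The plan is to exploit the fact that when $d<n$ the measurement matrix $A$ necessarily has a nontrivial null space, which creates an unresolvable ambiguity between distinct signals in $\Qc$ that produce identical measurements. Since $\rank(A)\le d<n$, the null space of $A$ has dimension at least $n-d\ge 1$, so I can choose a vector $\vv$ with $A\vv=\mathbf{0}$ and $\|\vv\|_2=1$. Both $\vv$ and $-\vv$ then lie in $\Bc_2^n(1)=\Qc$ and satisfy $A\vv=A(-\vv)=\mathbf{0}$, so they are mapped to the same measurement $\yv=\mathbf{0}$ and hence, for any fixed decoder, to the same reconstruction $\hat{\xv}(\mathbf{0})$. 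Normalizing $\vv$ to unit norm (rather than taking an arbitrary nonzero null vector) is what will make $\pm\vv$ sit on the boundary of $\Qc$ and deliver the tight constant.

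For the lower bound, I would fix an arbitrary decoder $\hat{\xv}:\mathds{R}^d\to\mathds{R}^n$ and apply the triangle inequality at the single point $\hat{\xv}(\mathbf{0})$:
\[
\|\hat{\xv}(\mathbf{0})-\vv\|_2+\|\hat{\xv}(\mathbf{0})+\vv\|_2\ \ge\ \|2\vv\|_2=2,
\]
so at least one of the two errors is at least $1$. Because $\vv$ and $-\vv$ both belong to $\Qc$, this already shows $\sup_{\xv\in\Qc}\|\hat{\xv}(A\xv)-\xv\|_2\ge 1$, and taking the infimum over all decoders preserves the inequality. For the matching upper bound I would exhibit the trivial estimator $\hat{\xv}(\yv)\equiv\mathbf{0}$, whose worst-case error is $\sup_{\xv\in\Bc_2^n(1)}\|\xv\|_2=1$. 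Combining the two directions yields the claimed equality.

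I do not anticipate a genuine obstacle here, since the argument is an elementary minimax/identifiability bound. The one point requiring care is that the supremum in the statement is over $\xv\in\Qc$ with $\yv=A\xv$ \emph{coupled} to $\xv$, so the lower bound must be driven by two distinct signals $\vv,-\vv$ that happen to collide at the same $\yv$, rather than by varying $\yv$ freely. The other subtlety is that pinning the value to exactly $1$ (not merely $\ge 1$) genuinely needs the upper bound, which is why both directions must be supplied.
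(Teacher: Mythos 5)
Your proof is correct and follows essentially the same route as the paper: both exploit the nontrivial null space of $A$, the collision of antipodal unit vectors $\pm\vv\in{\rm Ker}(A)\cap\Bc_2^n(1)$ at the measurement $\mathbf{0}$, and the all-zero decoder for the matching upper bound. Your write-up is in fact slightly more explicit than the paper's, which compresses the triangle-inequality lower bound and the optimality of $\hat{\xv}(\mathbf{0})=\mathbf{0}$ into a "straightforward to confirm" remark.
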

\begin{proof}
Consider measurement matrix $A\in\mathds{R}^{d\times n}$ with $d<n$ and some  reconstruction algorithm $\hat{\xv}: \mathds{R}^{d}\to \mathds{R}^n$.  Let ${\rm Ker}(A) \triangleq \{ \xv \ : \  A\xv =0\}$. Since $d<n$, ${\rm Ker}(A)- \{{\bf 0}\}\neq \emptyset$.  All signals in ${\rm Ker}(A) \cap \Bc_2^n(1)$ are mapped to the all-zero measurement vector, and hence the recovery algorithm maps all of them to some $\hat{\xv}\in\mathds{R}^n$. It is straightforward to confirm that
\[
\inf_{\hat{\xv}} \sup_{\xv \in {\rm Ker}(A) \cap \Bc_2^n(1)} \|\xv- \hat{\xv} \|_2 = 1.
\]
In fact the best reconstruction for $\xv \in {\rm Ker}(A) \cap \Bc_2^n(1)$ is $\hat{\xv} ({\bf 0})= {\bf 0}$, which leads to $\sup_{\xv \in {\rm Ker}(A) \cap \Bc_2^n(1)}\|\xv- \hat{\xv} \|_2 = 1$.
\end{proof}

\vspace{.2cm}

%

The answer to the second question is not as trivial and requires a more formal definition of ``success'' for the recovery algorithms. To address this question, we start with two formal definitions of applicability of CS to a compact set $\Qc$. We then derive a connection between these two notions and the rate-distortion performance of a code.
 \begin{definition}\label{def:unifappl} Compressed sensing is said to be {\em strongly applicable} to  compact set $\Qc \subset \mathds{R}^n$ with $d$ measurements, if, for any $\e>0$, there exists a $d \times n$ matrix $A_{\e}$ and a recovery algorithm $\Ac_{\e}$, 
 \[
 \Ac_{\e}: \mathds{R}^d\to \mathds{R}^n,
 \] 
 such that $\|\Ac_{\e}(A_{\e}\xv)-\xv\|_2\leq \e$, for all  $\xv\in\Qc$.  
\end{definition}

Another popular notion of applicability of CS to $\Qc$ is what we call {\em weak applicability} defined as follows.

\begin{definition} \label{def:indapplic}Compressed sensing is said to be {\em weakly applicable} to  compact set $\Qc \subset \mathds{R}^n$ with $d$ measurements, if, for any $\xv_o \in \Qc$ and and $\e>0$ there exists a $d \times n$ matrix $A$ such that for any $ \xv \in \Qc$ with $\|\xv_o-\xv\|_2>\e$,
\[
A \xv_o \neq A\xv.
\]
 \end{definition}
(Refer to \cite{TrGi07} and the reference therein for  some examples of weak CS-applicability.) Note the subtle difference between the two definitions. In Definition \ref{def:indapplic}, $A$ may depend on the the vector $\xv_o$. However, Definition \ref{def:unifappl} requires the existence of at least one $A$, that works on all $\xv_o \in \mathcal{Q}$. Therefore, as the name suggests strong CS-applicability is a stronger notion and hence, intuitively speaking, it requires more measurements. 

\begin{remark}
 The definition of weak CS-applicability might suggest that  it is impractical, as the measurement matrix can depend on $\xv_o$. However, as we will show later in the paper, for any $\xv_o \in \Qc$, random matrices satisfy the conditions required for weak CS-applicability with high probability.  
\end{remark}

\begin{remark}
CS-applicability is concerned with the recovery of $\xv_o$ from undersampled set of linear measurements $\yv_o = A \xv_o$. However, in many applications we require additional constraints on the system. Most notably, the system is usually required  to be robust to small variations on the measurements. In fact, since noise is an inevitable part of all measurement systems, in many systems $\yv_o = A \xv_o + \zv$, where $\zv$ denotes the measurement noise. Therefore, it is also important to ensure that the recovery algorithm is robust to the noise. We discuss this in more detail in Sections \ref{sec:measnoisei} and \ref{ssec:noise:ind}. 
\end{remark}

Our next step is to establish a connection between the rate-distortion performance of a code $\mathcal{C}_r$ on $\Qc$ and the number of measurements that makes CS-applicable to $\Qc$. The following definition plays a major role in this connection.

 \begin{definition}
Consider  compact set $\Qc \subset \mathds{R}^n$, and a family of fixed-rate compression codes, $\{(\Ec_r,\Dc_r): r>0\}$, with rate-distortion function $r(\d)$. Define the  $\a$-dimension   of  a family of codes  as\footnote{Note that distortion $\d$ here is defined in terms of the $\ell_2$-norm. In many papers, distortion is defined as the square of $\ell_2$-norm, a.k.a., square error. In those cases, $\alpha$-dimension shall be defined as $ \limsup_{\d \to 0} \frac{2r(\d)}{\log_2({1 \over \d})}$.}
 \begin{equation}
\alpha \triangleq \limsup_{\d \to 0} \frac{r(\d)}{\log_2({1 \over \d})}.
 \end{equation}
\end{definition}

In Section \ref{sec:related-work} we discuss the connection between $\a$-dimension and other well-known concepts in information theory and functional analysis such as Minkowski dimension and R{\'e}nyi entropy. 

\begin{example}\label{example:3}
Consider the family of compression algorithms presented in Example \ref{example:1} for $B_2^n(1)$. It is straightforward to confirm that the $\alpha$-dimension of this code is less than or equal to $n$. Furthermore, employing \eqref{eq:lowerrd} and lower bounds on $H_{\epsilon}$, one can show that the $\alpha$-dimension is also lower bounded with $n$ and hence is exactly equal to $n$.
\end{example}

\begin{example}\label{example:4}
Consider the family of compression algorithms presented in Example \ref{example:2} for $\Gamma_k^n \cap B_2^n(1)$. Similar to Example \ref{example:3} one can show that there exists a family of compression codes for $\Gamma_k^n \cap B_2^n(1)$ with   $\alpha$-dimension   equal to $k$.
\end{example}

It turns out that there is a close connection between the $\alpha$-dimension of a family of compression algorithms and CS-applicability.  Given $\a>0$, let $\Sc_{\a}^n$ denote the set of all  subsets of $\Bc_2^n(1)$ for which there exists a family of compression algorithms with $\a$-dimension upper-bounded by $\alpha$.  For each $\Qc \in \Sc_{\a}^n$,  define $d^w_{\min} (\Qc)$ ($d^s_{\min} (\Qc)$) as the minimum number of measurements required to make  CS  weakly (strongly) applicable to $\Qc$. The following theorems provide lower bounds on $d^w_{\min} (\Qc)$ and  $d^s_{\min} (\Qc)$. 
 
 \begin{proposition} \label{thm:lowerboundi}
 If CS is weakly applicable to any element of $\Sc_{\a}^n$ with $d$ measurements, then $d \geq \lfloor \alpha \rfloor$. In other words, 
 \[
 \sup_{\Qc \in \Sc_{\a}^n} d^w_{\min} (\Qc) \geq \lfloor \alpha \rfloor.
 \]
 \end{proposition}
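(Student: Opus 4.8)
The plan is to prove the lower bound by exhibiting a single hard instance inside $\Sc_\a^n$, since the claim only concerns the supremum $\sup_{\Qc \in \Sc_\a^n} d^w_{\min}(\Qc)$. Set $m \triangleq \lfloor \a \rfloor$, let $V \triangleq \{\xv \in \mathds{R}^n : x_{m+1} = \cdots = x_n = 0\}$ be the $m$-dimensional coordinate subspace, and take $\Qc \triangleq \Bc_2^n(1) \cap V$, an $m$-dimensional unit ball embedded in $\mathds{R}^n$. I would show both that $\Qc \in \Sc_\a^n$ and that weak CS-applicability to $\Qc$ forces $d \geq m$; combining these yields $\sup_{\Qc \in \Sc_\a^n} d^w_{\min}(\Qc) \geq m = \lfloor\a\rfloor$.

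To check $\Qc \in \Sc_\a^n$, I would invoke Example \ref{example:1}: there is a family of compression algorithms for $\Bc_2^m(1)$ achieving $r(\d) \leq \frac{1}{2} m \log_2 m + m \log_2(1/\d) + cm$. Applying this code on the $m$ nonzero coordinates and ignoring the $n-m$ zero coordinates produces a code for $\Qc$ with the same rate-distortion function, whose $\a$-dimension is $\limsup_{\d \to 0} r(\d)/\log_2(1/\d) = m$ (this is precisely the computation underlying Example \ref{example:3}). Since $m = \lfloor \a \rfloor \le \a$, the $\a$-dimension of this family is upper-bounded by $\a$, so $\Qc \in \Sc_\a^n$.

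For the lower bound on the number of measurements, I would argue by contradiction with a dimension-counting (kernel) argument anchored at the point $\xv_o = \mathbf{0} \in \Qc$. Suppose $d < m$ and let $A$ be an arbitrary $d \times n$ matrix. Since $\dim {\rm Ker}(A) \geq n - d$ and $\dim V = m$, we get $\dim({\rm Ker}(A) \cap V) \geq m - d \geq 1$, so there is a nonzero $\vv \in {\rm Ker}(A) \cap V$, which after scaling we may take with $\|\vv\|_2 = 1$. Then $\vv \in \Qc$, $A\vv = \mathbf{0} = A\xv_o$, yet $\|\vv - \xv_o\|_2 = 1$. Choosing $\e = 1/2$ in Definition \ref{def:indapplic}, this shows that no $d \times n$ matrix separates $\xv_o = \mathbf{0}$ from all points of $\Qc$ at distance $> \e$, so weak applicability fails whenever $d < m$. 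Hence $d^w_{\min}(\Qc) \geq m$, which completes the argument.

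The step requiring the most care is the interaction with the definition of weak applicability, where $A$ is permitted to depend on $\xv_o$. Because this freedom makes weak applicability easier to satisfy, a valid lower bound must hold uniformly over all admissible $A$ for a fixed base point; the kernel argument is well-suited to this, since for $\xv_o = \mathbf{0}$ it simultaneously rules out every $d \times n$ matrix with $d < m$. The remaining bookkeeping (confirming $\Qc \subset \Bc_2^n(1)$ and that the embedded code inherits the rate-distortion function of Example \ref{example:1}) is routine.
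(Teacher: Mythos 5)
Your proposal is correct and follows essentially the same route as the paper: the same hard instance $\Theta_{\lfloor\alpha\rfloor}=\Bc_2^n(1)\cap V$ (a unit ball in a $\lfloor\alpha\rfloor$-dimensional coordinate subspace) together with a kernel dimension-counting argument showing that any $A$ with $d<\lfloor\alpha\rfloor$ collapses distinct points of $\Qc$. Your write-up is in fact slightly more careful than the paper's, since you anchor at $\xv_o=\mathbf{0}$, scale the kernel vector to unit norm, and explicitly verify the $\e$-separation required by Definition~\ref{def:indapplic}, which the paper's terser argument leaves implicit.
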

 \begin{proof}
 Set $k \triangleq \lfloor \alpha \rfloor$ and define $\Theta_k$ as the  set of  vectors  in $\Bc_2^n(1)$, whose $n-k$ last coordinates are equal to zero, \ie
 \[
 \Theta_k \triangleq \{ \xv\in \Bc_2^n(1):  \ x_{k+1} = x_{k+2} = \ldots = x_{n}=0 \}.
 \]   
As shown in Example \ref{example:4}, $\Theta_k \in \Sc_{\a}^n$. Let $\xv_o \in \Theta_k$. Consider any measurement matrix $A \in \mathbb{R}^{d \times n}$ with $d < k$. Clearly, there are infinitely many other signals $ \xv \in \Theta_k$ that satisfy $A\xv = A\xv_o$. Hence, CS is not  weakly applicable to  $\Theta_k $ with $d < k$.  
 \end{proof}
\vspace{.1cm}
Our next theorem that rephrases Theorem 3 of \cite{DoEl03} derives a similar bound for the strong applicability. 
\vspace{.1cm}

 \begin{proposition} \label{thm:lowerboundu}
If CS is strongly applicable to any element of $\Sc_{\a}^n$ with $d$ measurements, then $d \geq \lfloor 2 \alpha \rfloor$. In other words, 
 \[
 \sup_{\Qc \in \Sc_{\a}^n} d^s_{\min} (\Qc) \geq \lfloor 2 \alpha \rfloor.
 \]
 \end{proposition}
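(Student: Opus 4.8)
The plan is to prove the contrapositive: for every $d \le \lfloor 2\alpha\rfloor -1$ I will exhibit a single set $\Qc \in \Sc_\a^n$ to which CS is \emph{not} strongly applicable with $d$ measurements, which gives $\sup_{\Qc}d^s_{\min}(\Qc)\ge \lfloor 2\alpha\rfloor$. The argument parallels Proposition \ref{thm:lowerboundi}, but it must exploit that strong applicability is a worst-case notion in which a \emph{single} $A_\e$ has to serve all $\xv_o\in\Qc$. The backbone is a two-point reduction: I first show that strong applicability with $d$ measurements fails as soon as there is a constant $\epsilon_0>0$, depending only on $\Qc$, such that \emph{every} $A\in\mathds{R}^{d\times n}$ admits a pair $\xv,\yv\in\Qc$ with $A\xv=A\yv$ and $\|\xv-\yv\|_2\ge 2\epsilon_0$. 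Indeed, any recovery map $\Ac_\e$ then sends the common measurement $A\xv=A\yv$ to one point $\hat{\xv}$, so $2\epsilon_0\le\|\xv-\yv\|_2\le\|\xv-\hat{\xv}\|_2+\|\hat{\xv}-\yv\|_2$ forces error $\ge\epsilon_0$ on one of the two signals; taking $\e=\epsilon_0/2$ in Definition \ref{def:unifappl} yields a contradiction.

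Next I would translate the pair condition into a statement about ${\rm Ker}(A)$: since $A\xv=A\yv$ iff $\xv-\yv\in{\rm Ker}(A)$, it suffices to build $\Qc$ whose difference set $\Qc-\Qc$ \emph{robustly} contains an $m$-dimensional coordinate subspace $U$, $m=\lfloor 2\alpha\rfloor$, in the sense that $\Qc-\Qc\supseteq B_U(\mathbf{0},\eta)$ for some $\eta>0$, every point of $B_U(\mathbf{0},\eta)$ being an actual difference of two elements of $\Qc$. For $d<m$ we have $\dim{\rm Ker}(A)+\dim U=(n-d)+m>n$, so ${\rm Ker}(A)\cap U$ is nontrivial; scaling a vector in this intersection to norm $\eta/2$ and using the robust containment produces the separated pair, uniformly in $A$, with $\epsilon_0=\eta/4$. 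This is exactly the spark/uniqueness mechanism of \cite{DoEl03} recast geometrically: the factor two comes from $\Qc-\Qc$ having roughly twice the dimension of $\Qc$.

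For the construction, write $m=2k+s$ with $s\in\{0,1\}$ and note $\alpha\ge m/2=k+s/2$. When $s=0$ I take $\Qc=\Gamma_k^n\cap \Bc_2^n(1)$: differences of $k$-sparse vectors with disjoint supports fill a $2k$-dimensional coordinate ball about the origin, and by Example \ref{example:4} the $\alpha$-dimension is $k\le\alpha$, so $\Qc\in\Sc_\a^n$ and $d\ge 2k$. When $s=1$ I append one extra coordinate carrying a compact set $T\subset\mathds{R}$ of upper box (Minkowski) dimension $\tfrac12\le\alpha-k$ whose difference set $T-T$ contains a nondegenerate interval; for $\Qc=\big(\Gamma_k^{n-1}\cap \Bc_2^{n-1}(\rho)\big)\times T$, rescaled into $\Bc_2^n(1)$, the difference set then contains a $(2k+1)$-dimensional ball about the origin, while a covering-number/Kolmogorov-entropy code as in the proofs of Examples \ref{example:1}–\ref{example:2} achieves $\alpha$-dimension $k+\tfrac12\le\alpha$, placing $\Qc$ in $\Sc_\a^n$.

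The main obstacle is precisely this odd case, i.e.\ upgrading the easy $2\lfloor\alpha\rfloor$ from integer sparsity to the sharp $\lfloor 2\alpha\rfloor$. This hinges on producing a set $T$ of box dimension as small as $\tfrac12$ whose difference set nonetheless contains an interval; the value $\tfrac12$ is critical, since below it $T-T$ has Lebesgue measure zero and cannot contain an interval. For $\alpha>k+\tfrac12$ a thick Cantor set of dimension in $(\tfrac12,\alpha-k]$ works (its difference set contains an interval by a Newhouse thickness argument), while the boundary $\alpha=k+\tfrac12$ needs a dedicated dimension-$\tfrac12$ construction. Checking simultaneously that such a $T$ keeps the $\alpha$-dimension within budget \emph{and} yields uniformly separated difference pairs is where the delicacy lies; everything else reduces to the kernel dimension count and the two-point argument above.
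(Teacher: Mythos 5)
Your reduction and your even case are the paper's proof in different clothing: for $\lfloor 2\alpha\rfloor=2k$ with $k=\lfloor\alpha\rfloor$, the paper also takes $\Delta_k^n\triangleq\Gamma_k^n\cap\Bc_2^n(1)$, uses linear dependence of any $d+1\le 2k$ columns of $A$ (your kernel-dimension count, restated) to produce two disjointly supported $k$-sparse vectors with equal measurements, rescales them into the unit ball, and runs the same two-point triangle-inequality argument to defeat every recovery map with a uniform $\epsilon_0$.

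The odd case is where your proposal stops being a proof, but you have put your finger on a genuine defect of the paper itself. When $\lfloor 2\alpha\rfloor=2k+1$, the paper's splitting produces supports of sizes $k+1$ and $k$, and a $(k+1)$-sparse vector is not in $\Delta_k^n$; worse, no argument over $\Delta_k^n$ can work, because a Gaussian $2k\times n$ matrix almost surely has every $2k$-column submatrix of full rank, hence is injective on $\Delta_k^n$, and injectivity of $A$ on a compact set already gives strong applicability with exact recovery (minimize $\|A\cv-\yv\|_2$ over $\cv\in\Qc$), so $d_{\min}^s(\Delta_k^n)=2k$ exactly. The paper's sentence that the odd case ``is very similar'' is therefore not justifiable with the sparse set, and your instinct that a richer set is required is correct. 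However, your proposed fix has real holes: (i) the Newhouse gap lemma needs thickness at least $1$, which forces box dimension at least $\log 2/\log 3\approx 0.63$, so ``thick Cantor sets'' do not cover $\alpha-k\in(1/2,\log 2/\log 3)$; reaching dimension near $1/2$ with $T-T$ containing an interval requires a different mechanism (e.g., digit sets built on perfect difference sets); (ii) you leave the boundary case $\alpha=k+\tfrac12$ open, and it is genuinely delicate whether the claimed bound is attainable there at all; (iii) you need $T-T$ to contain an interval \emph{around the origin}, i.e., $T\cap(T+x)\neq\emptyset$ for all small $|x|$ --- an interval away from $0$ does not allow you to rescale an arbitrary kernel direction (one with a tiny last coordinate) into $\Qc-\Qc$ while keeping the sparse block inside the ball. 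So, as written, your argument rigorously establishes only $d\ge 2\lfloor\alpha\rfloor$ --- which is in fact all that the paper's own proof establishes --- and the stated bound $\lfloor 2\alpha\rfloor$ remains unproven in the odd case by both you and the paper.
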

 \begin{proof}
 Set $k \triangleq \lfloor \alpha \rfloor$ and define $\Delta^n_k = \Gamma^n_k \cap B_2(1)$, where  $\Gamma^n_k $ is the set of k-sparse vectors as defined in \eqref{eq:ksparse}.
As shown in Example \ref{example:4}, $\Delta_k^n \in \Sc_{\a}^n$.  Consider $d=\lfloor 2\a\rfloor -1$ and measurement matrix $A=[{\bf a}_1,\ldots,{\bf a}_n]\in\mathds{R}^{d\times n}$, with ${\bf a}_i$ denoting the $i^{\rm th}$ column of $A$.  We prove that for any recovery algorithm and any $\epsilon>0$, there exists a signal with reconstruction error greater than $\epsilon$. Since ${\rm rank}(A)\leq d$, corresponding to any $d+1$ columns of $A$, $i_1,\ldots,i_{d+1}\in\{1,\ldots,n\}$, there exists $(c_{i_1},\ldots,c_{i_{d+1}})\in\mathds{R}^{d+1}$ such that $\sum_{j=1}^{d+1} c_{i_j}{\bf a}_{i_j}=0$. Assume that $d+1$ is an even number. We construct two vectors  $\xv_1$ and $\xv_2$ both in $\mathds{R}^n$ such that $A\xv_1=A\xv_2$ and $\xv_1,\xv_2\in \Delta^n_k$. For $k\in\{i_1,\ldots,i_{(d+1)/2}\}$, let $x_{1,k}=c_k$ and set the rest of entries in  $\xv_1$ to zero. Similarly,  for $k\in\{i_{1+(d+1)/2},\ldots,i_{d+1}\}$, let $x_{2,k}=-c_k$ and set the rest of entries in  $\xv_2$ to zero. By our construction, $\xv_1$ and $\xv_2$, while having no intersection between their support sets,  are not distinguishable from their measurements. It is clear that for any $\beta>0$, $\beta \xv_1$ and $\beta \xv_2$ are not distinguishable from their measurements as well, \ie $A(\b \xv_1)=A(\b \xv_2)$. Let $\|\xv_1\|_2 \geq \| \xv_2\|_2$, and set $\beta$ such that $\| \beta \xv_1\|_2 =1$. Let $\xvh$ denote the reconstruction vector assigned to $A(\b \xv_1)$. By the triangle inequality,  $\|\beta \xv_1-\xvh\|_2+\|\beta \xv_1-\xvh\|_2\geq \|\beta \xv_1-\beta \xv_2\|_2$. On the other hand,  $\|\beta \xv_1 - \beta \xv_2 \|_2 = \| \beta \xv_1\|_2+ \| \beta \xv_2\|_2 \geq 1$. Therefore, either $\|\beta \xv_1-\xvh\|_2$ or $\|\beta \xv_2-\xvh\|_2$ is greater than $1/2$. For the case where $d+1$ is an odd number, the analysis is very similar.
 Hence, overall, we require $d\geq \lfloor 2\a\rfloor$ for strong CS-applicability. 
  \end{proof}

 Propositions \ref{thm:lowerboundi} and \ref{thm:lowerboundu}  provide lower bounds for the number of measurements that are required for CS-recovery method. However, it is not clear if these number of measurements are sufficient. In the rest of the paper, we show that considering random measurement matrices (nonadaptive measurements) and employing CSP, result in an accurate recovery algorithm with the number of measurements that are essentially the same as the ones proposed by Propositions \ref{thm:lowerboundi} and \ref{thm:lowerboundu}. 

\begin{remark}
As we will prove later, and as the lower bounds provided in  Propositions  \ref{thm:lowerboundi} and \ref{thm:lowerboundu} suggest,  the number of measurements that are required for strong CS-applicability is essentially two times that of weak CS-applicability. 
\end{remark}

Next section summarizes our results on the performance of CSP in recovering individual sequences. This section corresponds to weak CS-applicability. Section \ref{sec:unifCS} characterizes the performance of CSP for all vectors in $\mathcal{Q}$, which corresponds to strong CS-applicability concept we have introduced. We show that in each framework CSP is successful as long as the number of measurements is higher than the lower bounds  derived in Propositions \ref{thm:lowerboundi} and \ref{thm:lowerboundu}. 

\vspace{1cm}

\section{CSP recovery of individual sequences}\label{sec:contributions}

Consider the problem of recovering  signal $\xv_o\in\Qc \subset \mathds{R}^n$, from  $d <n $ linear measurements $\yv_o=A\xv_o+\zv$, where the entries of $A$ are i.i.d.~$\Nc(0,1)$, and $\zv\in\mathds{R}^d$ represents the measurement noise in the system. Furthermore, assume that there exists a family of  compression algorithms, $\{(\Ec_r,\Dc_r): r>0\}$, for the signals of $\Qc$ with rate-distortion  $r(\d)$. We employ the CSP algorithm described in \eqref{eq:CSP} to recover $\xv_o$ from $\yv_o$. The focus of this section is on the weak CS-applicability framework. Therefore, $\xv_o$ is considered to be  a fixed (but arbitrary) element of $\Qc$. 

\subsection{Noiseless measurements}

 Our first result is concerned with the performance of the CSP algorithm, when there is no noise in the system, \ie $\zv={\bf 0}$.

\begin{theorem}\label{thm:CSP}
Consider compression code $(\Ec,\Dc)$ for set $\Qc$ operating at rate $r$ and distortion $\d$. Let $A\in\mathds{R}^{d\times n}$, where $A_{i,j}$ are i.i.d.~$\Nc(0,1)$. For $\xv_o\in\Qc$, let $\xvh_o$ denote the reconstruction of $\xv_o$ from $\yv_o=A\xv_o$,  by the CSP algorithm employing code $(\Ec,\Dc)$. Then,
\[
\| \xvh_o- \xv_o\|_2 \leq \d\sqrt{1+\tau_1\over 1-\tau_2},
\]
with probability at least 
\[
1-2^{r}{\rm e} ^{\frac{d}{2}(\tau_2 + \log(1- \tau_2))}- {\rm e} ^{-\frac{d}{2}(\tau_1 - \log(1+ \tau_1))},
\]
where $\tau_1>0$ and $\tau_2\in(0,1)$ are arbitrary.\\
\end{theorem}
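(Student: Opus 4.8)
The plan is to play the CSP estimate off against a single, carefully chosen competitor: the reconstruction $\xvt_o\triangleq\Dc(\Ec(\xv_o))$ that the code itself produces on $\xv_o$. By the definition of distortion in \eqref{eq:D}, this point satisfies $\|\xv_o-\xvt_o\|_2\le\d$, and it lies in the codebook $\Cc$, so it is feasible for the minimization \eqref{eq:CSP}. Since $\xvh_o$ is the minimizer and $\yv_o=A\xv_o$, I would first record the purely deterministic inequality
\[
\|A(\xv_o-\xvh_o)\|_2=\|\yv_o-A\xvh_o\|_2\le\|\yv_o-A\xvt_o\|_2=\|A(\xv_o-\xvt_o)\|_2 .
\]
Everything then reduces to transferring this comparison from the measurement domain back to the signal domain, and that is exactly where the randomness of $A$ enters.

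The probabilistic engine is the fact that a Gaussian matrix nearly preserves the Euclidean norm of any \emph{fixed} vector. For fixed $\vv\in\mathds{R}^n$ the coordinates of $A\vv$ are i.i.d.\ $\Nc(0,\|\vv\|_2^2)$, so $\|A\vv\|_2^2/\|\vv\|_2^2\sim\chi^2_d$. Applying a Chernoff bound to the $\chi^2_d$ moment generating function $(1-2t)^{-d/2}$ and optimizing over the Chernoff parameter yields the two sharp one-sided tails
\[
\P\!\left(\|A\vv\|_2^2\ge(1+\tau_1)d\|\vv\|_2^2\right)\le\ex^{-\frac{d}{2}(\tau_1-\log(1+\tau_1))},
\]
\[
\P\!\left(\|A\vv\|_2^2\le(1-\tau_2)d\|\vv\|_2^2\right)\le\ex^{\frac{d}{2}(\tau_2+\log(1-\tau_2))},
\]
for $\tau_1>0$ and $\tau_2\in(0,1)$. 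I would use the upper tail for the \emph{single} deterministic vector $\xv_o-\xvt_o$, which costs nothing extra and accounts for the lone last term in the failure probability. The lower tail I need at the \emph{random} vector $\xv_o-\xvh_o$; since $\xvh_o$ is one of the at most $2^r$ codewords, I take a union bound over all $\cv\in\Cc$ of the bad events $\{\|A(\xv_o-\cv)\|_2^2<(1-\tau_2)d\|\xv_o-\cv\|_2^2\}$, which produces the $2^r$ factor in front of the first term.

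On the complement of these failure events I would square the deterministic inequality and sandwich: the lower tail (at $\cv=\xvh_o$) on the left and the upper tail (at $\xv_o-\xvt_o$) on the right give
\[
(1-\tau_2)d\,\|\xv_o-\xvh_o\|_2^2\le\|A(\xv_o-\xvh_o)\|_2^2\le\|A(\xv_o-\xvt_o)\|_2^2\le(1+\tau_1)d\,\|\xv_o-\xvt_o\|_2^2 .
\]
Cancelling $d$, using $\|\xv_o-\xvt_o\|_2\le\d$, and taking square roots gives the claimed $\|\xvh_o-\xv_o\|_2\le\d\sqrt{(1+\tau_1)/(1-\tau_2)}$, while summing the two failure probabilities reproduces exactly the stated success probability.

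I do not expect a serious obstacle once the competitor $\xvt_o$ is identified; the argument is short. The two points that require care are the asymmetry between the tails—the upper one is evaluated at a deterministic vector and is free, whereas the lower one must survive a union bound over the whole codebook because the CSP output is data-dependent—and carrying out the Chernoff optimization precisely enough to reproduce the exact exponents $\tau_1-\log(1+\tau_1)$ and $\tau_2+\log(1-\tau_2)$ rather than their looser quadratic approximations.
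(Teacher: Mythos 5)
Your proposal is correct and takes essentially the same route as the paper's own proof: compare the CSP output against the competitor $\xvt_o=\Dc(\Ec(\xv_o))$, apply the $\chi^2$ upper tail to the single fixed vector $\xv_o-\xvt_o$, apply the lower tail with a union bound over the at most $2^r$ codewords, and sandwich to transfer the comparison back to the signal domain. The paper's argument in Section \ref{proof:indcsapp} is exactly this, with the two tail bounds quoted as Lemma \ref{lemma:chi} rather than rederived via Chernoff.
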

\noindent See Section \ref{proof:indcsapp} for the proof of this theorem.  

This theorem characterizes the trade-offs between the number of measurements, reconstruction error, and the probability of correct recovery. As is clear from the theorem, for a fixed $d$, as $\d \rightarrow 0$ the reconstruction error decreases to zero but the success probability of the algorithm decreases to zero as well. In fact, if we want the reconstruction error to converge to zero and at the same time the success probability remain close to $1$ (in high dimensional settings), the number of measurements should be larger than a certain number.  The next corollary characterize this number. \\

\begin{corollary}\label{corollary1}
Consider the setup of Theorem \ref{thm:CSP}, and let the number of measurements $d = \eta r/\log_2 (1/\ex \d)$, where $\eta >1$ is a parameter. Given $\e>0$, let $\d$ be such that
\[
{\eta \over \log{1\over  \ex \d}}<\e.
\]
Then, for $\xv_o\in\Qc$,
\[
\P(\|\hat{\xv}_o - \xv_o \|_2 \geq \theta {\d}^{1-(1+\e)/\eta}) \leq {\rm e}^{-0.8 d} + {\rm e}^{-0.3\e r},
\]
where $\theta=2\ex^{-(1+\e)/\eta}$.
\end{corollary}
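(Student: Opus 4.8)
The plan is to specialize Theorem~\ref{thm:CSP} by choosing the free parameters $\tau_1$ and $\tau_2$ as explicit functions of the problem data, so that the generic error bound $\d\sqrt{(1+\tau_1)/(1-\tau_2)}$ collapses to the target $\theta\d^{1-(1+\e)/\eta}$ while the two failure terms simultaneously reduce to $\ex^{-0.8d}$ and $\ex^{-0.3\e r}$. Throughout I would write $\ell\triangleq\log\frac{1}{\ex\d}$ (natural log, per the paper's convention), so that the defining relation $d=\eta r/\log_2\frac{1}{\ex\d}$ becomes $d\ell=\eta r\log 2$, i.e. $d\ell/\eta=r\log2$, and the hypothesis $\eta/\ell<\e$ reads $\ell>\eta/\e$, equivalently $d<\e r\log2$. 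Note also that $\ell>0$ forces $\d<1/\ex$, which I will use to keep $\tau_2$ admissible.

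For the first parameter I would take $\tau_1$ to be a fixed constant, e.g. $\tau_1=3$: then $\sqrt{1+\tau_1}=2$, and since $\tfrac12(\tau_1-\log(1+\tau_1))=\tfrac12(3-\log4)>0.8$, the second failure term $\ex^{-\frac{d}{2}(\tau_1-\log(1+\tau_1))}$ is at most $\ex^{-0.8d}$. For the second parameter I would set $\tau_2$ by $1-\tau_2=(\ex\d)^{2(1+\e)/\eta}$; since $\d<1/\ex$ gives $\ex\d<1$, the right-hand side lies in $(0,1)$, so $\tau_2\in(0,1)$ is admissible. With these choices the error bound becomes $\d\cdot 2(\ex\d)^{-(1+\e)/\eta}=2\ex^{-(1+\e)/\eta}\d^{\,1-(1+\e)/\eta}=\theta\d^{\,1-(1+\e)/\eta}$, exactly the threshold in the statement.

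It remains to control the first failure term $2^r\ex^{\frac{d}{2}(\tau_2+\log(1-\tau_2))}$. Writing $2^r=\ex^{r\log2}$ and substituting $\log(1-\tau_2)=-\frac{2(1+\e)}{\eta}\ell$ together with $\tau_2<1$, the exponent is bounded above by $r\log2+\frac{d}{2}\bigl(1-\frac{2(1+\e)\ell}{\eta}\bigr)$. Using $d\ell/\eta=r\log2$ this equals $r\log2+\frac{d}{2}-(1+\e)r\log2=\frac{d}{2}-\e r\log2$, and finally the hypothesis $d<\e r\log2$ gives $\frac{d}{2}-\e r\log2<-\frac{\e r\log2}{2}<-0.3\,\e r$ (since $\log2>0.6$). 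Hence the first failure term is at most $\ex^{-0.3\e r}$, and a union bound over the two failure events yields the claimed probability $\ex^{-0.8d}+\ex^{-0.3\e r}$.

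The calculations above are all elementary; the only real difficulty is bookkeeping. The main obstacle is getting the logarithm base conversions right---keeping the $\log_2$ in the definition of $d$ straight against the natural logs appearing in Theorem~\ref{thm:CSP}---and verifying the two numerical slack inequalities ($3-\log4\ge1.6$ and $\log2\ge0.6$), which are precisely what force the constants $0.8$ and $0.3$ to appear. A secondary point is that $\tau_2$ must stay in $(0,1)$, which is guaranteed exactly by the regime $\d<1/\ex$ implied by the hypothesis.
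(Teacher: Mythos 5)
Your proposal is correct and follows essentially the same route as the paper's proof: the identical parameter choices $\tau_1=3$ and $1-\tau_2=(\ex\d)^{2(1+\e)/\eta}$, the same numerical checks ($\tfrac12(3-\log 4)>0.8$ and $\tfrac{\log 2}{2}>0.3$), and the same union bound. If anything, your bookkeeping via $\ell=\log\tfrac{1}{\ex\d}$ and the reformulation $d<\e r\log 2$ is slightly cleaner than the paper's step $(a)$, which invokes the bound $\tfrac{\eta\log 2}{\log(1/\ex\d)}<\tfrac{\e}{2}$ when only $<\e\log 2$ is available (and suffices).
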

\begin{proof}
In Theorem \ref{thm:CSP}, let $\tau_1=3$, $\tau_2=1-(\ex \d)^{2(1+\e)/\eta}$. For $\tau_1=3$, $0.5(\tau_1-\log(1+\tau_1))>0.8$. For $\tau_2=1-(\ex \d)^{2(1+\e)/\eta}$,
\begin{align}
&r\log 2+{d\over 2}(\tau_2+\log(1-\tau_2))\nonumber\\
&=r\log 2+{d\over 2}\Big(1-(\ex \d)^{2(1+\e)/\eta}+{2(1+\e) \over \eta}\log(\ex \d)\Big)\nonumber\\
& \leq r(\log 2 + \frac{\eta}{2 \log_2 (1/\ex \d)}- (1+ \epsilon) \log 2) \nonumber \\
&\overset{(a)}{\leq} r\log 2(1 +{\e \over 2}-(1+\e))\nonumber\\
&\leq -0.3 \e r,\label{eq:bound-p-details}
\end{align}
where $(a)$ is due to the fact that ${\eta \over \log_2{1\over  \ex \d}}= {\eta \log 2 \over \log{1\over  \ex \d}} <{\e \over 2}$.
Finally,
\begin{align}
\d\sqrt{1+\tau_1 \over 1-\tau_2} &=\d\sqrt{4\over (\ex \d)^{2(1+\e)/\eta}}\nonumber\\
&=\theta {\d}^{1-(1+\e)/\eta},
\end{align}
where $\theta=2\ex^{-(1+\e)/\eta}$.
\end{proof}

\vspace{.2cm}

\begin{remark}\label{remark:noiselessi1}
Let $\epsilon$ be a small positive number and set $d = \eta r/ \log_2 (1/\ex \d)$, with $\eta > 1+ \epsilon$. Then Corollary \ref{corollary1} states that as $\d \rightarrow 0$ the reconstruction error $\theta {\d}^{1-(1+\epsilon)/\eta} \rightarrow 0$, while the number of measurements converge to $\eta \alpha$, where $\alpha$ is the $\alpha$-dimension of the compression algorithms. In other words, as long as $d> \alpha$, CSP recovers $x_o$ accurately. Therefore, according to Definition \ref{def:indapplic}, CS is weakly applicable to $\Qc$ with $d$  measurements as long as $d > \alpha$. 
\end{remark}

\vspace{.1cm}

\begin{remark}
According to Proposition \ref{thm:lowerboundi}, any recovery algorithm based on the compression method requires at least $\lfloor \alpha \rfloor$ measurements for accurate recovery. Therefore, CSP  achieves the fundamental limit of signal recovery from compression algorithms.   
\end{remark}

\subsection{Noisy measurements}\label{sec:measnoisei}

So far we have considered the ideal setting, where there is no noise in the system. However, noise is an inevitable part of any sampling system and the robustness to measurement noise is a vital requirement for any recovery method. In this section we prove that CSP is robust to noise. Toward this goal, we consider two different types of measurement noise, stochastic and deterministic, and analyze the performance of CSP. Deterministic noise is considered as a good model for signal/measurement dependent noises such as quantization, while the stochastic noise models other noises such as amplifier noise in analog to digital converters. 

\subsubsection{Deterministic noise}\label{sec:detnoiseind}

 Consider the problem of recovering a vector $\xv_o$ from a noisy, undersampled set of linear measurements $\yv_o = A \xv_o + \zv$, where $A_{i,j}$ are i.i.d.~$\Nc(0,1)$ and $\zv$ denotes the noise. Let   $\|\zv \|_2 \leq \zeta$. Here, except for an upper bound on the $\ell_2$-norm, we do not make any other assumption on the noise. In particular the noise can be dependent on the measurement vector $\yv_o$.  Again we recover $\xv_o$ from $\yv_o$ by employing CSP algorithm described in \eqref{eq:CSP}. The following theorem provides a performance guarantee for the CSP algorithm: 

\begin{theorem}\label{thm:CSPdetnoiseind}
Consider compression code $(\Ec,\Dc)$  operating at rate $r$ and distortion $\d$ on set $\Qc$. For $\xv_o\in\Qc$, and $\yv_o=A\xv_o+\zv$ with   $\|\zv\|_2 \leq \zeta$,  let $\xvh_o$ denote the reconstruction of $\xv_o$ from $\yv_o$ offered by the CSP algorithm employing code  $(\Ec,\Dc)$. Then,
\[
\| \xvh_o- \xv_o\|_2 \leq \d\sqrt{1+\tau_1\over 1-\tau_2}+ \frac{2 \zeta}{\sqrt{(1- \tau_2)d}},
\]
with probability exceeding
\[
1-2^{r}{\rm e} ^{\frac{d}{2}(\tau_2 + \log(1- \tau_2))}- {\rm e} ^{-\frac{d}{2}(\tau_1 - \log(1+ \tau_1))},
\]
where $\tau_1>0$ and $\tau_2\in(0,1)$ are arbitrary.
\end{theorem}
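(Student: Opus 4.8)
The plan is to mirror the noiseless argument behind Theorem \ref{thm:CSP}, carrying the noise budget $\zeta$ through the triangle inequalities. First I would introduce the codeword $\xtb_o \triangleq \Dc(\Ec(\xv_o))$. By the definition of distortion in \eqref{eq:D}, this is a \emph{fixed} vector (determined by the fixed $\xv_o$) lying in the codebook $\Cc_r$ and satisfying $\|\xv_o - \xtb_o\|_2 \leq \d$. Since $\xvh_o$ minimizes $\|\yv_o - A\cv\|_2$ over $\cv \in \Cc_r$ and $\xtb_o$ is a feasible competitor, optimality gives $\|\yv_o - A\xvh_o\|_2 \leq \|\yv_o - A\xtb_o\|_2$.

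Next I would substitute $\yv_o = A\xv_o + \zv$ and apply the triangle inequality on each side. On the left, $\|A(\xv_o - \xvh_o) + \zv\|_2 \geq \|A(\xv_o - \xvh_o)\|_2 - \zeta$, and on the right, $\|A(\xv_o - \xtb_o) + \zv\|_2 \leq \|A(\xv_o - \xtb_o)\|_2 + \zeta$, using $\|\zv\|_2 \leq \zeta$. Chaining these through the optimality inequality yields the purely deterministic bound
\[
\|A(\xv_o - \xvh_o)\|_2 \leq \|A(\xv_o - \xtb_o)\|_2 + 2\zeta,
\]
which is exactly the noiseless inequality plus a $2\zeta$ slack. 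This is the only place the noise enters.

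The probabilistic content lies in converting $\|A\uv\|_2$ back to $\|\uv\|_2$ for the Gaussian matrix $A$. For any fixed $\uv$, $\|A\uv\|_2^2/\|\uv\|_2^2$ is $\chi^2_d$, so I would invoke the standard Chernoff tails $\P(\chi^2_d \leq (1-\tau_2)d) \leq {\rm e}^{\frac{d}{2}(\tau_2 + \log(1-\tau_2))}$ and $\P(\chi^2_d \geq (1+\tau_1)d) \leq {\rm e}^{-\frac{d}{2}(\tau_1 - \log(1+\tau_1))}$. I apply the lower tail \emph{uniformly} over the at most $2^r$ difference vectors $\xv_o - \cv$, $\cv \in \Cc_r$, via a union bound (producing the $2^r$ prefactor), so that in particular $\|A(\xv_o - \xvh_o)\|_2 \geq \sqrt{(1-\tau_2)d}\,\|\xv_o - \xvh_o\|_2$. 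The upper tail is applied only to the single fixed vector $\xv_o - \xtb_o$ (no union bound, hence no $2^r$ factor), giving $\|A(\xv_o - \xtb_o)\|_2 \leq \sqrt{(1+\tau_1)d}\,\d$. Both events hold simultaneously with probability at least the stated quantity.

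Combining the three inequalities on this good event and dividing by $\sqrt{(1-\tau_2)d}$ produces
\[
\|\xv_o - \xvh_o\|_2 \leq \d\sqrt{\frac{1+\tau_1}{1-\tau_2}} + \frac{2\zeta}{\sqrt{(1-\tau_2)d}},
\]
which is precisely the claim. There is no genuine obstacle beyond the noiseless case; the one point demanding care is the \emph{asymmetric} use of the concentration bounds --- a union bound over the whole codebook for the lower tail versus a single-vector upper tail for $\xv_o - \xtb_o$ --- since this is what keeps the $2^r$ factor off the second probability term and makes the robustness term scale as $2\zeta/\sqrt{(1-\tau_2)d}$.
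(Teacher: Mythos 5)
Your proposal is correct and follows essentially the same route as the paper's own proof: introduce the codeword $\xvt_o=\Dc(\Ec(\xv_o))$, use optimality of $\xvh_o$ plus the triangle inequality to obtain the deterministic bound $\|A(\xv_o-\xvh_o)\|_2 \leq \|A(\xv_o-\xvt_o)\|_2 + 2\zeta$, and then apply the $\chi^2$ upper tail to the single fixed vector $\xv_o-\xvt_o$ and the lower tail with a union bound over the $2^r$ codewords. The asymmetry you flag (union bound only on the lower-tail event) is exactly how the paper keeps the $2^r$ factor off the second probability term, so nothing is missing.
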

\noindent See Section \ref{proof:indcsapp} for the proof. 

Again this theorem is quite general and depending on the number of measurements and $\zeta$, we can optimize the parameters to obtain the best bound. 
Note that since the size of the measurement noise is $\zeta$, and the system of linear measurements is underdetermined, we do not expect to recover $\xv_o$ with better accuracy than $\zeta$. Therefore, a proper choice for $\d$ is $\zeta$. The following corollary simplifies the statement of Theorem \ref{thm:CSPdetnoiseind} in this setting.  

\begin{corollary}\label{cor:detniseind}
Consider the setup of Theorem \ref{thm:CSPdetnoiseind} and let $d \geq \frac{\eta r }{\log_2 1/(\ex \d)}$, where $\eta> 1$. Assume that $\zeta=\d$ and 
\[
{\eta\over \log{1\over \ex \d}}<\e,
\]
for some $\e>0$. Then, for $\xv_o\in\Qc$,  
\[
\P(\| \xvh_o- \xv_o\|_2 \geq \theta' \zeta^{1-(1+\e)/\eta}) \leq {\rm e}^{-d/2} + {\rm e}^{- 0.3\e r},
\]
where $\theta'=2\ex^{-(1+\e)/\eta}+2/\sqrt{d}$.
\end{corollary}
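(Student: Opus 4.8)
The plan is to specialize Theorem~\ref{thm:CSPdetnoiseind} by the choices $\tau_1=3$, $\tau_2=1-(\ex\d)^{2(1+\e)/\eta}$, and $\zeta=\d$, exactly as in the proof of Corollary~\ref{corollary1}, and then simply track how the two terms in the error bound and the two terms in the failure probability transform. The error bound in Theorem~\ref{thm:CSPdetnoiseind} has two summands: the distortion term $\d\sqrt{(1+\tau_1)/(1-\tau_2)}$, which is identical to the one appearing in Corollary~\ref{corollary1}, and the additional noise term $2\zeta/\sqrt{(1-\tau_2)d}$. First I would handle the distortion term: with $\tau_1=3$ and $\tau_2=1-(\ex\d)^{2(1+\e)/\eta}$, the computation already carried out in Corollary~\ref{corollary1} gives $\d\sqrt{(1+\tau_1)/(1-\tau_2)}=2\ex^{-(1+\e)/\eta}\,\d^{1-(1+\e)/\eta}$.

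Next I would treat the extra noise term. Substituting $\zeta=\d$ and $1-\tau_2=(\ex\d)^{2(1+\e)/\eta}$ gives
\[
\frac{2\zeta}{\sqrt{(1-\tau_2)d}}
=\frac{2\d}{\sqrt{d}\,(\ex\d)^{(1+\e)/\eta}}
=\frac{2}{\sqrt{d}}\,\ex^{-(1+\e)/\eta}\,\d^{1-(1+\e)/\eta}.
\]
Adding the two summands and factoring out $\ex^{-(1+\e)/\eta}\,\d^{1-(1+\e)/\eta}$ yields the total error bound $\bigl(2\ex^{-(1+\e)/\eta}+2/\sqrt{d}\bigr)\d^{1-(1+\e)/\eta}=\theta'\,\d^{1-(1+\e)/\eta}=\theta'\,\zeta^{1-(1+\e)/\eta}$, since $\zeta=\d$; this is precisely the claimed bound with $\theta'=2\ex^{-(1+\e)/\eta}+2/\sqrt{d}$.

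For the probability, I would reuse the two estimates from Corollary~\ref{corollary1} essentially verbatim. The term $\exp\bigl(-\tfrac{d}{2}(\tau_1-\log(1+\tau_1))\bigr)$ with $\tau_1=3$ is bounded by $\ex^{-0.8d}$ there; here the corollary states $\ex^{-d/2}$, which is weaker and hence still valid (indeed $0.8d\ge d/2$). The term $2^r\exp\bigl(\tfrac{d}{2}(\tau_2+\log(1-\tau_2))\bigr)$ is bounded by $\ex^{-0.3\e r}$ via the chain of inequalities culminating in \eqref{eq:bound-p-details}, using the hypothesis $\eta/\log(1/(\ex\d))<\e$ and $d\ge \eta r/\log_2(1/(\ex\d))$; this carries over unchanged since the hypotheses on $d$, $\eta$, and $\d$ are identical. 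Summing the two failure probabilities gives $\ex^{-d/2}+\ex^{-0.3\e r}$.

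There is essentially no hard part: the corollary is an immediate computational specialization of Theorem~\ref{thm:CSPdetnoiseind} that runs parallel to Corollary~\ref{corollary1}. The only point requiring a moment's care is the algebra of the extra noise term, namely verifying that $\zeta/\sqrt{1-\tau_2}$ scales as $\d^{1-(1+\e)/\eta}$ with the same exponent as the distortion term, so that the two combine cleanly into a single power of $\zeta$ with the stated constant $\theta'$. Everything else is a direct reuse of the estimates already established in the proof of Corollary~\ref{corollary1}.
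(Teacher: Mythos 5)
Your proposal is correct and follows essentially the same route as the paper: choose $\tau_1=3$, $\tau_2=1-(\ex\d)^{2(1+\e)/\eta}$, reuse the probability estimates from Corollary~\ref{corollary1} (noting $\ex^{-0.8d}\leq \ex^{-d/2}$ and that the bound \eqref{eq:bound-p-details} only improves when $d$ exceeds $\eta r/\log_2(1/\ex\d)$), and do the algebra on the extra noise term. The only cosmetic point is that the sum of the two error terms is actually $2\ex^{-(1+\e)/\eta}\bigl(1+1/\sqrt{d}\bigr)\d^{1-(1+\e)/\eta}$, which is bounded above by (not equal to) $\theta'\d^{1-(1+\e)/\eta}$ since $\ex^{-(1+\e)/\eta}<1$; this harmless slack appears in the paper's own proof as well and does not affect the conclusion.
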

\begin{proof}
If we choose $\tau_1=3$ and $\tau_2=1-(\ex \d)^{2(1+\e)/\eta}$ in Theorem \ref{thm:CSPdetnoiseind} the same analysis as the one presented in the proof of Corollary \ref{corollary1} yields the bound on the probability of error. Also, for this choice of variables, we have
\begin{align}
 &\d\sqrt{1+\tau_1\over 1-\tau_2}+ \frac{2 \zeta}{\sqrt{(1- \tau_2)d}}\nonumber\\
 &=\theta'\zeta^{1-(1+\e)/\eta},
\end{align}
where $\theta'=2\ex^{-(1+\e)/\eta}+2/\sqrt{d}$.
\end{proof}

\vspace{.2cm}

\begin{remark}
Consider the small noise regime, i.e., $\zeta \ll 1$. According to Corollary \ref{cor:detniseind}, as the number of measurement increases, or equivalently as $\eta$ increases, the reconstruction error decreases. However, it is always greater than or equal to $2 \zeta$ and gets closer to this bound as the number of measurement increases. The error $2 \zeta$ is not a fundamental bound on the signal recovery.  In fact, if we set $\tau_1$ to a small number greater than zero, the bound can be reduced to $\zeta$. 
\end{remark}

\begin{remark}
As $\eta$ decreases  the noise amplification increases. Furthermore, once $\eta$ reaches $1$ the reconstruction error is equal to $1$ and hence the scheme is not reliable any more.  In fact, according to corollary \ref{corollary1}, once we have more than $r(\d)/ \log_2 (\ex \d)$ measurements we can recover the signal accurately in the noiseless setting. However, if $\eta$ is close to $1$ the algorithm will not be robust to noise and adding a little bit of measurement noise leads to large reconstruction errors.  
\end{remark}

\vspace{.5cm}

\subsubsection{Stochastic noise}\label{sec:indstochnoise}
In the last section we studied the effect of deterministic noise. As is clear from our discussion, the assumptions on deterministic noise are  minimal. Therefore, in the analysis we should always consider the ``least favorable noise'' and derive the bounds for such pessimistic scenarios. For some types of  noise such as the quantization noise, the deterministic model seems to be a proper model. However, for many other noise sources stochastic model is a better match. Here, we consider the case where the linear measurements are corrupted by i.i.d.~noise, \ie  $\yv_o=A\xv_o+\zv$, where $z_i\sim\Nc(0,\s^2)$, $i=1,\ldots,d$, and analyze the performance of CSP under this model.

The first analysis we present here is based on  combining the results of Section \ref{sec:detnoiseind} and some probabilistic bounds on the $\ell_2$-norm of an i.i.d. Gaussian vector. It is straightforward to prove that $\P(\|\zv\|_2 \geq d\sigma (1+ \tau_3)) \leq {\rm e}^{- \frac{d}{2} (\tau_3- \log(1+ \tau_3))}$ (See section \ref{sec:proofback} for more information on this). Combining this result with Theorem \ref{thm:CSPdetnoiseind} immediately establishes the following theorem: 

\begin{theorem}\label{thm:csp_stocnoise1ind}
Consider the setup of Theorem \ref{thm:CSPdetnoiseind} with the only difference being  that noise $\zv$ is now drawn from $\Nc(0,\s^2I_d)$. Then,  for $\xv_o\in\Qc$,
\begin{align}
&\P \Big(\ \|\xv_o- \hat{\xv}_o\|_2 \geq {\d\sqrt{1+\tau_1}+2\s \sqrt{1+\tau_3}\over \sqrt{1-\tau_2}} \Big) \nonumber \\
& \leq  2^{r}{\rm e} ^{\frac{d}{2}(\tau_2 + \log(1- \tau_2))}+ {\rm e} ^{-\frac{d}{2}(\tau_1 - \log(1+ \tau_1))}\nonumber\\
&\;\;+ {\rm e}^{- \frac{d}{2} (\tau_3- \log(1+ \tau_3))},
\end{align}
where $\tau_2 \in (0,1)$, $\tau_1,\tau_3\geq 0$.
\end{theorem}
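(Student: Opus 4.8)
The plan is to reduce the stochastic-noise claim to the already-established deterministic-noise guarantee of Theorem \ref{thm:CSPdetnoiseind} by conditioning on a high-probability event on which the Gaussian noise vector has a controlled $\ell_2$-norm. The two sources of randomness here, the measurement matrix $A$ and the noise $\zv\sim\Nc(0,\s^2 I_d)$, are independent, and this independence is precisely what makes the reduction clean: once we know $\|\zv\|_2$ is small, the realized $\zv$ is just one admissible perturbation to which the worst-case bound of Theorem \ref{thm:CSPdetnoiseind} already applies.

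First I would invoke the concentration of the norm of a Gaussian vector. Since $\|\zv\|_2^2/\s^2$ is a $\chi^2_d$ random variable, the standard Chernoff bound for the chi-square distribution gives
\[
\P\big(\|\zv\|_2 \geq \s\sqrt{d(1+\tau_3)}\big) \leq {\rm e}^{-\frac{d}{2}(\tau_3-\log(1+\tau_3))},
\]
for any $\tau_3\geq 0$. I would then set $\zeta \triangleq \s\sqrt{d(1+\tau_3)}$ and define the event $\Ec_{\rm noise} \triangleq \{\|\zv\|_2 \leq \zeta\}$, so that $\P(\Ec_{\rm noise}^c)$ is bounded by the third term appearing in the theorem's probability estimate.

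Next I would appeal to Theorem \ref{thm:CSPdetnoiseind}. The crucial observation is that its conclusion is a worst-case statement over the noise: there is a good event $\Ec_A$ depending only on $A$, with $\P(\Ec_A)$ at least $1-2^{r}{\rm e}^{\frac{d}{2}(\tau_2+\log(1-\tau_2))}-{\rm e}^{-\frac{d}{2}(\tau_1-\log(1+\tau_1))}$, on which the bound $\|\xvh_o-\xv_o\|_2 \leq \d\sqrt{(1+\tau_1)/(1-\tau_2)} + 2\zeta/\sqrt{(1-\tau_2)d}$ holds \emph{simultaneously} for every perturbation of $\ell_2$-norm at most $\zeta$. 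Hence on $\Ec_A \cap \Ec_{\rm noise}$ this deterministic bound applies to our particular random $\zv$. Substituting $\zeta = \s\sqrt{d(1+\tau_3)}$ collapses the second summand to $2\s\sqrt{1+\tau_3}/\sqrt{1-\tau_2}$, so the overall bound becomes exactly $(\d\sqrt{1+\tau_1}+2\s\sqrt{1+\tau_3})/\sqrt{1-\tau_2}$, matching the statement. Finally, the failure event is contained in $\Ec_A^c \cup \Ec_{\rm noise}^c$, and a union bound over these (together with the two probabilities already inside $\P(\Ec_A^c)$) yields the three-term estimate.

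I do not anticipate a genuine obstacle, since the heavy lifting is already carried out in Theorem \ref{thm:CSPdetnoiseind}; the only point requiring care is the logical order of the argument. One must be sure that the good event for $A$ is uniform over all admissible noise vectors, rather than tied to a single fixed $\zv$, so that it can legitimately be intersected with the independent event $\Ec_{\rm noise}$ without re-randomizing $A$. Because Theorem \ref{thm:CSPdetnoiseind} was stated to permit noise that depends on $\yv_o = A\xv_o+\zv$, its good event is indeed uniform over $\|\zv\|_2 \leq \zeta$, and the reduction goes through directly.
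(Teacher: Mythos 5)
Your proposal is correct and follows essentially the same route as the paper, which likewise derives the result by applying the $\chi^2$ concentration bound $\P\bigl(\|\zv\|_2 \geq \s\sqrt{d(1+\tau_3)}\bigr) \leq {\rm e}^{-\frac{d}{2}(\tau_3-\log(1+\tau_3))}$, invoking Theorem \ref{thm:CSPdetnoiseind} with $\zeta = \s\sqrt{d(1+\tau_3)}$, and taking a union bound. Your added care about the good event for $A$ being uniform over all admissible noise vectors (so the intersection with the noise event is legitimate) is exactly the right justification for why the paper can say the combination ``immediately establishes'' the theorem.
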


Setting  the free parameters in Theorem \ref{thm:csp_stocnoise1ind}  similar to Corollary \ref{cor:detniseind} yields  a similar conclusion. However, Theorem \ref{thm:csp_stocnoise1ind} has one counter-intuitive aspect: increasing the number of measurements does not reduce the reconstruction error. 
In other words, we expect (for fixed $n, \d$) the reconstruction error to reduce as the number of measurement increases. In fact, Theorem \ref{thm:CSPdetnoiseind} is proved for the ``least favorable'' noise, and it provides pessimistic bounds for the stochastic noise. Our next theorem settles this issue to some extent.  In the next theorem, we show that this is an artifact of the proof technique.


\begin{theorem}\label{thm:CSP-noisy}
Let $\xvh_o$ denote the solution of CSP to input $\yv_o=A\xv_o+\zv$, employing  code $(\Ec,\Dc)$ operating at rate $r$ and distortion $\d$.  Let $d={\eta r\over \log_2 {1\over {\rm e}\d}}$, where $\eta>1$,  and 
\begin{align}
{\eta \over \log {1\over \ex \d}} \leq \e'.\label{eq:bound-on-e-p}
\end{align}
for some  $\e'>0$. Let $\b=\sqrt{\log_2{1/\ex \d}}$. For any $\xv_o\in\Qc$, we have
\[
\| \xvh_o- \xv_o\|_2 \leq {1\over (\ex \d)^{(1+\e')/\eta}}({2\s \beta \over \sqrt{\eta}}+\sqrt{{4\s^2 \beta^2 \over \eta}+2{\d}^2+{4\s \d\over \sqrt{\eta}}}),
\]
with probability exceeding
\begin{align}
&1-2\ex^{-{0.15\eta r\over \log_2{1\over \ex \d}}}-\ex^{-{ r\over \log_2{1\over \ex \d}}} -\ex^{-0.3r}- {\rm e}^{- 0.3\e' r  }.
\end{align}

\end{theorem}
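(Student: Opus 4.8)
The plan is to reuse the optimality-plus-concentration template of Theorems~\ref{thm:CSP} and~\ref{thm:CSPdetnoiseind}, but to treat the Gaussian noise by a \emph{second-moment} argument rather than the triangle inequality; the motivation is the defect noted after Theorem~\ref{thm:csp_stocnoise1ind}, namely that feeding $\zeta=\|\zv\|_2\approx\s\sqrt d$ into Theorem~\ref{thm:CSPdetnoiseind} leaves the non-vanishing term $\tfrac{2\zeta}{\sqrt{(1-\tau_2)d}}\approx\tfrac{2\s}{\sqrt{1-\tau_2}}$. Let $\xvt=\Dc(\Ec(\xv_o))\in\Cc_r$, $\etb\triangleq\xv_o-\xvt$ (so $\|\etb\|_2\le\d$), and $\ev\triangleq\xv_o-\xvh_o$. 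Since $\xvt$ is feasible for~\eqref{eq:CSP} while $\xvh_o$ is optimal, $\|A\ev+\zv\|_2\le\|A\etb+\zv\|_2$; squaring both nonnegative sides and cancelling the common $\|\zv\|_2^2$ gives
\[
\|A\ev\|_2^2+2\langle A\ev,\zv\rangle\ \le\ \|A\etb\|_2^2+2\langle A\etb,\zv\rangle .
\]
This cancellation is the crux of the improvement: the $O(\s\sqrt d)$ noise energy disappears and $\zv$ survives only through the two cross terms, which are of lower order because $\zv$ is conditionally orthogonal to the frozen measurement directions.

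I would then intersect five standard concentration events. (i) A uniform lower isometry over the codebook, $\|A(\xv_o-\cv)\|_2^2\ge d(1-\tau_2)\|\xv_o-\cv\|_2^2$ for all $\cv\in\Cc_r$ (failure $\le 2^{r}\ex^{\frac d2(\tau_2+\log(1-\tau_2))}$, exactly as in Theorem~\ref{thm:CSP}); specialised to $\cv=\xvh_o$ it yields $\|A\ev\|_2^2\ge d(1-\tau_2)\|\ev\|_2^2$. (ii) A single-vector upper $\chi^2$ bound $\|A\etb\|_2^2\le d(1+\tau_1)\d^2$. (iii) A noise-norm bound $\|\zv\|_2\le\s\sqrt{d(1+\tau_3)}$. (iv) A uniform cross-term bound obtained by \emph{conditioning on $A$}: each direction $A(\xv_o-\cv)/\|A(\xv_o-\cv)\|_2$ is then deterministic and, $\zv$ being independent $\Nc(0,\s^2I_d)$, its projection onto that direction is $\Nc(0,\s^2)$, so a Gaussian tail union-bounded over the $\le 2^{r}$ codewords gives $|\langle A\ev,\zv\rangle|\le\|A\ev\|_2\,\s\g$ with $\g=2\sqrt r$ for all codewords at once. (v) For the single fixed vector $\etb$, conditioning instead on $\zv$ gives $|\langle A\etb,\zv\rangle|\le\|\etb\|_2\|\zv\|_2\,\g'$ with $\g'=\sqrt{2d/\eta}$. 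The decisive point is the \emph{asymmetry}: the minimiser-dependent cross term (iv) must be pinned to $\|A\ev\|_2$ rather than to $\|\ev\|_2\|\zv\|_2$, whereas the auxiliary term (v) may be bounded crudely since it only feeds the additive part of the estimate.

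On the intersection of these events, writing $w\triangleq\|A\ev\|_2$ and inserting (ii)--(v) reduces the displayed inequality to the scalar quadratic $w^2-2\s\g\,w\le d(1+\tau_1)\d^2+2\sqrt{d(1+\tau_3)}\,\d\s\g'$; solving for $w$ and dividing by $\sqrt{d(1-\tau_2)}$ from (i) gives
\[
\|\ev\|_2\le\frac{1}{\sqrt{1-\tau_2}}\Bigl(\tfrac{\s\g}{\sqrt d}+\sqrt{\tfrac{\s^2\g^2}{d}+(1+\tau_1)\d^2+\tfrac{2\sqrt{1+\tau_3}\,\d\s\g'}{\sqrt d}}\Bigr).
\]
It remains to substitute $\tau_1=\tau_3=1$, $\tau_2=1-(\ex\d)^{2(1+\e')/\eta}$ (so $(1-\tau_2)^{-1/2}=(\ex\d)^{-(1+\e')/\eta}$, exactly as in Corollary~\ref{corollary1}), $\g=2\sqrt r$, $\g'=\sqrt{2d/\eta}$, and to use $d=\eta r/\log_2\tfrac1{\ex\d}$ and $\b=\sqrt{\log_2\tfrac1{\ex\d}}$ (whence $\sqrt{r/d}=\b/\sqrt\eta$); the four terms then collapse to the claimed bound, and the union of the five failure probabilities---$\ex^{-0.3\e' r}$ from (i), one $\ex^{-0.15d}$ each from (ii) and (iii), $\ex^{-0.3r}$ from (iv), and $\ex^{-r/\log_2(1/\ex\d)}$ from (v)---reproduces the stated success probability.

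The main obstacle is ingredient (iv). Because $\ev=\xv_o-\xvh_o$ is selected using both $A$ and $\zv$, I cannot regard $A\ev$ as one fixed Gaussian vector; and the naive route of conditioning on $\zv$, giving $|\langle A\ev,\zv\rangle|\le\|\ev\|_2\|\zv\|_2\g$, would reintroduce $\|\zv\|_2\approx\s\sqrt d$ and, after completing the square in $\|\ev\|_2$, produce a $(1-\tau_2)^{-1}$ factor---precisely the non-vanishing error floor this theorem is meant to remove. Conditioning on $A$ instead, so that the candidate directions are deterministic and $\zv$ projects onto them as scalar $\Nc(0,\s^2)$ variables, is what simultaneously eliminates $\|\zv\|_2$ and keeps the union-bound factor $\sqrt r$ attached to $\|A\ev\|_2\approx\sqrt d\,\|\ev\|_2$, yielding the decaying noise rate $\s\sqrt{r/d}=\s\b/\sqrt\eta$ together with a single power $(1-\tau_2)^{-1/2}$. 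Verifying that all the constants then telescope into the stated closed form is the remaining, purely computational, part.
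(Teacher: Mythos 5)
Your proposal is correct and follows essentially the same route as the paper's own proof: the same cancellation of $\|\zv\|_2^2$ after squaring the CSP optimality inequality, the same asymmetric treatment of the two cross terms (union-bounding $|\zv^T A(\xv_o-\cv)|$ against $\|A(\xv_o-\cv)\|_2$ over the codebook by conditioning on $A$, while bounding the compression-error cross term via $\|\zv\|_2$ times a standard Gaussian), the same quadratic-in-$\|A(\xv_o-\xvh_o)\|_2$ step, and the same parameter choices up to cosmetic differences (your $\g=2\sqrt r$ versus the paper's $\g_2=\sqrt{2r}$, and your splitting of the paper's event $\Ec_2$ into separate noise-norm and conditional-Gaussian events). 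The constants you obtain in fact match the stated bound exactly, so no gap remains.
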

\noindent See Section \ref{proof:indcsapp} for the proof. \\

It is important to note the following two important aspects of this theorem: (i) As the number of measurements increases or equivalently as $\eta$ increases (for fixed $n, \d$) all the terms that are due to noise (the terms that have $\sigma$) decrease. Therefore, the reconstruction error decreases. (ii) This theorem is not sharp enough for small values of $\d$. In fact, as $\d \rightarrow 0$ the upper bound of the reconstruction error becomes weaker than that of Theorem \ref{thm:csp_stocnoise1ind}. We believe that the value of $\beta$ in the reconstruction error in Theorem \ref{thm:CSP-noisy} shall be proportional to the $\alpha$-dimension of the coder. This is left as an open question for future research.    

\section{CSP performance: strong CS-applicability}\label{sec:unifCS}
In the last section we explored the performance of CSP algorithm in recovering an individual sequence. The goal of this section is to extend our results to the strong-CS applicability problem. As before, we start with the noiseless setting and will then discuss the noisy measurements. 

\subsection{Noiseless measurements}
Consider  compact set $\Qc \subset \mathds{R}^n$ and  compression code $(\Ec,\Dc)$ for  $\Qc$ with rate $r$ and distortion $\d$.  Our goal is to recover $\xv_o$ from an underdetermined set of linear equations $\yv_o = A \xv_o$, where $A \in \mathds{R}^{d \times n}$ denotes a measurement matrix drawn as $ A_{ij} \overset{\rm i.i.d.}{\sim} \Nc(0,1)$. \\

\begin{theorem}\label{thm:finiteuniform}
 For  $\xv_o \in \mathcal{Q}$, let $\hat{\xv}_o$ denote the reconstruction of the CSP algorithm applied to  $\yv_o=A \xv_o$, when employing code $(\Ec,\Dc)$. We have
\begin{eqnarray*}
\lefteqn{\P\Big(\exists\; \xv_o \in \mathcal{Q} \ : \ \|\xv_o- \hat{\xv}_o\|_2 \geq \frac{2\d}{\sqrt{1-\tau}}(\sqrt{\frac{n}{d}}+ (1+t) ) \Big)} \nonumber \\
& \leq & {\rm e}^{-dt^2/2} + 2^{2r} {\rm e}^{\frac{d}{2}(\tau + \log(1- \tau))}, \hspace{3cm}
\end{eqnarray*}
where $\tau \in (0,1)$ and $t\geq 0$. 
\end{theorem}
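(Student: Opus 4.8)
The plan is to deduce the uniform guarantee from two high-probability facts about the Gaussian matrix $A$: a lower-tail (restricted-isometry-type) bound for $\|A\uv\|_2$ that holds simultaneously over the finitely many codeword differences, and an upper bound on the operator norm $\|A\|_{\rm op}$. The entire difficulty relative to the individual-sequence statement of Theorem \ref{thm:CSP} is the passage from a single $\xv_o$ to all of $\Qc$ at once. The observation that rescues us is that, although $\xv_o$ sweeps out a continuum, both the CSP output $\hat\xv_o$ and the oracle reconstruction $\tilde\xv_o\triangleq\Dc(\Ec(\xv_o))$ always live in the finite codebook $\Cc_r$, so only their (at most $2^{2r}$) pairwise differences need to be controlled.

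Fixing $\xv_o\in\Qc$, I would first record that $\|\xv_o-\tilde\xv_o\|_2\le\d$ by the definition \eqref{eq:D} of distortion. Since $\tilde\xv_o\in\Cc_r$ is feasible for \eqref{eq:CSP} and $\yv_o=A\xv_o$, optimality of $\hat\xv_o$ gives $\|A\xv_o-A\hat\xv_o\|_2\le\|A\xv_o-A\tilde\xv_o\|_2$, and hence, by the triangle inequality, $\|A(\hat\xv_o-\tilde\xv_o)\|_2\le 2\|A(\xv_o-\tilde\xv_o)\|_2\le 2\|A\|_{\rm op}\,\d$. This reduces the theorem to (i) turning the left-hand quantity into a bound on $\|\hat\xv_o-\tilde\xv_o\|_2$, and (ii) controlling $\|A\|_{\rm op}$.

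For (i), $\hat\xv_o-\tilde\xv_o$ is a difference of two codewords, so it lies in $\{\cv-\cv':\cv,\cv'\in\Cc_r\}$, a set of at most $2^{2r}$ vectors. For any fixed $\uv$, the entries of $A\uv$ are i.i.d.~$\Nc(0,\|\uv\|_2^2)$, so $\|A\uv\|_2^2/\|\uv\|_2^2\sim\chi^2_d$ and the chi-squared lower tail gives $\P(\|A\uv\|_2^2\le(1-\tau)d\|\uv\|_2^2)\le{\rm e}^{\frac d2(\tau+\log(1-\tau))}$; a union bound over the $2^{2r}$ difference vectors secures $\|A(\hat\xv_o-\tilde\xv_o)\|_2^2\ge(1-\tau)d\,\|\hat\xv_o-\tilde\xv_o\|_2^2$ for all $\xv_o$ simultaneously, with failure probability at most $2^{2r}{\rm e}^{\frac d2(\tau+\log(1-\tau))}$ --- the second term of the claimed bound. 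For (ii), I would invoke the standard Gaussian concentration estimate for the largest singular value of a $d\times n$ matrix with i.i.d.~$\Nc(0,1)$ entries, namely $\P(\|A\|_{\rm op}\ge\sqrt n+\sqrt d+s)\le{\rm e}^{-s^2/2}$, and take $s=t\sqrt d$ to get $\|A\|_{\rm op}\le\sqrt n+(1+t)\sqrt d$ with failure probability ${\rm e}^{-dt^2/2}$ --- the first term.

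Intersecting the two events and chaining the inequalities yields, for every $\xv_o$, $\|\hat\xv_o-\tilde\xv_o\|_2\le 2\|A\|_{\rm op}\d/\sqrt{(1-\tau)d}\le\frac{2\d}{\sqrt{1-\tau}}(\sqrt{n/d}+(1+t))$; the triangle inequality $\|\xv_o-\hat\xv_o\|_2\le\|\xv_o-\tilde\xv_o\|_2+\|\hat\xv_o-\tilde\xv_o\|_2$ then gives the stated estimate (the additive $\d$ from the first term being of lower order than, and hence absorbable into, the displayed expression, since $d<n$ forces $\sqrt{n/d}>1$), while a final union bound over the two failure events produces the stated probability. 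The main obstacle is exactly step (i)'s uniformity: one cannot net the continuum $\Qc$, and the resolution is that the per-vector $\chi^2$ concentration only needs to hold on the $2^{2r}$ codeword differences --- which is why $2^{2r}$ replaces the $2^r$ of Theorem \ref{thm:CSP}. Correspondingly, the vector $\xv_o-\tilde\xv_o$ is \emph{not} a codeword difference and varies with $\xv_o$, so it cannot be handled by per-vector concentration and must be bounded through $\|A\|_{\rm op}$; this single substitution is the source of both the $\sqrt{n/d}$ blow-up in the error (absent in the individual case) and its companion probability ${\rm e}^{-dt^2/2}$.
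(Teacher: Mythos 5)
Your proof follows exactly the same route as the paper's: the same reduction $\|A(\hat\xv_o-\tilde\xv_o)\|_2\le 2\sigma_{\max}(A)\,\d$ via optimality of $\hat\xv_o$ and the triangle inequality, the same union bound of the $\chi^2$ lower tail over the $2^{2r}$ codeword differences, and the same Davidson--Szarek bound on $\sigma_{\max}(A)$, intersected to yield the claim. If anything, you are more explicit than the paper at the very last step: the paper silently drops the additive $\d$ coming from $\|\xv_o-\tilde\xv_o\|_2$ (a term its Theorem \ref{thm:stocnoiseunif} does retain in the statement), whereas you flag it --- though strictly speaking that $\d$ inflates the constant slightly rather than being absorbable into the displayed bound.
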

\noindent See Section \ref{proof:unifcsapp} for the proof. \\

Compared to Theorem \ref{thm:CSP}, Theorem \ref{thm:finiteuniform} provides a stronger performance guarantee for CSP. It ensures that once a matrix is drawn, it will work on all $\xv_o \in \mathcal{Q}$. However, this strength has come at the price of larger reconstruction error and lower success probability. The following corollary presents a  more quantitive comparison between the two theorems. \\
 
\begin{corollary}\label{cor:metricdimension1}
Let $\d<1$ and $d = {2 \eta r \over \log_2 (1/\ex \d)}$, where $\eta > 1$, and
\[
{\eta\over \log{1\over \ex \d}} < \e.
\] Let $\theta\triangleq2(\sqrt{n/d}+ 2 )$. Then
\[
\P( \exists \xv_o \in \Qc : \|\hat{\xv}_o - \xv_o\|_2 \geq \theta {\d}^{1-(1+\e)/\eta}) \leq {\rm e}^{-d/2} + {\rm e}^{-0.6\e r}.
\]
\end{corollary}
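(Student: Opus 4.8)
The plan is to obtain Corollary \ref{cor:metricdimension1} directly from Theorem \ref{thm:finiteuniform} by picking the two free parameters $\tau$ and $t$ so that the threshold and the two tail terms collapse to the claimed form, exactly in the spirit of the passage from Theorem \ref{thm:CSP} to Corollary \ref{corollary1}. Concretely, I would set $t=1$ and $\tau = 1-(\ex\d)^{2(1+\e)/\eta}$. First I would check admissibility: the hypothesis $\eta/\log(1/\ex\d)<\e$ together with $\eta>1$ forces $\log(1/\ex\d)>\eta/\e>0$, hence $\ex\d<1$, so $(\ex\d)^{2(1+\e)/\eta}\in(0,1)$ and thus $\tau\in(0,1)$ as required by the theorem, while $t=1\ge 0$ is trivially allowed.

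Next I would verify the reconstruction-error threshold. With $t=1$ we have $1+t=2$ and $\sqrt{1-\tau}=(\ex\d)^{(1+\e)/\eta}$, so the threshold in Theorem \ref{thm:finiteuniform} becomes
\[
\frac{2\d}{\sqrt{1-\tau}}\Big(\sqrt{\tfrac{n}{d}}+2\Big)=2\ex^{-(1+\e)/\eta}\,\d^{1-(1+\e)/\eta}\Big(\sqrt{\tfrac{n}{d}}+2\Big)=\ex^{-(1+\e)/\eta}\,\theta\,\d^{1-(1+\e)/\eta}.
\]
Since $\eta>1$ and $\e>0$ give $\ex^{-(1+\e)/\eta}\le 1$, this threshold is at most $\theta\,\d^{1-(1+\e)/\eta}$. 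Hence the event $\{\,\exists\,\xv_o:\|\hat{\xv}_o-\xv_o\|_2\ge \theta\d^{1-(1+\e)/\eta}\}$ is contained in the event controlled by the theorem, which is precisely the event appearing in the corollary, so its probability is bounded by the same right-hand side.

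Then I would simplify the probability bound. The term ${\rm e}^{-dt^2/2}$ becomes ${\rm e}^{-d/2}$ at $t=1$, matching the first summand verbatim. For the second summand I would write $2^{2r}=\ex^{2r\log 2}$ and estimate the exponent $2r\log 2+\tfrac{d}{2}(\tau+\log(1-\tau))$. Using $\log(1-\tau)=-\tfrac{2(1+\e)}{\eta}\log(1/\ex\d)$, the bound $\tau\le 1$, and $\tfrac{d}{2}=\tfrac{\eta r}{\log_2(1/\ex\d)}=\tfrac{\eta r\log 2}{\log(1/\ex\d)}$, the exponent is at most $-2\e r\log 2+\tfrac{\eta r\log 2}{\log(1/\ex\d)}$; the hypothesis $\eta/\log(1/\ex\d)<\e$ then yields the bound $-\e r\log 2\le -0.6\,\e r$, so the second summand is at most ${\rm e}^{-0.6\e r}$, as claimed.

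The derivation is essentially mechanical, and the closest thing to an obstacle is the bookkeeping between the natural logarithm and base-$2$ logarithm when substituting $d$, since the statement mixes $\log$ (natural, as fixed in the Notation subsection) with $\log_2$ both in the definition $d=2\eta r/\log_2(1/\ex\d)$ and in the factor $2^{2r}$. Provided the conversion $\log_2 x=\log x/\log 2$ is tracked consistently, the $2r\log 2$ coming from $2^{2r}$ and the $2(1+\e)r\log 2$ coming from $\tfrac{d}{2}\log(1-\tau)$ combine to the decaying term $-2\e r\log 2$, and the constraint on $\e$ absorbs the residual positive contribution $\tfrac{\eta r\log 2}{\log(1/\ex\d)}$.
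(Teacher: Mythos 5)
Your proposal is correct and follows essentially the same route as the paper: apply Theorem \ref{thm:finiteuniform} with $t=1$ and $\tau$ chosen so that $\sqrt{1-\tau}$ cancels the $\d$-power, then bound the exponent of the $2^{2r}$ term using the hypothesis $\eta/\log(1/\ex\d)<\e$. The only cosmetic difference is that the paper takes $\tau=1-\d^{2(1+\e)/\eta}$ (giving the threshold exactly, with the exponent bound cited from \eqref{eq:bound-p-details}), whereas you take $\tau=1-(\ex\d)^{2(1+\e)/\eta}$ and pass through a containment of events; both are valid.
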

\begin{proof}
Let $t=1$ and $\tau = 1-{\d}^{2(1+\e)/\eta}$ in Theorem \ref{thm:finiteuniform}. Then,
\begin{align*}
\frac{2\d}{\sqrt{1-\tau}}(\sqrt{\frac{n}{d}}+ (1+t) ) =2(\sqrt{\frac{n}{d}}+ 2 ){\d}^{1-(1+\e)/\eta}.
\end{align*}
Furthermore, for  $d = { 2\eta r \over \log_2(1/ \ex \d)}$, from \eqref{eq:bound-p-details}, it follows that
\begin{align*}
(2\log 2 )r+{d\over 2}(\tau + \log(1- \tau))\leq -0.6\e r. 
\end{align*}

 \end{proof}
 
 \vspace{.2cm}

\begin{remark}
Similar to remark \ref{remark:noiselessi1}, we  conclude that if the number of measurements $d$ is larger than $2 \alpha$, then CSP  accurately recovers signals in $\Qc$  from an undersampled set of $d$ linear measurements. Hence, CS is strongly applicable to $\Qc$ with $d$ measurements as long as $d > 2 \alpha$. 
\end{remark} 
\vspace{.1cm}

\begin{remark}
According to Proposition  \ref{thm:lowerboundu}, $2 \alpha$ is a lower bound on the number of measurements that are required for the exact recovery in the strong CS-applicability regime. Therefore, CSP achieves the fundamental limit of recovery from compression algorithms.    
\end{remark}

\subsection{Noisy measurements}\label{ssec:noise:ind}
In the last section, we considered the ideal setting where there is no measurement noise. The objective of this section is to present our results regarding the robustness of CSP to measurement noise. Note that here we are interested in strong CS-applicability framework. As before, we consider two different types of noise: (i) deterministic, and (ii) stochastic.  

\subsubsection{Deterministic noise}
Consider compact set $\Qc$ and a compression code $(\Ec,\Dc)$ on $\Qc$, operating  at rate $r$ and distortion $\d$. Let $A\in\mathds{R}^{d\times n}$, where $A_{i,j}$ are i.i.d.~$\Nc(0,1)$ and for $\xv_o\in\Qc$, $\yv_o=A\xv_o+\zv$, where the measurement noise satisfies  $\|\zv\|_2 \leq \zeta$. The following theorem characterizes the performance of the CSP algorithm: \\

\begin{theorem}\label{thm:csp_detnoise}
Let $\xvh_o$ denote the reconstruction of $\xv_o$ from $\yv$ by the CSP algorithm employing code $(\Ec,\Dc)$.  We have
\begin{eqnarray*}
\lefteqn{\P\Big(\exists\; \xv_o \in \mathcal{Q} \ : \ \|\xv_o- \hat{\xv}_o\|_2 \geq \frac{2c\d}{\sqrt{1-\tau}}  + \frac{2 \zeta}{\sqrt{d(1-\tau)} }\Big)} \nonumber \\
& \leq & {\rm e}^{-dt^2/2} + 2^{2r} {\rm e}^{\frac{d}{2}(\tau + \log(1- \tau))}, \hspace{3cm}
\end{eqnarray*}
where $\tau \in (0,1)$, $t\geq 0$, and $c = (\sqrt{\frac{n}{d}}+ (1+t) )$. 
\end{theorem}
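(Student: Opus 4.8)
The plan is to combine the uniform (strong-applicability) argument behind Theorem~\ref{thm:finiteuniform} with the additive-noise bookkeeping of Theorem~\ref{thm:CSPdetnoiseind}. Write $\tilde{\xv}_o \triangleq \Dc(\Ec(\xv_o))$ for the code's reconstruction of $\xv_o$, so that $\tilde{\xv}_o \in \Cc_r$ and $\|\xv_o - \tilde{\xv}_o\|_2 \leq \d$ for every $\xv_o \in \Qc$. Since $\tilde{\xv}_o$ is feasible for the CSP minimization \eqref{eq:CSP}, optimality of $\xvh_o$ gives $\|\yv_o - A\xvh_o\|_2 \leq \|\yv_o - A\tilde{\xv}_o\|_2$. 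Substituting $\yv_o = A\xv_o + \zv$, applying the triangle inequality on both sides, and using $\|\zv\|_2 \leq \zeta$, I would first derive $\|A(\xv_o - \xvh_o)\|_2 \leq \|A(\xv_o - \tilde{\xv}_o)\|_2 + 2\zeta$, and then, after one more triangle inequality through $\tilde{\xv}_o$, the codeword-to-codeword estimate $\|A(\tilde{\xv}_o - \xvh_o)\|_2 \leq 2\|A(\xv_o - \tilde{\xv}_o)\|_2 + 2\zeta$.

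At this point the two terms on the right are governed by two events that do \emph{not} depend on $\xv_o$. The essential departure from the individual-sequence analysis is that $\xv_o$ now ranges over the continuum $\Qc$, so $\|A(\xv_o - \tilde{\xv}_o)\|_2$ cannot be controlled by a single $\chi^2$ upper-tail bound and a union over codewords. Instead I would dominate it uniformly through the operator norm, $\|A(\xv_o - \tilde{\xv}_o)\|_2 \leq \|A\|_{\mathrm{op}}\,\|\xv_o - \tilde{\xv}_o\|_2 \leq \|A\|_{\mathrm{op}}\,\d$. Invoking the standard Gaussian concentration of the largest singular value, $\P(\|A\|_{\mathrm{op}} \geq \sqrt{n} + \sqrt{d} + s) \leq \ex^{-s^2/2}$, and setting $s = t\sqrt{d}$, gives $\|A\|_{\mathrm{op}} \leq \sqrt{n} + (1+t)\sqrt{d}$ except with probability $\ex^{-dt^2/2}$; this is exactly what produces the $\sqrt{n/d}$ contribution inside $c$ and the $\ex^{-dt^2/2}$ term in the failure probability.

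For the matching lower bound I would introduce the event that $\|A(\cv_1 - \cv_2)\|_2^2 \geq (1-\tau)d\,\|\cv_1 - \cv_2\|_2^2$ holds simultaneously for \emph{all} pairs $\cv_1, \cv_2 \in \Cc_r$. Because both $\tilde{\xv}_o$ and $\xvh_o$ are codewords, on this event $\|A(\tilde{\xv}_o - \xvh_o)\|_2 \geq \sqrt{(1-\tau)d}\,\|\tilde{\xv}_o - \xvh_o\|_2$. For each fixed difference $\uv = \cv_1 - \cv_2$ one has $\|A\uv\|_2^2/\|\uv\|_2^2 \sim \chi^2_d$, so the lower-tail estimate $\P(\chi^2_d \leq (1-\tau)d) \leq \ex^{\frac{d}{2}(\tau + \log(1-\tau))}$ together with a union bound over the at most $2^{2r}$ codeword pairs yields the term $2^{2r}\ex^{\frac{d}{2}(\tau + \log(1-\tau))}$. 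On the intersection of the two events I obtain, for every $\xv_o \in \Qc$ at once, $\|\tilde{\xv}_o - \xvh_o\|_2 \leq (2\|A\|_{\mathrm{op}}\d + 2\zeta)/\sqrt{(1-\tau)d}$; a final triangle inequality $\|\xv_o - \xvh_o\|_2 \leq \|\xv_o - \tilde{\xv}_o\|_2 + \|\tilde{\xv}_o - \xvh_o\|_2$ and substitution of the operator-norm bound deliver the claimed $\frac{2c\d}{\sqrt{1-\tau}} + \frac{2\zeta}{\sqrt{d(1-\tau)}}$ with $c = \sqrt{n/d} + (1+t)$, while a union bound over the two bad events gives the stated probability.

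I expect the main obstacle to be precisely this uniformity over the continuum $\Qc$: unlike Theorems~\ref{thm:CSP} and \ref{thm:CSPdetnoiseind}, where a single error vector is fixed and one $\chi^2$ upper tail suffices, here the error direction $\xv_o - \tilde{\xv}_o$ varies over uncountably many vectors and must be dominated by the worst-case singular value, which is what forces the $\sqrt{n/d}$ factor and the separate operator-norm event. The care required is in verifying that both the operator-norm event and the all-pairs codeword event are genuinely independent of $\xv_o$, so that a single draw of $A$ simultaneously serves every $\xv_o \in \Qc$; the deterministic noise itself enters only additively through the uniform bound $\|\zv\|_2 \leq \zeta$ and needs no further probabilistic argument.
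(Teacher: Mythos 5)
Your proposal is correct and takes essentially the same route as the paper's own proof: the identical optimality-plus-triangle-inequality start yielding $\|A(\xvt_o-\xvh_o)\|_2 \leq 2\sigma_{\max}(A)\d + 2\zeta$, the same all-pairs codeword event handled by the $\chi^2$ lower tail with a union bound over the $2^{2r}$ pairs, and the same Davidson--Szarek bound on $\sigma_{\max}(A)$. The only (shared) blemish is the very last step: passing from $\|\xvt_o-\xvh_o\|_2$ to $\|\xv_o-\xvh_o\|_2$ by the triangle inequality actually incurs an extra additive $\d$ beyond the stated constant --- a slack the paper's own proof silently drops and which its stochastic-noise analogue (Theorem \ref{thm:stocnoiseunif}) does retain --- so your write-up is, if anything, more explicit about this point than the paper.
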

See Section \ref{proof:unifcsapp} for the proof. \\

To obtain a better understanding of this theorem let $\d = \zeta$. The following corollary simplifies our main result in this setting. 

\begin{corollary}\label{cor:metricdimension1noisy}
Consider the setup of Theorem \ref{thm:csp_detnoise} and assume that $\d=\zeta$. Let $d = { 2\eta r \over \log_2 (1/\ex \d)}$, where $\eta > 1$ and ${\eta\over \log{1\over \ex \d}}<\e$, for some $\e>0$. Let  $\theta = 2(\sqrt{\frac{n}{d}}+2)+ 2/\sqrt{d}$. Then
\[
\P(\exists\; \xv_o \in \mathcal{Q}: \|\hat{\xv}_o - \xv_o\|_2 \geq \theta \zeta^{1 -(1+\e)/\eta} )\leq {\rm e}^{-d/2} + {\rm e}^{-0.6 \e r}.
\]
\end{corollary}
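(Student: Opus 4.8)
The plan is to obtain Corollary \ref{cor:metricdimension1noisy} as a direct specialization of Theorem \ref{thm:csp_detnoise}, choosing the two free parameters $t$ and $\tau$ exactly as in the proof of the noiseless Corollary \ref{cor:metricdimension1}. The uniform (``$\exists\,\xv_o\in\Qc$'') guarantee is already contained in Theorem \ref{thm:csp_detnoise}, so no new covering or union-bound argument over $\Qc$ is needed; the entire task reduces to plugging in the right constants and checking that the error threshold and the failure probability simplify to the stated forms. Concretely, I would set $\zeta=\d$ (the hypothesis of the corollary), $t=1$, and $\tau=1-\d^{2(1+\e)/\eta}$, and then verify the two resulting simplifications separately.

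For the error term, note first that with $\tau=1-\d^{2(1+\e)/\eta}$ one has $\sqrt{1-\tau}=\d^{(1+\e)/\eta}$, and with $t=1$ the constant in the theorem is $c=\sqrt{n/d}+2$. Substituting these into $\frac{2c\d}{\sqrt{1-\tau}}+\frac{2\zeta}{\sqrt{d(1-\tau)}}$ and using $\zeta=\d$, the first summand becomes $2(\sqrt{n/d}+2)\,\d^{1-(1+\e)/\eta}$ and the second becomes $\frac{2}{\sqrt{d}}\,\d^{1-(1+\e)/\eta}$. Adding them gives $\bigl(2(\sqrt{n/d}+2)+2/\sqrt{d}\bigr)\zeta^{1-(1+\e)/\eta}=\theta\,\zeta^{1-(1+\e)/\eta}$, which is exactly the stated threshold. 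This step is purely algebraic and presents no difficulty.

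The only step requiring genuine care is the probability bound, and the main obstacle there is bookkeeping rather than any new idea. With $t=1$ the first term ${\rm e}^{-dt^2/2}$ is immediately ${\rm e}^{-d/2}$. For the second term $2^{2r}{\rm e}^{\frac{d}{2}(\tau+\log(1-\tau))}$ I would reuse the estimate already established in \eqref{eq:bound-p-details}: with the measurement budget $d=\frac{2\eta r}{\log_2(1/\ex\d)}$ and the hypothesis $\frac{\eta}{\log(1/\ex\d)}<\e$, that computation yields $2r\log 2+\frac{d}{2}(\tau+\log(1-\tau))\leq -0.6\e r$, exactly as in the noiseless Corollary \ref{cor:metricdimension1}. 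The point to watch is that Theorem \ref{thm:csp_detnoise} unions over \emph{pairs} of codewords and therefore carries a $2^{2r}$ prefactor (versus the $2^{r}$ of the individual-sequence results); this is precisely what forces the doubled budget $d=2\eta r/\log_2(1/\ex\d)$ and, correspondingly, the sharper exponent $-0.6\e r$ in place of the $-0.3\e r$ of \eqref{eq:bound-p-details}. Once this exponent is verified, combining the two terms delivers the claimed bound ${\rm e}^{-d/2}+{\rm e}^{-0.6\e r}$ and completes the argument.
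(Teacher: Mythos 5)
Your proposal is correct and follows essentially the same route as the paper's own proof: specialize Theorem \ref{thm:csp_detnoise} with $t=1$, $\tau=1-\d^{2(1+\e)/\eta}$, $\zeta=\d$, simplify the error threshold to $\theta\,\zeta^{1-(1+\e)/\eta}$, and reuse the computation in \eqref{eq:bound-p-details} (with the doubled budget absorbing the $2^{2r}$ prefactor) to get the exponent $-0.6\e r$. No gaps; this matches the paper's argument step for step.
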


\begin{proof}

Let $t=1$, $\tau = 1-{\d}^{2(1+\e)/\eta}$  in Theorem \ref{thm:csp_detnoise}. 
Then,
\[
 \frac{2\d}{\sqrt{1-\tau}}(\sqrt{\frac{n}{d}}+ (1+t) ) + \frac{2\zeta}{\sqrt{(1- \tau)d}} 
= \theta {\d}^{1-(1+\e)/\eta}.
\]
Furthermore, for  $d = { \eta r \over \log_2(1/ \ex {\d})}$, as shown in  \eqref{eq:bound-p-details}, 
\[
(2\log 2 )r+{d\over 2}(\tau + \log(1- \tau))\leq -0.6\e r.
\]
 \end{proof}
\vspace{.2cm}

\begin{remark}
Let $\zeta \ll 1$. Then the reconstruction error has the same order, i.e., $\zeta^{1-1/\eta}$  as Corollary \ref{cor:detniseind}. However, note that the number of measurements that are required in Corollary \ref{cor:metricdimension1noisy} is two times that of Corollary \ref{cor:detniseind}. \\
\end{remark}

\subsubsection{Stochastic noise}

Consider the problem of recovering a signal from an undersampled set of linear measurements in the presence of stochastic noise, \ie  $\yv_o=A\xv_o+\zv$, where $z_i\sim\Nc(0,\s^2)$, $i=1,\ldots,d$. To recover signal $\xv_o$ from $\yv_o$, again we employ the CSP algorithm described in \eqref{eq:CSP}. Parallel to our discussion in Section \ref{sec:indstochnoise}, we can apply Theorem \ref{thm:csp_detnoise} and obtain the following result: \\

\begin{theorem}\label{thm:csp_stocnoise1}
Consider the setup of Theorem \ref{thm:csp_detnoise} with the only difference being  that noise $\zv$ is now drawn from $\Nc(0,\s^2I)$. Then
\begin{align}
&\P \Big( \exists \;\xv_o \in \Qc: \ \|\xv_o- \hat{\xv}_o\|_2 \geq {2c{\d}+2\s \sqrt{1+\tau'}\over \sqrt{1-\tau}} \Big) \nonumber \\
& \leq\;  2^{r}{\rm e} ^{\frac{d}{2}(\tau + \log(1- \tau))}+ {\rm e} ^{-\frac{dt^2}{2}}+ {\rm e}^{- \frac{d}{2} (\tau'- \log(1+ \tau'))},
\end{align}
where $\tau \in (0,1)$, $\tau' \geq 0$ and $c = \sqrt{\frac{n}{d}}+ t+1 $.
\end{theorem}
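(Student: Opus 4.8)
The plan is to deduce this stochastic-noise guarantee directly from the deterministic-noise uniform bound of Theorem~\ref{thm:csp_detnoise}, combined with a concentration inequality for the Euclidean norm of the Gaussian noise vector, and exploiting the independence of $A$ and $\zv$. In effect, the stochastic result is a reduction: I would replace the worst-case norm bound $\zeta$ of the deterministic theorem by a high-probability bound on $\|\zv\|_2$.

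First I would bound the noise norm. Since $\zv\sim\Nc(0,\s^2 I)$, the quantity $\|\zv\|_2^2/\s^2$ is $\chi^2_d$-distributed, and the Chernoff-type estimate recorded in Section~\ref{sec:proofback} gives, for any $\tau'\geq 0$,
\[
\P\!\left(\|\zv\|_2 > \s\sqrt{d(1+\tau')}\right)\leq {\rm e}^{-\frac{d}{2}\left(\tau'-\log(1+\tau')\right)}.
\]
I denote this rare event by $E_1$ and set $\zeta\triangleq \s\sqrt{d(1+\tau')}$.

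Next, on the complementary event $E_1^{c}$ the realized noise obeys $\|\zv\|_2\leq\zeta$, so I would invoke Theorem~\ref{thm:csp_detnoise} with this value of $\zeta$, treating $\zv$ as an adversarial bounded perturbation. The essential point is that in Theorem~\ref{thm:csp_detnoise} the probability is taken over $A$ alone and the conclusion holds uniformly over every fixed noise vector of norm at most $\zeta$; since $A$ and $\zv$ are independent, I may condition on $\zv$ and apply the theorem separately for each realization with $\|\zv\|_2\leq\zeta$. Substituting $\zeta=\s\sqrt{d(1+\tau')}$ into the deterministic error term gives
\[
\frac{2\zeta}{\sqrt{d(1-\tau)}}=\frac{2\s\sqrt{1+\tau'}}{\sqrt{1-\tau}},
\]
so the deterministic threshold $\frac{2c\d}{\sqrt{1-\tau}}+\frac{2\zeta}{\sqrt{d(1-\tau)}}$ coincides exactly with the threshold $\frac{2c\d+2\s\sqrt{1+\tau'}}{\sqrt{1-\tau}}$ in the statement. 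Writing $E_2$ for the failure event appearing in the theorem, this yields $\P(E_2\mid\zv)\leq {\rm e}^{-dt^2/2}+2^{2r}{\rm e}^{\frac{d}{2}(\tau+\log(1-\tau))}$ for every $\zv$ with $\|\zv\|_2\leq\zeta$, with $c=\sqrt{n/d}+t+1$ as in Theorem~\ref{thm:csp_detnoise}.

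Finally I would assemble the estimates via the law of total probability,
\[
\P(E_2)\leq \P(E_1)+\E\!\left[\ind_{E_1^{c}}\,\P(E_2\mid\zv)\right]\leq {\rm e}^{-\frac{d}{2}(\tau'-\log(1+\tau'))}+{\rm e}^{-dt^2/2}+2^{2r}{\rm e}^{\frac{d}{2}(\tau+\log(1-\tau))},
\]
which is the asserted bound. Since the heavy lifting is already done in Theorem~\ref{thm:csp_detnoise} and in the standard $\chi^2$ tail estimate, there is no genuinely difficult analytic step; the one place that requires care is the conditioning/independence argument, namely justifying that the per-matrix guarantee of Theorem~\ref{thm:csp_detnoise} may be applied pointwise in $\zv$ over the event $\{\|\zv\|_2\leq\zeta\}$ and then averaged, rather than naively substituting a fixed $\zeta$ while $\zv$ is in fact random and, on $E_1$, possibly larger than $\zeta$.
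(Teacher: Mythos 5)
Your proposal is correct and follows essentially the same route as the paper: the paper gives no separate proof of Theorem~\ref{thm:csp_stocnoise1}, stating only that it is obtained by applying Theorem~\ref{thm:csp_detnoise} together with the $\chi^2$ tail bound on $\|\zv\|_2$ from Lemma~\ref{lemma:chi}, which is exactly your reduction; the conditioning-on-$\zv$ step you flag as delicate is in fact even more robust than you suggest, since the good event in the proof of Theorem~\ref{thm:csp_detnoise} (the events $\Ec_1$, $\Ec_2$ there) depends on $A$ alone and the error bound holds on that event uniformly over all noise vectors with $\|\zv\|_2\leq\zeta$, so a plain union bound over the two bad events suffices. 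One discrepancy is worth recording: your reduction inherits the factor $2^{2r}$ from Theorem~\ref{thm:csp_detnoise} (whose union bound runs over the set $\Tc$ of differences of codeword pairs, of size $2^{2r}$), so what you prove --- and what the paper's own argument yields --- is the bound with $2^{2r}{\rm e}^{\frac{d}{2}(\tau+\log(1-\tau))}$ rather than the $2^{r}$ printed in the theorem. The printed $2^{r}$ appears to be a typo carried over from the weak-applicability analogue (Theorem~\ref{thm:csp_stocnoise1ind}); it is consistent neither with Theorem~\ref{thm:csp_detnoise} nor with Theorem~\ref{thm:stocnoiseunif}, both of which carry $2^{2r}$ in the strong (uniform) setting.
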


However, similar to Theorem \ref{thm:csp_stocnoise1ind} this theorem suffers from an issue and that is the independence of  the reconstruction error from the number of measurements (for fixed $n, {\d}$). To resolve this issue we provide a finer analysis that captures the randomness of the noise more efficiently. \\

\begin{theorem}\label{thm:stocnoiseunif}
Consider compact set $\Qc\subset \mathds{R}^n$ and a compression code $(\Ec,\Dc)$ for set $\Qc$ at rate $r$ and distortion ${\d}$. Let $\xvh_o$ denote the reconstruction of CSP from noisy measurements  $\yv= A\xv_o + \zv$, where $\zv \sim \Nc(0,\s^2I)$. Then, the probability that there exists $\xv_o\in\Qc$ such that
\[
\| \xv_o- \hat{\xv}_o\|_2 \geq \frac{2(\sqrt{n} + (t+1) \sqrt{d}) {\d} + 2 \gamma \sigma }{ \sqrt{(1- \tau)d}}+{\d},
\]
is smaller that
\[
\ex^{-{dt^2\over 2}}+2^{2r}(\ex^{-\g^2/2}+\ex^{{d\over 2}(\tau+\log(1-\tau))}),
\]
where $\tau \in (0,1)$, and $t, \gamma >0$. 
\end{theorem}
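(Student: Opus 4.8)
The plan is to bound the error by comparing $\hat{\xv}_o$ against the code reconstruction $\tilde{\xv}_o \triangleq \Dc(\Ec(\xv_o))$, which lies in the codebook $\Cc_r$ and is therefore feasible for the CSP minimization \eqref{eq:CSP}. First I would record the optimality inequality $\|\yv_o - A\hat{\xv}_o\|_2^2 \le \|\yv_o - A\tilde{\xv}_o\|_2^2$. Writing $\yv_o = A\xv_o + \zv$, setting $\eb \triangleq \xv_o - \tilde{\xv}_o$ (so $\|\eb\|_2 \le \d$ by the distortion guarantee \eqref{eq:D}) and $\uv \triangleq \tilde{\xv}_o - \hat{\xv}_o$, and expanding both squared norms, this collapses to the basic quadratic bound
\[
\|A\uv\|_2^2 \le 2\,|\langle A\eb, A\uv\rangle| + 2\,|\langle \zv, A\uv\rangle|.
\]
The argument then reduces to lower bounding $\|A\uv\|_2$ in terms of $\|\uv\|_2$, upper bounding the two cross terms, and finally passing from $\|\uv\|_2$ to $\|\xv_o - \hat{\xv}_o\|_2$ via $\|\xv_o - \hat{\xv}_o\|_2 \le \|\eb\|_2 + \|\uv\|_2 \le \d + \|\uv\|_2$, which is where the additive $\d$ in the statement originates.

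Three concentration ingredients, each matching one term of the probability bound, would be assembled by union bounds. (i) For the distortion cross term I would use the operator norm of the Gaussian matrix: with probability at least $1 - \ex^{-dt^2/2}$ one has $\|A\|_{\mathrm{op}} \le \sqrt{n} + (1+t)\sqrt{d}$ (Davidson--Szarek), whence $\|A\eb\|_2 \le (\sqrt{n}+(1+t)\sqrt{d})\d$ and, by Cauchy--Schwarz, $|\langle A\eb, A\uv\rangle| \le (\sqrt{n}+(1+t)\sqrt{d})\d\,\|A\uv\|_2$; this is uniform over $\xv_o$ because it uses only $\|\eb\|_2 \le \d$. (ii) For the lower bound, $\uv$ is a difference of two codewords, of which there are at most $|\Cc_r|^2 \le 2^{2r}$; for each fixed difference $\vv$, $\|A\vv\|_2^2/\|\vv\|_2^2 \sim \chi^2_d$, so the $\chi^2$ lower tail of Section \ref{sec:proofback} gives $\|A\vv\|_2^2 \ge (1-\tau)d\,\|\vv\|_2^2$ for all such $\vv$ simultaneously, off an event of probability $\le 2^{2r}\ex^{\frac{d}{2}(\tau + \log(1-\tau))}$. (iii) For the noise cross term---the step that makes this finer than Theorem \ref{thm:csp_stocnoise1}---I would condition on $A$ and project $\zv$ onto the then-fixed direction $A\vv/\|A\vv\|_2$: this projection is $\Nc(0,\s^2)$, so the one-sided Gaussian tail gives $\langle \zv, A\vv\rangle \le \gamma\s\,\|A\vv\|_2$, and a union bound over all $2^{2r}$ ordered codeword differences (which also absorbs the absolute value) fails with probability $\le 2^{2r}\ex^{-\gamma^2/2}$.

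On the intersection of these events, substituting (i) and (iii) into the basic inequality and dividing through by $\|A\uv\|_2$ (legitimate since (ii) forces $\|A\uv\|_2>0$ unless $\uv=\mathbf{0}$) yields $\|A\uv\|_2 \le 2(\sqrt{n}+(1+t)\sqrt{d})\d + 2\gamma\s$; combining with the lower bound $\sqrt{(1-\tau)d}\,\|\uv\|_2 \le \|A\uv\|_2$ from (ii) gives $\|\uv\|_2 \le \big(2(\sqrt{n}+(1+t)\sqrt{d})\d + 2\gamma\s\big)/\sqrt{(1-\tau)d}$, and adding $\d$ produces exactly the claimed error bound with total failure probability $\ex^{-dt^2/2} + 2^{2r}(\ex^{-\gamma^2/2} + \ex^{\frac{d}{2}(\tau+\log(1-\tau))})$.

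The main obstacle is the noise cross term, specifically decoupling the fact that $\uv = \tilde{\xv}_o - \hat{\xv}_o$ depends on both $A$ and $\zv$ through the minimization. The key move is that $\uv$ is always one of the at-most-$2^{2r}$ fixed codeword differences, so conditioning on $A$ and taking a union bound over all of them lets me treat the direction $A\vv$ as fixed before $\zv$ is revealed, reducing the noise contribution to a one-dimensional Gaussian projection of size $\gamma\s$ instead of the full $\ell_2$-norm $\approx \s\sqrt{d}$ used in Theorem \ref{thm:csp_stocnoise1}; this is precisely what removes the spurious $\sqrt{d}$ and makes the noise term decay like $\gamma\s/\sqrt{d}$ as the number of measurements grows. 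The only delicate bookkeeping is keeping ordered versus unordered codeword pairs consistent so that the absolute value and the $2^{2r}$ count align.
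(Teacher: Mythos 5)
Your proposal is correct and follows essentially the same route as the paper's proof: the same optimality-inequality expansion, the same three events (Davidson--Szarek operator-norm bound for the distortion cross term, a union-bounded $\chi^2$ lower tail over the $2^{2r}$ codeword differences, and a union-bounded conditional Gaussian projection for the noise cross term), and the same final assembly via the triangle inequality yielding the additive $\d$. The only difference is cosmetic bookkeeping—you work with the unnormalized difference $\uv=\tilde{\xv}_o-\hat{\xv}_o$ and handle the absolute value via ordered pairs, whereas the paper normalizes $A(\tilde{\xv}_o-\hat{\xv}_o)$—which if anything makes the two-sided tail count slightly more explicit than in the paper.
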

See Section \ref{proof:unifcsapp} for the proof.

Our discussions after Theorem \ref{thm:csp_stocnoise1ind} also hold for this theorem. In fact, we believe that it is still possible to obtain sharper bounds on the reconstruction error of $\|\xv_o - \hat{\xv}_o\|_2$. This remains open for future research. 
 

\section{Extension to analog signals} \label{sec:extension}

\subsection{Analog CS}
So far, we have considered the problem of recovering finite-dimensional signals from their undersampled set of linear measurements. However, the framework we have developed for weak CS-applicability can be extended to infinite-dimensional spaces as well.\footnote{Our results on strong CS-applicability can not be extended to analog framework. Characterizing the reconstruction error of CSP in this case seems to be challenging and is left for the future research.} In this section, we extend our results to recovering continuous-time function $f: [0,1]\rightarrow \mathds{R}$ from a finite number of random linear measurements. The importance of this topic for many application areas, e.g. spectrum sensing, has made it the scope of extensive research. (For more information see \cite{TrLaDuRoBa10, MiElDoSh11, MiEl09} and the references therein.) Since our measurement and reconstruction techniques are different from the other work in the literature, we first review the related basic  concepts  required for analyzing   continuous-time functions and then extend CSP for recovering such signals. 

\subsection{Ito's integral}
For continuous-time signals, we consider a measurement system that is based on the Wiener process. Wiener process $W(t)$, a.k.a. Brownian motion, is a continuous time process that satisfies the following four properties:
\begin{enumerate}
\item $\P(W(0) = 0)=1$.
\item The probability that a randomly generated path to be continuous is equal to 1. 
\item $W(t)-W(s) = \Nc(0,t-s)$, for $0\leq s < t \leq 1$.
\item For $0 \leq s_1 < t_1 \leq s_2 < t_2\leq 1$, $W(t_1)-W(s_1)$ is independent of $W(t_2)-W(s_2)$.
\end{enumerate} 
This process is a key component of stochastic calculus and stochastic differential equations. In particular, Ito's integral, which plays a  central role in stochastic differential equations, is defined based on the Wiener process. To keep our discussions simple, we  introduce a specific form of the Ito's integral that is used in this paper. 

For  function $f: [0,1] \to \mathds{R}$, define its $p$-norm as
\[
\|f\|_p \triangleq \left({\int_{0}^{1} |f(t)|^p dt}\right)^{1/p}.
\]
Furthermore, define $L_p([0,1])$ as the set of functions from $[0,1]$ to $\mathds{R}$ with finite $p$-norm, \ie
\[
L_p([0,1]) \triangleq \{ f:[0,1] \rightarrow \mathds{R} \ | \ \|f\|_p < \infty\}. 
\]
In this paper we are mainly interested in $L_2([0,1])$, which is defined as  the set of  functions with finite second moment. Suppose that $f_s\in L_2([0,1])$ is a simple function, \ie $f_s$ can be represented as 
\[
f_s(t) = \sum_{k=1}^N c_k \ind_{t\in (t_k, t_{k+1}]},
 \]
 where $0= t_1<t_2< \ldots < t_N=1$, and $(c_1,\ldots,c_N)\in\mathds{R}^N$.
 For such functions, Ito's stochastic integral is defined as
\[
\int_{0}^1 f_s(t) dW(t) \triangleq \sum_{k=1}^N c_k(W(t_{k+1})- W(t_k)).
\]
Note that since $(W(t_{i+1})- W(t_i): i=0,1,\ldots,N-1)$ are independent Gaussian random variables, the result of this integral is a Gaussian random variable with mean zero and variance $\sum_{k=1}^N c_k^2 (t_{k+1}-t_k)$. For $f \in L_2([0,1])$, let $(f_1, f_2, \ldots)$ be a sequence of simple functions such that
\[
\lim_{n \rightarrow \infty} \int_{0}^1 (f(t)-f_n(t))^2 dt = 0.
\]
Then the Ito's integral of $f$ is defined as
\[
\int_{0}^1 f(t)dW(t) \triangleq \lim_{n \rightarrow \infty} \int_{0}^1 f_n(t) dW(t), 
\]
where the convergence is in the mean square sense.  The following theorem establishes the existence of the Ito integral and its final distribution in our setting: \\

\begin{theorem}\cite{shreve2004stochastic}\label{thm:gaussian_wiener}
Let $f \in L_2([0,1])$. Then $\int_{0}^1 f(t)dW(t)$ is normally distributed with mean zero and variance $\|f\|_2^2.$ 
\end{theorem}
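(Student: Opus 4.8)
The plan is to prove the statement in two stages: first for simple functions, where the conclusion follows immediately from the defining properties of the Wiener process, and then for a general $f \in L_2([0,1])$ by a limiting argument. This mirrors the very construction of the Ito integral given above, so each stage lines up with one layer of the definition.

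For the simple-function case, suppose $f_s(t) = \sum_{k=1}^N c_k \ind_{t \in (t_k, t_{k+1}]}$. By definition $\int_0^1 f_s\,dW = \sum_{k=1}^N c_k (W(t_{k+1}) - W(t_k))$. Properties 3 and 4 of the Wiener process guarantee that the increments $W(t_{k+1}) - W(t_k)$ are independent and distributed as $\Nc(0, t_{k+1}-t_k)$. Since a linear combination of independent Gaussians is Gaussian, I would conclude $\int_0^1 f_s\,dW \sim \Nc\bigl(0, \sum_{k=1}^N c_k^2 (t_{k+1}-t_k)\bigr)$, and the variance is exactly $\|f_s\|_2^2$. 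This settles the claim for simple functions and also records the Ito isometry $\E[(\int_0^1 f_s\,dW)^2] = \|f_s\|_2^2$ that drives the next stage.

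For general $f$, I would take a sequence of simple functions $(f_n)$ with $\|f - f_n\|_2 \to 0$, so that $\int_0^1 f_n\,dW \to \int_0^1 f\,dW$ in mean square, which is precisely how the integral was defined. The natural tool for passing to the limit is the characteristic function: by the first stage, $\E\bigl[\ex^{i\lambda \int_0^1 f_n\,dW}\bigr] = \exp(-\half \lambda^2 \|f_n\|_2^2)$. Mean-square convergence implies convergence in distribution, so these characteristic functions converge pointwise to that of $\int_0^1 f\,dW$; on the other hand, since $\|f_n\|_2 \to \|f\|_2$ the right-hand side converges to $\exp(-\half \lambda^2 \|f\|_2^2)$. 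Identifying the two limits shows that $\int_0^1 f\,dW$ has the characteristic function of $\Nc(0,\|f\|_2^2)$, which proves the theorem.

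The hard part will be the final identification step, namely justifying that the mean-square limit of the Gaussians $\int_0^1 f_n\,dW$ is itself Gaussian rather than merely some mean-zero variable with the right variance. Routing the argument through characteristic functions circumvents this cleanly, since the Gaussian family is closed under pointwise limits of characteristic functions whenever the variances converge, and here $\|f_n\|_2^2 \to \|f\|_2^2$. A minor point to verify along the way is that $(\int_0^1 f_n\,dW)$ is Cauchy in $L_2(\Omega)$, so that the limit exists; this follows from the isometry established in the simple-function computation, and since the excerpt already defines the Ito integral as exactly this mean-square limit, I would simply invoke that definition rather than re-deriving it.
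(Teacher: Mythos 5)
Your proof is correct, but note that the paper contains no internal proof to compare against: Theorem \ref{thm:gaussian_wiener} is quoted from Shreve's textbook, and the paper simply points to page 149 of that reference. Your two-stage argument --- exact Gaussianity for simple functions from independence and normality of the Wiener increments, then passage to general $f \in L_2([0,1])$ via characteristic functions, using that mean-square convergence implies convergence in distribution together with $\|f_n\|_2 \to \|f\|_2$ --- is the classical construction-following proof, and it is sound. In particular, the subtlety you flag is exactly the right one: an $L_2(\Omega)$ limit of Gaussian random variables is not a priori Gaussian, and routing the identification through characteristic functions (weak convergence gives pointwise convergence of the characteristic functions, and the uniqueness theorem then pins down the limiting law) closes that gap cleanly. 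For comparison, the cited source proves the analogous statement by a different device: it shows that the exponential process $\exp\bigl(u I(t) - \tfrac{u^2}{2}\int_0^t f^2(s)\,ds\bigr)$ is a martingale, which yields the moment generating function of the integral directly. That martingale route generalizes more readily to time-indexed statements about the whole integral process, whereas your argument is more elementary, uses only the defining approximation by simple functions, and suffices for the fixed-time, deterministic-integrand statement actually needed in this paper.
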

The proof can be found on page 149 of \cite{shreve2004stochastic}.

\subsection{Distortion-rate function}

Consider a class of functions $\Fc \subset L_2([0,1])$, and a family of compression algorithms $\{(\Ec_r,\Dc_r): r>0\}$, indexed by rate $r$. For each code in this family, the encoder and decoder mappings, $(\Ec_r,\Dc_r)$, are defined as
\[
\Ec_r: \Fc\to\{1,2,\ldots,2^r\},
\]
and
\[
\Dc_r: \{1,2,\ldots,2^r\}\to \hat{\Fc},
\]
respectively, where $\hat{\Fc}\subset L_2([0,1])$ denotes the class of reconstruction functions.
For a function $f \in \Fc$, $\Dc_r(\Ec_r(f))$ denotes the reconstruction of function $f$ by the code $(\Ec_r,\Dc_r)$. Given  compression algorithm $(\Ec_r,\Dc_r)$, let $\Cc_r$ denote its {\em codebook} defined as  
\[
\Cc_r \triangleq \{ \Dc_r(\Ec_r(f)): f \in \mathcal{F} \}.
\]
 The  distortion-rate function of this family of codes is defined as
\[
\d(r) \triangleq \sup_{f \in \mathcal{F}} \|f- \Dc_r(\Ec_r(f))\|_2. 
\]

\subsection{Compressed sensing of analog signals}

\vspace{.1cm}

\subsubsection{Measurement process}
Unlike the classical compressed sensing setup, where the measurement process is assumed to be in the discrete time domain, here we consider analog domain measurements. In particular, for  function $f_o \in L_2([0,1])$, we consider $d$ linear measurements of the form 
\begin{equation}\label{eq:ito}
y_{o,i} = \int_0^1 f_o(t) dW_i(t),   \ \ \  {\rm for} \ \  i=1,2, \ldots, d,
\end{equation}
where $W_i$,  $ i=1,2, \ldots, d$, are independent Wiener processes. Similar to the discrete time settings, each measurement is a random linear combination of the signal at different times. As we will show in this section, this type of measurement process ensures that with ``sufficient'' number of measurement, the ``critical information'' about the signal is acquired by the measurements, and therefore, we can recover $f_o$ from the measurement vector $\y_o\in\mathds{R}^d$.

\vspace{.1cm}

\subsubsection{CSP algorithm}

Consider    a family of compression algorithms $\{(\Ec_r, \Dc_r): r>0\}$ for class of functions $\Fc\subset L_2([0,1])$ with  rate-distortion function $\d(r)$. 
We are interested in recovering a function $f_o \in \Fc \subset L_2([0,1])$ from  $d$ linear measurements $\yv_o\in\mathds{R}^d$, where for $i=1,\ldots,d$,
\begin{equation}
y_{o,i} \triangleq   \int_0^1 f_o(t) dW_i(t).\label{eq:calAdef}
\end{equation}
Let $\Ac: L_2([0,1])\to\mathds{R}^d$ denote the just-defined  linear measurement process, \ie $\yv_o=\Ac(f_o)$.
To recover the function $f_o$ from $\yv_o$, we employ the CSP algorithm defined as
\begin{align}
\hat{f}_o = \argmin_{\fh \in \mathcal{C}_r} \| \yv_o - \mathcal{A}(\hat{f})\|_2^2.\label{eq:CSP-cont}
\end{align}

The intuition for the CSP algorithm is the same as what we proposed before; among all the low-complexity signals (defined according to the compression algorithm) look for the one that matches the measurements the best. The parameter $r$ can be considered as a free parameter in the algorithm, whose role will be clear in the next section. 

\vspace{.1cm}

\subsubsection{Performance guarantees for CSP}

Consider the problem of recovering  function $f_o \in \mathcal{F} \subset L_2([0,1])$ from its undersampled  set of $d$ random linear measurements, $\y_o = \mathcal{A}(f_o)$, as defined in \eqref{eq:calAdef}. Assume that there exists a family of compression algorithms $(\Ec_r,\Dc_r)$  for $\mathcal{F}$ indexed by $r$,  that achieves the rate distortion function $r(\d)$. We employ the CSP algorithm to recover $f_o$. The following theorem characterizes the performance of the CSP algorithm.

\begin{theorem}\label{thm:CSPInf}
 For $f_o \in \Fc$, let $\hat{f}_o$ denote the reconstruction of $f_o$ from $\yv_o=\Ac(f_o)$, by the CSP algorithm employing rate-$r$ compression algorithm $(\Ec,\Dc)$. Then,
\[
\| \hat{f}_o- f_o\|_2 \leq {\d}\sqrt{1+\tau_1\over 1-\tau_2},
\]
with probability at least 
\[
1-2^{r}{\rm e} ^{\frac{d}{2}(\tau_2 + \log(1- \tau_2))}- {\rm e} ^{-\frac{d}{2}(\tau_1 - \log(1+ \tau_1))}.
\]
\end{theorem}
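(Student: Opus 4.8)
The plan is to transcribe the argument behind Theorem \ref{thm:CSP} into the analog setting, using the measurement operator $\Ac$ in place of the Gaussian matrix $A$. The single fact that makes this possible is a distributional identity: for any fixed $g \in L_2([0,1])$, the squared measurement norm $\|\Ac(g)\|_2^2$ has the same law as $\|g\|_2^2\,\chi^2_d$. Indeed, by linearity of the Ito integral each coordinate of $\Ac(g)$ equals $\int_0^1 g(t)\,dW_i(t)$, which by Theorem \ref{thm:gaussian_wiener} is $\Nc(0,\|g\|_2^2)$; moreover, since $W_1,\ldots,W_d$ are independent Wiener processes, these $d$ coordinates are mutually independent. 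Hence $\|\Ac(g)\|_2^2 = \sum_{i=1}^d \big(\int_0^1 g(t)\,dW_i(t)\big)^2$ is $\|g\|_2^2$ times a sum of $d$ independent squared standard normals, i.e.\ $\|g\|_2^2\,\chi^2_d$. This is exactly the distribution of $\|A\uv\|_2^2$ for a fixed vector $\uv$ with $\|\uv\|_2=\|g\|_2$ in the finite-dimensional proof, so the two standard chi-squared tail bounds apply verbatim.

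Next I would set up the optimality comparison. Let $\tilde{f} \triangleq \Dc(\Ec(f_o)) \in \Cc_r$ be the code reconstruction of $f_o$, so that $\|f_o-\tilde{f}\|_2 \leq \d$ by the definition of distortion. Because $\hat{f}_o$ minimizes $\|\yv_o-\Ac(\cdot)\|_2^2$ over $\Cc_r$ and $\tilde{f}\in\Cc_r$, and since $\yv_o=\Ac(f_o)$, linearity of $\Ac$ gives $\|\Ac(f_o-\hat{f}_o)\|_2 \leq \|\Ac(f_o-\tilde{f})\|_2$. It therefore suffices to upper bound the right-hand side and lower bound the left-hand side.

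For the upper bound, applying the chi-squared upper-tail estimate to the single fixed function $f_o-\tilde{f}$ gives $\|\Ac(f_o-\tilde{f})\|_2^2 \leq (1+\tau_1)\,d\,\|f_o-\tilde{f}\|_2^2 \leq (1+\tau_1)\,d\,\d^2$, outside an event of probability at most ${\rm e}^{-\frac{d}{2}(\tau_1-\log(1+\tau_1))}$. For the lower bound I would take a union bound over the codebook: for each of the at most $2^r$ functions $g\in\Cc_r$, the lower-tail estimate gives $\|\Ac(f_o-g)\|_2^2 \geq (1-\tau_2)\,d\,\|f_o-g\|_2^2$ except on an event of probability at most ${\rm e}^{\frac{d}{2}(\tau_2+\log(1-\tau_2))}$; summing over the codebook, the bound then holds simultaneously for all $g\in\Cc_r$, and in particular for $g=\hat{f}_o$, with probability at least $1-2^r{\rm e}^{\frac{d}{2}(\tau_2+\log(1-\tau_2))}$. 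On the intersection of these two events, which by the union bound has probability at least the stated quantity, I may chain $(1-\tau_2)\,d\,\|f_o-\hat{f}_o\|_2^2 \leq \|\Ac(f_o-\hat{f}_o)\|_2^2 \leq \|\Ac(f_o-\tilde{f})\|_2^2 \leq (1+\tau_1)\,d\,\d^2$, and dividing by $(1-\tau_2)d$ and taking square roots yields the claimed error bound.

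The only genuinely new step relative to the finite-dimensional Theorem \ref{thm:CSP} is the distributional identity in the first paragraph; once $\|\Ac(g)\|_2^2 \sim \|g\|_2^2\,\chi^2_d$ is established, the remaining optimality and concentration steps are identical. The point requiring the most care is the mutual independence of the $d$ coordinates of $\Ac(g)$, which I would justify from the independence of the driving Wiener processes $W_1,\ldots,W_d$ rather than from any property of $g$ itself; this is precisely what guarantees a genuine $\chi^2_d$ law (rather than a correlated quadratic form) and hence the clean product-form tail bounds that let the finite-dimensional proof carry over unchanged.
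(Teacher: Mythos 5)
Your proposal is correct and follows essentially the same route as the paper: you re-derive the paper's Lemma \ref{lem:chisq} (the $\chi^2_d$ law of $\|\Ac(g)\|_2^2/\|g\|_2^2$, via Theorem \ref{thm:gaussian_wiener} and the independence of the Wiener processes) and then transplant the proof of Theorem \ref{thm:CSP} verbatim---upper tail on the fixed code reconstruction, union-bounded lower tail over the $2^r$ codewords, and chaining via the CSP optimality inequality. This matches the paper's argument step for step, including the correct handling of the data-dependent $\hat f_o$ through the codebook union bound.
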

See Section \ref{proof:analogcs} for the proof.

Theorem \ref{thm:CSPInf} considers noiseless measurements.  In the case of noisy measurements,   assume that
\[
y_{o,i} = \int_0^1 f_o(t)dW_i(t) + z_i,
\]
where $z_i$ represents the measurement noise. The results we had in Section \ref{ssec:noise:ind} can be extended to the infinite-dimensional setting. Hence, CSP is robust to noise. For the sake of brevity, we only mention the result for the deterministic noise here.

\begin{theorem}\label{thm:CSPinf_noisy}
Consider compression code $(\Ec,\Dc)$ for set $\Fc$ operating at rate $r$ and distortion $\d$. For $f_o\in\Fc$, let $A\in\mathds{R}^{d\times n}$, where $A_{i,j}$ are i.i.d.~$\Nc(0,1)$, and $\yv_o=\mathcal{A}(f_o)+\zv$, where the measurement noise satisfies  $\|\zv\|_2 \leq \zeta$.  Let $\hat{f}_o$ denote the reconstruction of $f_o$ from $\yv_o$, by the CSP algorithm employing code  $(\Ec,\Dc)$. Then,
\[
\| \hat{f}_o- f_o\|_2 \leq {\d}\sqrt{1+\tau_1\over 1-\tau_2}+ \frac{2 \zeta}{\sqrt{(1- \tau_2)d}},
\]
with probability exceeding
\[
1-2^{r}{\rm e} ^{\frac{d}{2}(\tau_2 + \log(1- \tau_2))}- {\rm e} ^{-\frac{d}{2}(\tau_1 - \log(1+ \tau_1))},
\]
where $\tau_1>0$ and $\tau_2\in(0,1)$ are arbitrary.\\
\end{theorem}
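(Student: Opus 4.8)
The plan is to mirror the proof of Theorem \ref{thm:CSPdetnoiseind}, replacing the role of the i.i.d.\ Gaussian matrix by the Ito-integral measurement operator $\mathcal{A}$. The single ingredient that makes this transfer possible is Theorem \ref{thm:gaussian_wiener}: for any fixed $h \in L_2([0,1])$, each $\int_0^1 h(t)\,dW_i(t)$ is $\Nc(0,\|h\|_2^2)$, and since the Wiener processes $W_1,\ldots,W_d$ are independent and the Ito integral is linear, the vector $\mathcal{A}(h) = \mathcal{A}(f)-\mathcal{A}(g)$ (for $h=f-g$) has i.i.d.\ $\Nc(0,\|h\|_2^2)$ entries. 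Thus $\|\mathcal{A}(h)\|_2^2/\|h\|_2^2$ is a $\chi^2_d$ random variable, exactly as $\|Ah\|_2^2/\|h\|_2^2$ is in the finite-dimensional setting. Every difference $f_o - \cv$ with $\cv\in\mathcal{C}_r$ lies in $L_2([0,1])$, so this structure is available throughout.

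First I would let $\tilde{f}_o \triangleq \mathcal{D}(\mathcal{E}(f_o))$ be the codeword assigned to $f_o$, so that $\|f_o-\tilde{f}_o\|_2 \leq \d$. Since $\hat{f}_o$ minimizes $\|\yv_o-\mathcal{A}(\cdot)\|_2$ over $\mathcal{C}_r$ and $\tilde{f}_o\in\mathcal{C}_r$, optimality gives $\|\yv_o-\mathcal{A}(\hat{f}_o)\|_2 \leq \|\yv_o-\mathcal{A}(\tilde{f}_o)\|_2$. Writing $\yv_o=\mathcal{A}(f_o)+\zv$ and applying the triangle inequality to both sides, using $\|\zv\|_2\leq\zeta$, yields
\[
\|\mathcal{A}(f_o-\hat{f}_o)\|_2 \leq \|\mathcal{A}(f_o-\tilde{f}_o)\|_2 + 2\zeta.
\]

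Next I would supply the two concentration bounds. For the upper bound, $f_o-\tilde{f}_o$ is a fixed function, so a single $\chi^2_d$ upper-tail estimate, $\P(\chi^2_d \geq d(1+\tau_1)) \leq {\rm e}^{-\frac{d}{2}(\tau_1-\log(1+\tau_1))}$, gives $\|\mathcal{A}(f_o-\tilde{f}_o)\|_2 \leq \d\sqrt{d(1+\tau_1)}$ off an event of probability at most ${\rm e}^{-\frac{d}{2}(\tau_1-\log(1+\tau_1))}$. For the lower bound I would apply the $\chi^2_d$ lower-tail estimate $\P(\chi^2_d \leq d(1-\tau_2)) \leq {\rm e}^{\frac{d}{2}(\tau_2+\log(1-\tau_2))}$ to each of the at most $2^r$ differences $f_o-\cv$ and take a union bound, so that with probability at least $1-2^r{\rm e}^{\frac{d}{2}(\tau_2+\log(1-\tau_2))}$ the inequality $\|\mathcal{A}(f_o-\cv)\|_2 \geq \|f_o-\cv\|_2\sqrt{d(1-\tau_2)}$ holds simultaneously for all codewords, in particular for the data-dependent choice $\cv=\hat{f}_o$.

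On the intersection of these two events, whose complement has probability at most the sum of the two bounds by a final union bound, combining the displayed inequality with the two concentration estimates gives
\[
\|f_o-\hat{f}_o\|_2\sqrt{d(1-\tau_2)} \leq \d\sqrt{d(1+\tau_1)} + 2\zeta,
\]
and dividing by $\sqrt{d(1-\tau_2)}$ produces the claimed bound $\d\sqrt{(1+\tau_1)/(1-\tau_2)}+2\zeta/\sqrt{(1-\tau_2)d}$. The only genuinely new point relative to the finite-dimensional Theorem \ref{thm:CSPdetnoiseind} is the first paragraph, namely verifying that $\mathcal{A}$ reproduces the Gaussian and $\chi^2$ structure through Theorem \ref{thm:gaussian_wiener}; everything afterward is the deterministic-noise argument verbatim, so I do not anticipate any real obstacle beyond the bookkeeping of the union bound.
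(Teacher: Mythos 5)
Your proposal is correct and is exactly the route the paper takes: the paper omits the proof precisely because, as you observe, once Theorem \ref{thm:gaussian_wiener} (together with independence of the $W_i$, i.e., Lemma \ref{lem:chisq}) shows that $\mathcal{A}(h)$ has i.i.d.\ $\Nc(0,\|h\|_2^2)$ entries for any fixed $h\in L_2([0,1])$, the deterministic-noise argument of Theorem \ref{thm:CSPdetnoiseind} carries over verbatim. Your handling of the two $\chi^2$ tail bounds, the union bound over the $2^r$ codewords to cover the data-dependent choice $\hat{f}_o$, and the final combination are all exactly as in that proof.
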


After employing Theorem \ref{thm:gaussian_wiener}, the proof of this result is similar to the proof of Theorem \ref{thm:CSPdetnoiseind}. Hence we skip the proof. 

\begin{remark}
Comparing Theorems \ref{thm:CSP} and \ref{thm:CSP-noisy} with Theorem \ref{thm:CSPInf} and \ref{thm:CSPinf_noisy} may lead us to a conclusion that the ambient dimension of the signal is not important in the performance of CSP algorithm. This conclusion is true at the level of weak CS-applicability. However, it does not hold for strong CS-applicability. 
\end{remark}

\subsection{Applications}

In this section, we investigate the implications of Theorem \ref{thm:CSPInf} for three different classes of continuous time signals. As we will see in these examples, in case of infinite dimensional signals, the  rate-distortion performance shows more diverse types of behavior. Different rate distortion behaviors of these classes clarify the opportunities and limitations of analog CS. While our main focus in this section is on the noiseless signal recovery, one may employ Theorem \ref{thm:CSPinf_noisy} for each example and derive bounds for the reconstruction error in the presence of the noise.  

Let $\mathcal{P}_N^Q(\Delta)$ denote the class of piecewise polynomial functions with $N$, $Q$, $\Delta$ representing the maximum degree of the polynomials, number of singularity points\footnote{Singularity point is a point at which the signal is not infinitely differentiable}, and maximum value of the function, respectively.
\begin{example}\label{ex:piecepoly}
 There exists a family of compression algorithms for $\mathcal{P}_N^Q(\Delta)$  that achieves 
\[
r(\d)= (N+3)(Q+1) \log \left(\frac{1}{{\d}}\right)+c,
\]
where $c$ is a constant that depends on  $N, Q$ and $\Delta$, but is independent of ${\d}$ (Theorem 2.2 in \cite{MaCa08}).  
\end{example}
The  rate-distortion behavior described in Example \ref{ex:piecepoly} for $\mathcal{P}_N^Q(A)$ is reminiscent of the rate-distortion function of subsets of finite-dimensional spaces. However, since the locations of singularities are not fixed, $\mathcal{P}_N^Q(\Delta)$ is not a subset of any finite dimensional subspace of $L_2([0,1])$. Nevertheless, we would expect to recover the signals of this class with finite number of measurements with small error. In fact since the $\alpha$-dimension is equal to $(N+3) (Q+1)$, the signals of this class can be essentially recovered from $(N+3)(Q+1)$ measurements. Note that in our discussion we consider the noiseless setting. Otherwise, we may need more measurements to ensure robustness to noise.  \\

 For finite-dimensional spaces it is straightforward to show that for any compact subset of $\mathds{R}^n$ there exists a compression algorithm whose rate distortion function satisfies $r(\d) = O\left(n \log\left( \frac{1}{\d}\right) \right)$ (Any compact set can be covered by a ball of certain radius. Combining this fact with Example \ref{example:1} establishes the result).  However, in infinite dimensional spaces this is not the case any more. The next example illustrates a class with a slightly different rate-distortion behavior.

Let $\mathcal{H}_h(C)$ be a class of functions $f: \mathds{C} \rightarrow \mathds{C}$ satisfying the following properties:
\begin{itemize}
\item[a)] $f$ is analytic on a strip of size $h$, i.e., $f(z)$ is analytic on $\{z= x+iy \ | \ |y|\leq h \}$. 
\item[b)] $|f(z)|$ is bounded by  $C$. 
\end{itemize}
Define $\mathcal{G}_h(C)$ as
\[
\mathcal{G}_h(C) \triangleq \{ g: [0,1] \rightarrow \mathds{R}  : \exists\; f \in \mathcal{H}_h(C),  g(x) = |f(x+i0)|\}. 
\]   

\begin{example}\label{ex:analytic}
There exists a family of compression algorithms for $\mathcal{G}_h(C)$ that achieves
\[
r(\d) \leq c \left( \log \frac{1}{\d} \right)^2,
\]
where $c$ is a constant that does not depend on ${\d}$ \cite{Koleps59}.
\end{example}
Clearly for this class of functions the $\a$-dimension is infinite, therefore we do not expect to recover the signals accurately from finite number of measurements. However, CSP algorithm is still useful for this class as is described in the next corollary.

 \begin{corollary}
 Let ${\d}\ll1$ and $d = 4c \log_2(1/{\d})$. Then
\[
\P(\|\hat{\xv}_o - \xv_o\|_2 \geq \sqrt{2{\d}}) \leq {\rm e}^{-c\log(2) \log^2(1/{\d})}+ {\rm e}^{-{c \log_2(1/{\d}) \over 2}}.
\]
\end{corollary}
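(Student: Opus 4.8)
The plan is to obtain this corollary as a direct specialization of the analog recovery guarantee of Theorem~\ref{thm:CSPInf} to the class $\mathcal{G}_h(C)$. The only external input needed is the rate-distortion bound $r(\d)\le c\log^2(1/\d)$ supplied by Example~\ref{ex:analytic}. Concretely, I would plug this $r$ into Theorem~\ref{thm:CSPInf} and then choose the free parameters $\tau_1,\tau_2$ together with the measurement count $d=4c\log_2(1/\d)$ so that both the error bound and the two probability terms collapse to the stated forms.

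First I would pin down the reconstruction error. Setting $\tau_1=1$ and $\tau_2=1-\d$ makes the error guarantee of Theorem~\ref{thm:CSPInf} read $\d\sqrt{(1+\tau_1)/(1-\tau_2)}=\d\sqrt{2/\d}=\sqrt{2\d}$, which is exactly the target. With these choices fixed, the event appearing in the corollary is contained in the failure event of Theorem~\ref{thm:CSPInf}, so all that remains is to bound the two failure terms under $d=4c\log_2(1/\d)$ and $r\le c\log^2(1/\d)$.

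The second (measurement-concentration) term is the easy one: with $\tau_1=1$ its exponent is $-\tfrac{d}{2}(1-\log 2)$, and since $1-\log 2>\tfrac14$ and $\tfrac{d}{2}=2c\log_2(1/\d)$, one gets $\tfrac{d}{2}(1-\log 2)\ge \tfrac12 c\log_2(1/\d)$, which yields the term ${\rm e}^{-c\log_2(1/\d)/2}$ in the statement. This step is just a numerical inequality on $\log 2$.

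The main work, and the only delicate point, is the codebook-union term $2^{r}{\rm e}^{\frac{d}{2}(\tau_2+\log(1-\tau_2))}$. Writing $L\triangleq\log(1/\d)$, using $\log_2(1/\d)=L/\log 2$ and $\log(1-\tau_2)=\log\d=-L$, its exponent becomes
\[
r\log 2+\tfrac{d}{2}\bigl(1-\d+\log\d\bigr)
\le cL^{2}\Bigl(\log 2-\tfrac{2}{\log 2}\Bigr)+\tfrac{2cL}{\log 2}(1-\d),
\]
after substituting $r\le cL^2$ and $d=4cL/\log 2$. The leading coefficient $\log 2-2/\log 2\approx-2.19$ is strictly negative, so for $\d$ small enough (equivalently $L$ large) the quadratic term dominates the linear one and the whole exponent is at most $-c\log 2\,\log^2(1/\d)$, giving the first term of the bound. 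The hard part is simply to verify that the threshold on $\d$ hidden in ``$L$ large enough'' is harmless: requiring the right-hand side above to be $\le -c\log2\,L^2$ reduces to $(2\log 2-2/\log2)L+\tfrac{2}{\log2}(1-\d)\le0$, which holds once $L\gtrsim 2$ and is therefore absorbed by the hypothesis $\d\ll1$. Summing the two estimates delivers the claimed probability bound.
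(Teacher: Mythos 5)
Your proposal is correct and follows exactly the paper's route: the paper's own proof consists of setting $\tau_1=1$ and $\tau_2=1-\d$ in Theorem~\ref{thm:CSPInf}, which is precisely your parameter choice. The only difference is that you explicitly carry out the numerical verification of the two exponent bounds (including the harmless threshold $\log(1/\d)\gtrsim 2$ absorbed by $\d\ll 1$), which the paper leaves to the reader.
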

To prove this result, set $\tau_1 =1$ and $\tau_2 = 1-{\d}$ in Theorem \ref{thm:CSPInf}. While the number of required measurements  tends to infinity, as the distortion goes to zero,  it  only grows  logarithmically with the distortion. Therefore, intuitively speaking, accurate estimates are still obtained from few measurements. 

If the class of functions is too rich (less structured), then the growth rate of  rate-distortion will be faster and therefore to obtain reasonably accurate reconstruction we may require many observations. Here, we present one such example.

 Let $H^{\beta}(C)$ be the class of real functions $f:[0,1] \rightarrow \mathds{R}$, having derivative of order $\beta$ in $L^2$ (in the sense of Riemann-Liouville) strongly bounded by some constant $C$.
\begin{example} \label{ex:smooth}
 There exists a family of compression algorithms for $H^{\beta}(C)$  that achieves 
\[
r(\d)= c \left(\frac{1}{\d}\right)^{\frac{1}{\beta}},
\]
where $c$ is a constant independent of ${\d}$ \cite{Koleps59}.
\end{example}
According to Theorem \ref{thm:CSPInf},  having $O( \frac{1}{{\d}^{1/\beta} \log(1/{\d})})$ measurements,  the reconstruction error is bounded  by $\sqrt{2{\d}}$. It is clear that as the class of signals becomes richer CSP requires more measurements to achieve the same accuracy.


\section{Related work}\label{sec:related-work}

\subsection{Connection of compression and compressed sensing}
In this paper we consider the problem of using a family of compression algorithms for compressed sensing. The other direction, i.e., using CS for compression have also been extensively studied in the literature \cite{SaBaBa06, GoFlRa08, BoBa07, candes2006encoding, dai2009distortion, sun2009optimal, deng2010robust, schulz2009empirical, boufounos2012universal}. In this line of work the rate-distortion that is achieved by scaler (or in a few cases adaptive) quantization of random linear measurements has been derived. However, such results are different from our work since they only consider either sparse or approximately sparse signals. Furthermore, we consider a different direction, that is, the direction of deriving CS recovery algorithms based on compression schemes.

\subsection{Kolmogorov's $\epsilon$-entropy and embedology} \label{sec:relworkepsilon}

It is clear that our results can be stated in terms of Kolmogorov's $\epsilon$-entropy by considering it as the  optimal compression scheme from the perspective of  rate-distortion tradeoff.  For $\e>0$,  let $\Cc_{\e}$ denote an $\e$-covering of $\Qc\subset\mathds{R}^n$ such that $\log |\Cc_{\e}|=H_{\epsilon}(\Qc)$, and assume that  
 \[
 \limsup_{\epsilon \rightarrow 0}\frac{H_{\epsilon}(\mathcal{Q})}{\log(1/\epsilon)} \leq \alpha.
 \]
This quantity is called upper metric dimension \cite{Koleps59}, Minkowski dimension \cite{falconer2003fractal}, or even box-counting dimension \cite{BeEdFoNi93}.  Metric dimension is a measure of the massiveness of compact sets in finite dimensional spaces \cite{Koleps59}. Furthermore, this quantity is proved to be useful in the analysis of dynamical systems. 
In particular, started with the seminal work of Ma{\~n}{\'e} \cite{Mane81} many authors have explored the connection between the metric dimension of a set and the invertibility of its linear projections.\footnote{The reason this problem has been explored in the field of dynamical systems is that, when an attractor (attractor is a set towards which a variable in a dynamical system moves over time) is measured experimentally, what is actually observed is a projection or embedding of the attractor into a Euclidean space. Therefore, the major question is how accurately we can explore the properties of the attractor from its image.}  To have a detailed comparison between our work and embedology, we review the following theorem from \cite{BeEdFoNi93}:

\begin{theorem} \cite{BeEdFoNi93} \label{thm:holdercontinuous}
Let $\Qc \subset \mathbb{R}^n$ be a compact set with box dimension $\alpha$, and $P_0$ an orthogonal projection of rank $d > 2 \alpha$. Then for every $\e$ and $\theta \in (0, 1- 2\alpha/d)$, there exists an orthogonal projection $P$  of rank $d$, and a constant $C$ such that $\| (P-P_0) \xv \|_2 \leq \e \|x\|_2$ for all $\xv \in \mathbb{R}^n$ and,   $\|\xv_1 -\xv_2\|_2 \leq C \|P(\xv_1- \xv_2) \|_2 ^\theta$, $\forall \xv_1,\xv_2\in\Qc$. 
\end{theorem}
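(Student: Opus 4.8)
The plan is to derive this embedology statement as a corollary of the same concentration-of-measure machinery that underlies the strong CS-applicability result (Theorem \ref{thm:finiteuniform}), upgraded from a single-scale distance-preservation estimate to a uniform H\"older bound by a multiscale chaining argument. Since the box (Minkowski) dimension of $\Qc$ coincides with the $\alpha$-dimension of the natural $\e$-covering code, the hypothesis $\rank(P_0)=d>2\alpha$ is exactly the regime in which random projections are, by the paper's results, injective on $\Qc$; the H\"older inequality is then the quantitative refinement of that injectivity.

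First I would realize the target projection $P$ as a small random perturbation of $P_0$ inside the manifold of rank-$d$ orthogonal projections, parametrized so that the perturbing directions are Gaussian. Keeping the perturbation small guarantees $\|(P-P_0)\xv\|_2\le\e\|\xv\|_2$ for all $\xv$; the existence claim then follows by showing that the H\"older property holds for almost every such perturbation (a prevalence-type argument in the spirit of \cite{HuKa99}), so that both requirements on $P$ are satisfiable simultaneously.

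The core estimate is the single-scale lower bound: for a fixed pair $\xv_1,\xv_2\in\Qc$, the quantity $\|P(\xv_1-\xv_2)\|_2$ concentrates around $\sqrt{d/n}\,\|\xv_1-\xv_2\|_2$, with a lower tail of the same form as the Gaussian bound $\ex^{\frac{d}{2}(\tau+\log(1-\tau))}$ that recurs throughout the paper. Using the box dimension, for each scale $\rho>0$ I would take a $\rho$-net of $\Qc$ of size $N_\rho\le\rho^{-\alpha'}$ (any $\alpha'>\alpha$) and union-bound this tail over the $\le\rho^{-2\alpha'}$ pairs of net points; the total failure probability is $\le\rho^{-2\alpha'}\ex^{-cd}$, which is summable over a geometric sequence of scales precisely because $d>2\alpha$. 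Selecting a single $P$ in the full-measure intersection of these good events makes the lower bound hold at every scale at once.

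The hard part will be the chaining step that converts these discrete, per-scale estimates into a uniform H\"older bound and extracts the sharp exponent. For arbitrary $\xv_1,\xv_2\in\Qc$ with $\|\xv_1-\xv_2\|_2\approx\rho_k=2^{-k}$, I would approximate each point by its nearest net point at the scales $\rho_k,\rho_{k+1},\dots$ and telescope the projection estimate across scales; balancing the geometric growth $\rho^{-2\alpha}$ of the pair count against the contraction factor in the lower bound yields $\|\xv_1-\xv_2\|_2\le C\|P(\xv_1-\xv_2)\|_2^{\theta}$ for every $\theta<1-2\alpha/d$. It is exactly this balance that both forces the threshold $d>2\alpha$ and produces the exponent $1-2\alpha/d$; controlling the accumulation of net-approximation errors across scales is the delicate point.
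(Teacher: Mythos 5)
First, a point of comparison: the paper does not prove this statement at all --- it is quoted from \cite{BeEdFoNi93} as background for the embedology discussion of Section \ref{sec:relworkepsilon} --- so your proposal must stand on its own, and as written its core probabilistic step fails. You bound the per-pair lower tail by a quantity of the form ${\rm e}^{-cd}$ with $c$ fixed, and then claim that the per-scale failure probability $\rho^{-2\alpha'}{\rm e}^{-cd}$ is summable over geometric scales ``precisely because $d>2\alpha$.'' Since ${\rm e}^{-cd}$ does not depend on $\rho$, the sum $\sum_k 2^{2\alpha' k}{\rm e}^{-cd}$ diverges for every $d$, and the hypothesis $d>2\alpha$ never enters. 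What the argument actually requires is a \emph{small-ball} estimate, $\P\left(\|P\uv\|_2\leq \lambda\|\uv\|_2\right)\leq (C\lambda)^d$ --- equivalently, the paper's $\chi^2$ lower tail ${\rm e}^{\frac{d}{2}(\tau+\log(1-\tau))}\leq \left({\rm e}^{1/2}\sqrt{1-\tau}\right)^{d}$ used with a \emph{scale-dependent} $\tau_k\to 1$ --- applied with the shrinkage threshold $\lambda_k\approx\rho_k^{1/\theta-1}$ that a H\"older-$\theta$ bound demands at scale $\rho_k$. Moreover, the net at stage $k$ must be taken at resolution $\approx\rho_k^{1/\theta}\ll\rho_k$, because an approximation error of order $\rho_k$ would swamp the target projected distance $\rho_k^{1/\theta}$; hence the relevant pair count is $\rho_k^{-2\alpha'/\theta}$, not $\rho_k^{-2\alpha'}$. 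The per-scale failure probability is then of order $C^d\rho_k^{(1/\theta-1)d-2\alpha'/\theta}$, which is summable if and only if $d(1-\theta)>2\alpha'$, i.e.\ $\theta<1-2\alpha'/d$. In other words, the step you defer as ``the delicate point'' is precisely where both the threshold $d>2\alpha$ and the exponent $1-2\alpha/d$ are generated; the arithmetic you state in the middle of the proposal cannot produce either.

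A second, related gap concerns the randomization. You take $P$ to be an $\epsilon$-small Gaussian perturbation of the \emph{given} $P_0$, yet invoke concentration of $\|P(\xv_1-\xv_2)\|_2$ around $\sqrt{d/n}\,\|\xv_1-\xv_2\|_2$, which is a property of a \emph{uniformly random} rank-$d$ projection, not of small perturbations of an arbitrary fixed $P_0$: if $P_0(\xv_1-\xv_2)=0$, then every admissible $P$ shrinks $\xv_1-\xv_2$ by a factor at most $\epsilon$, nowhere near $\sqrt{d/n}$. The usable ingredient, as in \cite{SaYoCa91,HuKa99,BeEdFoNi93}, is again a small-ball bound for the perturbed family, of the form $\P\left(\|P\uv\|_2\leq\lambda\|\uv\|_2\right)\leq (C\lambda/\epsilon)^d$, whose degradation as $\epsilon\downarrow 0$ is absorbed into the H\"older constant $C$; concentration around $\sqrt{d/n}$ is both unavailable in your setting and unnecessary for the theorem.
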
  

Note that $\|\xv_1 -\xv_2\|_2 \leq C \|P(\xv_1- \xv_2) \|_2 ^\theta$ immediately implies that the inverse of $P$ exists if we restrict ourselves to the vectors in 
\[
P(\mathcal{Q}) \triangleq \{P(\xv) \ : \  \xv \in \Qc  \}.
\]
If we call this inverse mapping $P^{-1}$, then it is straightforward to obtain $ \|P^{-1} (\mathbf{u}) - P^{-1}(\mathbf{v}) \|_2 \leq C \| \mathbf{u} - \mathbf{v}\|_2^{\theta}$. In other words, Theorem \ref{thm:holdercontinuous} implies that not only we can recover the signal from its low-rank projection, but also the inverse mapping is H\"{o}lder-continuous. These results can be compared with Corollaries \ref{cor:metricdimension1} and \ref{cor:metricdimension1noisy} in our paper if we consider the $\epsilon$-entropy instead of the rate-distortion function. These two corollaries also show that if the number of measurements $d> 2 \alpha$, then CSP can accurately recover the data from $d$-measurements and that the recovery is stable. However, there are several major differences:
\begin{enumerate}
\item We consider a broad class of compression algorithms with generic rate-distortion performance. 
\item Our approach is constructive; we propose CSP that can recover the data even if we have access to only one compression algorithm with certain rate-distortion performance. The ``embedology'' literature is not constructive and only proves the existence of the inverse map.
\item Our approach has enabled us to explore the trade-off between the probability of correct recovery and the reconstruction error.
\item Our stability analysis is more elaborate and more general than the H\"{o}lder-continuity of the inverse mapping $P^{-1}$ for the following reasons: First, the inverse mapping is only defined on the set $P(\Qc)$. Hence if the noise moves  the measurements out of $P(\Qc)$, then the inverse mapping is not even defined. While the CSP algorithm is robust to deterministic noise, we can also provide accurate analysis of its performance in the presence of stochastic noise, as discussed in Theorem \ref{thm:CSP-noisy}. 
\end{enumerate}

Several extensions of Theorem \ref{thm:holdercontinuous} have been  explored in the literature. For instance \cite{HuKa99} has extended this result to infinite-dimensional Hilbert spaces and proved that if $d \geq 2\alpha$ almost every bounded linear transformation of $\Qc$ is invertible. They have also shown that the inverse mapping is H\"{o}lder, however, their H\"{o}lder exponent is much lower than Theorem  \ref{thm:holdercontinuous}; $\frac{d- 2 \alpha}{d(1+ \alpha)}$ versus $1- 2\alpha/d$ in Theorem \ref{thm:holdercontinuous}. For more information on this line of research refer to \cite{SaYoCa91, Mane81, HuKa99, BeEdFoNi93} and references therein. 

  The connection between Minkowski dimension and CS has also been explored in the stochastic settings, and we will review this connection in the next section.

\subsection{Stochastic settings} \label{sec:stochastic}
This paper considers a deterministic signal model. However, stochastic settings have also been considered  in CS \cite{HeCa09, Schniter2010, BaGuSh09, RaFlGo10, DoMaMoNSPT, DoMaMo09, DoTa09, WuVe10, MaDo09sp, WuVeOPtimal12, donoho2012information}. In such models the data is assumed to follow a certain distribution (often i.i.d.) and the probability of correct recovery is measured as the ambient dimension tends to infinity. In many cases the algorithms exhibit certain phase transitions in the probability of correct recovery. Such phase transitions have been characterized in certain cases either theoretically or empirically \cite{MaDo09sp, DoMaMoNSPT, DoMaMo09, WuVe10, WuVeOPtimal12}. 

The most relevant to our work are \cite{WuVeOPtimal12, WuVe10}. These two papers characterize the performance of ``information-theoretically'' optimal algorithms in the asymptotic setting. For instance they prove that the number of measurements that are required for ``exact'' recovery is the same as the R{\'e}nyi information dimension. Even though there is an interesting connection between R{\' e}nyi information dimension and metric dimension \cite{kawabata1994rate},  there are several major differences between our work and the work of \cite{WuVeOPtimal12, WuVe10}.
First our framework is concerned with the deterministic signal models. Second, our results are for finite-dimensional signals, and are non-asymptotic. Third, we consider arbitrary family of compression algorithms and characterize when such schemes can be used for signal recovery from random linear measurements.

\subsection{Kolmogorov complexity}
Our work is mainly inspired by series of work on the connection between Kolmogorov complexity of sequences and CS \cite{jalali2012minimum,kolmogrov_sampler, DoKaMe06, BaDu12, BaDu11, jalali2011minimum, JaMaBa12}. In particular, \cite{jalali2012minimum} defines the {\em Kolmogorov information dimension} of $\xv = (x_1, x_2, \ldots, x_n)\in[0,1]^n$ at resolution $m$ as
\[
\kappa_{m,n}(\xv) \triangleq \frac{K^{[\cdot]_{m}}(x_1,x_2, \ldots, x_n)}{m},
\]
where intuitively speaking, $K^{[\cdot]_{m}}(x_1,x_2, \ldots, x_n)$ denotes the Kolmogorov complexity of vector $\xv$, when each of its components is quantized by $m$ bits, 
and proves that if the Kolmogorov information dimension of a sequence is small compared to its ambient dimension one can recover it from an undersampled set of linear measurements. Our results have several connections with \cite{jalali2012minimum}. The proof techniques we use here have similarities to the proof techniques used in \cite{jalali2012minimum}. However, the problems are different. We believe our results in this paper present the first step in a new direction toward practical implementation of \cite{jalali2012minimum}. While the CSP algorithm is based on exhaustive search among the codewords at this point (since it is based on exhaustive search),  it provides an approach to designing  sub-optimal algorithms such as greedy methods. Furthermore, CSP algorithm may enable us to employ universal compression algorithms \cite{jalali2008rate, jalali2012lossy} and develop universal compressed sensing methods. This has been the main goal of \cite{jalali2012minimum, DoKaMe06, BaDu12, BaDu11, jalali2011minimum, JaMaBa12}.


\section{Proofs} \label{sec:proofs}

\subsection{Background}\label{sec:proofback}
We use the following two lemmas from \cite{JaMaBa12} throughout our proofs.

\begin{lemma}[$\chi^2$-concentration]\label{lemma:chi}
Fix $\tau>0$, and let $Z_i\sim\Nc(0,1)$, $i=1,2,\ldots,d$. Then,
\begin{align*}
\P\Big( \sum_{i=1}^d  Z_i^2 <d(1- \tau) \Big)  \leq {\rm e} ^{\frac{d}{2}(\tau + \log(1- \tau))}
\end{align*}
and
\begin{align}\label{eq:chisq}
\P\big( \sum_{i=1}^d  Z_i^2 > d(1+\tau) \Big)  \leq {\rm e} ^{-\frac{d}{2}(\tau - \log(1+ \tau))}.
\end{align}
\end{lemma}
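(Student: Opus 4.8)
The statement is the standard Chernoff (exponential Markov) bound for a chi-squared variable, so the plan is to treat the two tails separately, each by optimizing an exponential moment. First I would record the one-variable moment generating function: for $Z_i \sim \Nc(0,1)$ and any $s < 1/2$ a single Gaussian integral gives $\E[{\rm e}^{s Z_i^2}] = (1-2s)^{-1/2}$, hence by independence $\E[{\rm e}^{s\sum_{i=1}^d Z_i^2}] = (1-2s)^{-d/2}$ for $s<1/2$, and $\E[{\rm e}^{-s\sum_{i=1}^d Z_i^2}] = (1+2s)^{-d/2}$ for all $s>0$.

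For the upper tail, fix $s \in (0,1/2)$ and apply Markov's inequality to ${\rm e}^{s\sum_i Z_i^2}$:
\[
\P\Big(\sum_{i=1}^d Z_i^2 > d(1+\tau)\Big) \leq {\rm e}^{-sd(1+\tau)}(1-2s)^{-d/2}.
\]
I would then minimize the exponent $-sd(1+\tau) - \tfrac{d}{2}\log(1-2s)$ over $s$; setting the derivative to zero gives $1-2s = (1+\tau)^{-1}$, i.e.\ $s = \tau/(2(1+\tau))$, which lies in $(0,1/2)$ for every $\tau>0$. Substituting this optimal value collapses the bound to exactly ${\rm e}^{-\frac{d}{2}(\tau - \log(1+\tau))}$, as claimed.

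For the lower tail I would use the same device with a negative parameter: for $s>0$, since $\{\sum_i Z_i^2 < d(1-\tau)\} = \{{\rm e}^{-s\sum_i Z_i^2} > {\rm e}^{-sd(1-\tau)}\}$, Markov gives
\[
\P\Big(\sum_{i=1}^d Z_i^2 < d(1-\tau)\Big) \leq {\rm e}^{sd(1-\tau)}(1+2s)^{-d/2}.
\]
Optimizing the exponent $sd(1-\tau) - \tfrac{d}{2}\log(1+2s)$ yields $1+2s = (1-\tau)^{-1}$, i.e.\ $s = \tau/(2(1-\tau))$, which is positive precisely because $\tau \in (0,1)$, and substituting produces ${\rm e}^{\frac{d}{2}(\tau + \log(1-\tau))}$.

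There is no genuine obstacle here: the only substantive steps are the Gaussian MGF and a scalar optimization, both routine, and the bound is in fact tight at the optimal $s$ (no slack is introduced beyond the Markov step itself). The single point worth flagging is the domain of validity — the upper-tail argument needs $s<1/2$, and the lower tail tacitly requires $\tau<1$ so that the optimizing $s$ is positive and $\log(1-\tau)$ is well defined. Since the lemma is quoted verbatim from \cite{JaMaBa12}, I would expect the authors simply to cite it rather than reproduce this computation.
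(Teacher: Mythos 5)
Your proof is correct: the Gaussian MGF, the two Markov/Chernoff steps, and the optimizing choices $s=\tau/(2(1+\tau))$ and $s=\tau/(2(1-\tau))$ all check out and reproduce the stated exponents exactly, with the right caveat that the lower-tail bound tacitly requires $\tau<1$. The paper itself offers no proof of this lemma---it is imported verbatim from \cite{JaMaBa12}, precisely as you anticipated---so your computation is the standard argument behind the cited result rather than an alternative to anything in the paper.
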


\begin{lemma}\label{lemma:gaussian-vectors}
Let $\Xv$ and $\Yv$ denote two independent Gaussian vectors of length $n$ with i.i.d.\ elements. Further, assume that for $i=1,\ldots,n$,  $X_i\sim\Nc(0,1)$ and  $Y_i\sim\Nc(0,1)$. Then the distribution of  $\Xv^T\Yv=\sum_{i=1}^nX_iY_i$ is the same as the distribution of $\|\Xv\|_2G$, where $G\sim\Nc(0,1)$ is independent of $\|\Xv\|_2$.
\end{lemma}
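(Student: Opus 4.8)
The plan is to prove the identity by conditioning on $\Xv$ and invoking the elementary fact that a fixed linear combination of independent Gaussian variables is again Gaussian. First I would fix a realization $\Xv=\xv$ and study the conditional law of $\Xv^T\Yv=\sum_{i=1}^n x_i Y_i$. Since the $Y_i$ are i.i.d.~$\Nc(0,1)$ and independent of $\Xv$, conditionally on $\Xv=\xv$ this is a deterministic linear combination of independent standard normals, hence Gaussian with mean $0$ and variance $\sum_{i=1}^n x_i^2=\|\xv\|_2^2$. Thus, conditioned on $\Xv=\xv$,
\[
\Xv^T\Yv\sim\Nc(0,\|\xv\|_2^2).
\]

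Next I would compare this against the conditional law of $\|\Xv\|_2 G$. Because $G\sim\Nc(0,1)$ is independent of $\Xv$, conditioning on $\Xv=\xv$ merely rescales $G$ by the constant $\|\xv\|_2$, giving $\|\xv\|_2 G\sim\Nc(0,\|\xv\|_2^2)$, which matches the conditional law found above for every $\xv$. Equality of the conditional laws for each $\xv$, together with the common marginal of $\Xv$, then yields equality of the unconditional distributions of $\Xv^T\Yv$ and $\|\Xv\|_2 G$, as claimed.

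A second, transform-based route would compute $\E[{\rm e}^{s\Xv^T\Yv}\mid\Xv]=\prod_{i=1}^n{\rm e}^{s^2X_i^2/2}={\rm e}^{s^2\|\Xv\|_2^2/2}$ and, identically, $\E[{\rm e}^{s\|\Xv\|_2 G}\mid\Xv]={\rm e}^{s^2\|\Xv\|_2^2/2}$; taking expectations over $\Xv$ (finite for $|s|<1$, since $\|\Xv\|_2^2$ is $\chi^2_n$) shows the two moment generating functions agree in a neighborhood of the origin, which again forces equality in distribution.

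I do not anticipate a genuine obstacle, as this is a standard Gaussian identity. The only point deserving care is the independence clause on the right-hand side: one must check not merely that $\Xv^T\Yv$ is marginally Gaussian, but that its conditional variance given $\Xv$ equals exactly $\|\Xv\|_2^2$, so that the scalar $G$ may legitimately be taken independent of $\|\Xv\|_2$. The conditioning argument makes this transparent, since it isolates the conditional law $\Nc(0,\|\Xv\|_2^2)$ directly.
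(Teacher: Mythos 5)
Your proof is correct. Note, however, that the paper never proves this lemma at all: it is stated at the start of Section \ref{sec:proofback} as one of two lemmas imported from \cite{JaMaBa12}, so your conditioning argument supplies a proof that the paper omits rather than competing with an in-paper one. The argument is the standard one, and it in fact delivers slightly more than the bare statement: equality of the conditional laws for every fixed $\xv$ gives equality of the \emph{joint} laws of $(\Xv,\Xv^T\Yv)$ and $(\Xv,\|\Xv\|_2 G)$, which is precisely the form in which the lemma is later invoked (e.g., in the proof of Theorem \ref{thm:CSP-noisy}, where $\zv^T\uv_1$ is represented as $\|\zv\|_2 G_1$ with $G_1$ independent of $\|\zv\|_2$). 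One small wording slip in your closing paragraph: $\Xv^T\Yv$ is \emph{not} marginally Gaussian --- unconditionally it is a Gaussian variance mixture with $\chi^2_n$ mixing weights, which has heavier-than-Gaussian tails. What your proof correctly establishes, and what actually matters, is that $\Xv^T\Yv$ is \emph{conditionally} Gaussian given $\Xv$ with conditional variance exactly $\|\Xv\|_2^2$; the phrase ``marginally Gaussian'' should be read as ``conditionally Gaussian'' for the remark to be accurate.
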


\subsection{Calculation of  rate-distortion function}\label{proof:examples}
In this section we briefly summarize the proof of Example 1 and Example 2. 
\begin{proof}[Proof of Example \ref{example:1}]
For notational simplicity we set $\rho =1$. Finding a compression algorithm for $\Bc_2^n(1)$ is equivalent to covering $\Bc_2^n(1)$ with $\ell_2$-balls of radius ${\d}$. Consider the following grid points for the interval $[-1,1]$:
\[
\Gc_1 = \left\{ -\left\lceil {\frac{\sqrt{n}}{{\d}}} \right\rceil {\frac{{\d}}{\sqrt{n}}}, \ldots , -{\frac{{\d}}{\sqrt{n}}}, 0 ,{\frac{{\d}}{\sqrt{n}}}, \ldots, \left\lceil {\frac{{\sqrt{n}}}{{\d}}} \right\rceil {\frac{{\d}}{\sqrt{n}}}\right\}. 
\]
It is straightforward to show that $\ell_2$-balls of radius ${\d}$ with centers on
\[
\Gc_n =  \underbrace{\Gc_1 \times \Gc_1 \times \ldots, \times \Gc_1}_n.
\]
covers the entire space $\Bc_2^n(1)$. Therefore, our compression scheme maps each vector to its closest codeword, i.e.,
\[
\Dc(\Ec(\xv)) = \arg \min _{\zv \in \Gc_n} \|\zv- \xv\|_2. 
\]
If the minimizer is not unique, the compression algorithm chooses one of the minimizers at random. The rate such compression algorithm achieves is equal to 
\begin{eqnarray*}
r(\d) &=& \log \left(2 \left\lceil \frac{\sqrt{n}}{{\d}} \right\rceil +1\right)^n \nonumber \\
   &\leq &  \log \left(2 \frac{\sqrt{n}}{{\d}} +3\right)^n \nonumber \\
   &\leq &  \log \left(5 \frac{\sqrt{n}}{{\d}}\right)^n \nonumber \\ 
&=& n\log(\sqrt{n}) + n \log\left(\frac{1}{{\d}} \right)+ n\log(5).  
\end{eqnarray*}

\end{proof}
\begin{proof}[Proof of Example \ref{example:2}]
Our encoding scheme is inspired by the previous example. The space of all $k$-sparse signals has ${n \choose k}$ hyperplanes. Once we specify the hyperplane $\mathcal{H}$, $\mathcal{H} \cap \Bc_2^n(1)$ is an $\ell_2$-ball of radius $1$ in $k$-dimensional subspace. Therefore, according to Example \ref{example:1} we require 
\[
\log \left(\frac{\sqrt{k}}{{\d}}\right)^k+ c k
\]
bits to code it with distortion smaller than ${\d}$ in a specified subspace. Therefore, overall we require $\log{n \choose k}$ for coding the subspace, and $\log \left(\frac{2\sqrt{k}}{{\d}}\right)^k$ for specifying the codeword on each hyperplane. This proves
\[
r(\d) = \log{n \choose k} + \log \left(\frac{\sqrt{k}}{{\d}}\right)^k + ck.
\]
\end{proof}
\subsection{Proofs of weak CS-applicability theorems}\label{proof:indcsapp}

\begin{proof}[Proof of Example \ref{example:3}]
Consider a compression code for $\Bc_2^n(1)$  with codebook $\Cc$ operating at rate $r$ and distortion $\d$. Each reconstruction codeword $\xvh\in\Cc$ can cover at most a ball of radius ${\d}$ in $\Bc_2^n(1)$. Hence, overall the $2^r$ codewords in $\Cc$, at most cover a volume equal to $2^r V_n {\d}^n$, where $V_n$denote the volume of $\Bc_2^n(1)$. Since the code has maximum distortion ${\d}$, these balls should cover the whole $\Bc_2^n(1)$. Hence, $2^r v_n {\d}^n \geq V_n$, or $r\geq n\log_2{1\over {\d}}$. This lower bound, combined with the upper bound that can be derived from Example \ref{example:1} proves that there exists a code with $\a$-dimension  equal to $n$.
\end{proof}
\begin{proof}[Proof of Theorem \ref{thm:CSP}]
Let  $\xvt_o=\Dc(\Ec(\xv_o))$, and $\xvh_o=\argmin_{\cv \in \Cc}  \|\yv_o- A\cv \|_2^2.$

Since $\xvh_o$ minimizes $ \|\yv_o- A\cv \|_2^2$ over all $\cv\in\Cc$, we have
\begin{align}
 \|\yv_o- A\xvh_o \|_2 &= \|A\xv_o- A\xvh_o \|_2\nonumber\\
			       & \leq \|A\xv_o- A\xvt_o \|_2\nonumber\\
			       & =\|\xv_o- \xvt_o\|_2\|\uv_1\|_2,
\end{align}
where
\[
\uv_1 \triangleq {A(\xv_o- \xvt_o) \over \|\xv_o- \xvt_o\|_2}.
\] 
Since the entries of $A$ are i.i.d.~Gaussian, $\uv_1$ is a vector of $d$ independent zero-mean Gaussian random variables  with variance $1$. For $\tau_1>0$, define event 
\[
\Ec_1\triangleq\{\|\uv_1\|_2^2< d(1+\tau_1)\}.
\] 
By Lemma \ref{lemma:chi},
\begin{align*}
\P(\Ec_1^c)=\P(\|\uv_1\|^2_2 \geq d(1+\tau_1))\leq {\rm e} ^{-\frac{d}{2}(\tau_1 - \log(1+ \tau_1))}.
\end{align*}
Since $\xvt_o$ is the reconstruction of $\xv_o$ using the  compression code $(\Ec,\Dc)$, it follows that  
$\|\xv_o- \xvt_o\|_2\leq {\d}$. Therefore, conditioned on $\Ec_1$, 
\begin{align}
 \|\yv_o- A\xvt_o \|_2 & \leq {\d}\sqrt{d(1+\tau_1)}.
\end{align}

To find a lower bound on $ \|\yv_o- A\xvh_o \|_2$, note that for a fixed $\cv\in\Cc$, 
\begin{align}
 \|\yv_o- A\cv \|_2&= \|A(\xv_o-\cv) \|_2\nonumber\\
&= \|\xv_o-\cv\|_2\|\uv_2 \|_2,\nonumber
\end{align}
where 
\[
\uv_2 \triangleq {A(\xv_o- \cv) \over \|\xv_o- \cv\|_2}.
\] 
Similar to $\uv_1$, $\uv_2$ is a $d$-dimensional distributed as $\Nc({\bf 0},I_d)$. Note that $\uv_2$ depends on $\cv$.
For $\tau_2\in(0,1)$, define event $\Ec_2$ as
\begin{align*}
\Ec_2\triangleq\{\forall\; \cv\in\Cc: \|\uv_2^2\|\geq d(1-\tau_2) \}.
 \end{align*}
By Lemma \ref{lemma:chi} and the union bound, it follows that
 \begin{align}
\P(\Ec_2^c)& \leq \sum_{\cv\in\Cc} \P \{\|\uv_2^2\|\leq d(1-\tau_2) \}\nonumber\\
&\leq2^{r}{\rm e} ^{\frac{d}{2}(\tau_2 + \log(1- \tau_2))}.
 \end{align}
Combining the two events, conditioned on that $\Ec_1$ and $\Ec_2$ both hold,
\begin{align}
\sqrt{d(1-\tau_2)}\|\xv_o-\xvh_o\|_2\leq {\d}\sqrt{d(1+\tau_1)}, 
\end{align}
or equivalently,
\begin{align}
\|\xv_o-\xvh_o\|_2\leq {\d}\sqrt{1+\tau_1\over 1-\tau_2}.
\end{align}
Finally, by the union bound,
\begin{align*}
\P(\Ec_1\cap\Ec_2)&=1-\P(\Ec_1^c\cup\Ec_2^c)\nonumber\\
&\geq1-2^{r}{\rm e} ^{\frac{d}{2}(\tau_2 + \log(1- \tau_2))}- {\rm e} ^{-\frac{d}{2}(\tau_1 - \log(1+ \tau_1))}.
\end{align*}
\end{proof}

\begin{proof}[Proof of Theorem \ref{thm:CSPdetnoiseind}]
The proof of this theorem is similar to the proof of Theorem \ref{thm:CSP}. There are a few differences that we highlight here. 
Let  $\xvt_o=\Dc(\Ec(\xv_o))$. Since $\xvh_o=\argmin_{\cv \in \Cc}  \|\yv_o- A\cv \|_2^2$, it follows that $\|\yv_o- A\xvh_o \|_2\leq \|\yv_o- A\xvt_o \|_2$. Therefore, by the triangle inequality, 
\begin{align}
\|A\xv_o- A\xvh_o \|_2-\|\zv_o\|_2\leq \|A\xv_o- A\xvt_o \|_2+\|\zv_o\|_2,\nonumber
\end{align}
or 
\begin{align}
\|A\xv_o- A\xvh_o \|_2 &\leq \|A\xv_o- A\xvt_o \|_2+2\|\zv_o\|_2 \nonumber \\
&\leq \|A\xv_o- A\xvt_o \|_2 + 2 \zeta.
\end{align}
In the rest of the proof we should provide an upper bound for $\|A\xv_o- A\xvt_o \|_2$ and a lower bound for  $\|A\xv_o- A\xvh_o \|_2$. 
They follow exactly as the proof of Theorem \ref{thm:CSP}.
\end{proof}

\vspace{.2cm}


\begin{proof}[Proof of Theorem \ref{thm:CSP-noisy}]
Let $\xvt_o=\Dc(\Ec(\xv_o))$. Since by assumption the code operates at distortion ${\d}$, we have $\|\xv_o-\xvt_o\|\leq {\d}$. On the other hand, since $\xvh_o$ is the solution of \eqref{eq:CSP},
\begin{align}
 \|\yv_o- A\xvh_o \|_2 &= \|A(\xv_o-\xvh_o)+\zv \|_2\nonumber\\
			       & \leq \|A(\xv_o-\xvt_o)+\zv \|_2.\label{eq:ub-thm2}
\end{align}
Expanding both sides of \eqref{eq:ub-thm2} and canceling the common terms, we obtain
\begin{align}
&\|A(\xv_o-\xvh_o)\|_2^2+2\zv^TA(\xv_o-\xvh_o)\nonumber\\
			       & \leq \|A(\xv_o-\xvt_o)\|_2^2+2\zv^TA(\xv_o-\xvt_o).\label{eq:up-cancel-terms}
\end{align}
Let
\[
\uv_1\triangleq {A(\xv_o-\xvt_o) \over \|\xv_o-\xvt_o\|_2},
\]
and
\[
\uv_2\triangleq {A(\xv_o-\xvh_o) \over \| A (\xv_o-\xvh_o) \|_2}.
\]
Using this definition,  along with triangle inequality and $a<|a|$, we rewrite  \eqref{eq:up-cancel-terms} as
\begin{align}
& \| A(\xv_o-\xvh_o) \|_2^2 -2 \| A(\xv_o-\xvh_o) \|_2|\zv^T\uv_2|\nonumber\\
			       & \leq \|\xv_o-\xvt_o\|_2^2 \|\uv_1\|_2^2+2\|\xv_o-\xvt_o\|_2|\zv^T\uv_1|.\label{eq:bounds-u}
\end{align}

For $\tau_1>0$, define events $\Ec_1$ as 
\[
\Ec_1\triangleq\Big\{\|A(\xv_o- \xvt_o)\|_2^2 \leq (1+\tau_1)d\|\xv_o- \xvt_o\|_2^2\Big\}.
\] 
Conditioned on $\Ec_1$ we can upper bound $ \|\uv_1\|_2$ by $\sqrt{d(1+\tau_1)}$.

In order to bound $|\zv^T\uv_1|$, we employ Lemma \ref{lemma:gaussian-vectors}. Given $\xv_o$, $\xvt_o$ and $\xvh_o$, both $\uv_1$  is i.i.d.~Gaussian vectors with mean zero and variance one, and is both independent of $\zv$. Therefore, by Lemma \ref{lemma:gaussian-vectors}, $\zv^T\uv_1$ is distributed as $\|\zv\|_2G_1$,  where $G_1$ is zero-mean variance-one Gaussian random variables independent  of $\|\zv\|_2$. For $\g_1>1$, define events $\Ec_2$ as:
\[
\Ec_2\triangleq \{ |\zv^T\uv_1| \leq \g_1\s \sqrt{d} \}.
\]
As argued above, 
\begin{align}
\P&(\Ec_2^c)=\P(\|\zv\|_2|G_1|\geq \g_1\s \sqrt{d} )\nonumber\\
=&\P(\|\zv\|_2|G_1|\geq \g_1\s \sqrt{d},  \|\zv\|_2\geq\s \sqrt{d(1+\tau_3) }\;)\nonumber\\
&+\P(\|\zv\|_2|G_1|\geq \g_1\s \sqrt{d},    \|\zv\|_2<\s\sqrt{ d(1+\tau_3) } \;)\nonumber\\
\leq&\P(\|\zv\|_2\geq\s \sqrt{d(1+\tau_3) }\;)
+\P(|G_1|\geq \g_1(1+\tau_3)^{-0.5}  \;)\nonumber\\
\leq&{\rm e}^{-\frac{d}{2}(\tau_3 - \log(1+ \tau_3))}+{\rm e}^{-\g_1^2/2(1+\tau_3)}.\label{eq:PE4}
\end{align}
where $\tau_3>0$, and the last line follows from Lemma \ref{lemma:chi}. 
Let
\[
\Ec_3\triangleq \{\forall \; \cv\in\Cc: |\zv^TA(\xv_o-\cv)| \leq  \g_2\s  \| A (\xv_o-\cv) \|_2\}.
\]
Note  that since $\|A(\xv_o-\cv)/\| A (\xv_o-\cv) \|_2 \|_2=1$,  for fixed $A(\xv_o-\cv)/\| A (\xv_o-\cv)\|_2$, $\zv^TA(\xv_o-\cv)/\| A (\xv_o-\cv)\|_2$ is distributed as $\Nc(0,\s^2)$. Hence, by the union bound,
\begin{align}
\P(\Ec_3^c)\leq&2^{r} {\rm e}^{-\g_2^2/2}.\label{eq:PE5}
\end{align}
Conditioned on $\Ec_1\cap\Ec_2\cap\Ec_3$,  \eqref{eq:bounds-u} yields
\begin{align}
&\| A(\xv_o-\xvh_o) \|_2^2-2 \g_2\sigma \|A(\xv_o-\xvh_o)\|_2\nonumber\\
&-(1+\tau_1)d{\d}^2-2\g_1\s {\d}\sqrt{d} \leq0.\label{eq:2nd-degree}
\end{align}
The quadratic equation $ax^2-2bx-c=0$, with $a,b,c>0$, has one positive and one negative root. Therefore,  we conclude that $\|A (\xv_o-\xvh_o)\|_2$ is  smaller than the positive root of \eqref{eq:2nd-degree}. That is, conditioned on $\Ec_1\cap\ldots\cap \Ec_3$,  $ \|A (\xv_o-\xvh_o)\|_2$ is upper bounded as
\begin{align}
 \|A(\xv_o -\xvh_o) \|_2& \leq \s\g_2
+ \sqrt{\s^2\g_2^2+(1+\tau_1)d{\d}^2+2\g_1\s {\d}\sqrt{d}}\label{eq:root-ub}
\end{align}

%

For $\tau_2\in (0,1)$, define event $\Ec_4$ as 
\begin{align*}
\Ec_4\triangleq\{\forall\; \cv\in\Cc: \|A(\xv_o-\cv)\|_2\geq \sqrt{d(1-\tau_2)}{\|\xv_o-\cv\|_2} \}.
 \end{align*}
Conditioned on $\Ec_1\cap\ldots\cap\Ec_4$, from \eqref{eq:root-ub}, we obtain
\begin{align}
 \|\xv_o -\xvh_o\|_2& \leq { \s\g_2+ \sqrt{\s^2\g_2^2+(1+\tau_1)d{\d}^2+2\g_1\s {\d}\sqrt {d}} \over \sqrt{d(1-\tau_2) } }.\label{eq:root-ub}
\end{align}


To set the free parameters, we analyze the probability of $\Ec_1\cap\ldots \cap\Ec_4$. By the union bound,
\[
\P(\Ec_1\cap\ldots\cap\Ec_4)\geq 1-\P(\Ec_1^c)-\ldots-\P(\Ec_4^c).
\]
To make sure that $\P(\Ec_1\cap\ldots\cap\Ec_4)$ is close to one for large values of $d$, it suffices to make $\P(\Ec_i^c)\to 0$, for $i=1,\ldots,4$. By Lemma \ref{lemma:chi},
\[
\P(\Ec_1^c)\leq {\rm e} ^{-\frac{d}{2}(\tau_1 - \log(1+ \tau_1))},
\] 
and by Lemma \ref{lemma:chi} and the union bound,
\begin{align}
\P(\Ec_4^c)\leq 2^{r}{\rm e} ^{\frac{d}{2}(\tau_2 + \log(1- \tau_2))}.\label{eq:E4-ub}
\end{align}
Upper bounds on $\P(\Ec_2^c)$ and $\P(\Ec_3^c)$ are given in \eqref{eq:PE4} and \eqref{eq:PE5}, respectively. 
Let $\tau_1=\tau_3=1$, $\tau_2=1-(\ex {\d})^{\frac{2}{\eta}(1+\epsilon') }$, 
\[
\g_1=\sqrt{4r \over \log_2{1\over {\rm e}{\d}}},
\]
and
\[
\g_2=\sqrt{2r}
\]
For $\tau_1=1$,
\[
\P(\Ec_1^c)\leq {\rm e} ^{-{0.15 \eta r \over \log_21/({\rm e}{\d})}}.
\] 
Similarly, for $\tau_3=1$, from \eqref{eq:PE4},
\begin{align}
\P(\Ec_2^c) \leq {\rm e} ^{-{0.15 \eta r \over \log_2 1/({\rm e}{\d})}}+{\rm e}^{-{r\over \log_2 1/({\rm e}{\d})}}.
\end{align}

For $\g_2=\sqrt{2r}$, since $\log 2-1<-0.3$, we obtain
\[
\P(\Ec_3^c)\leq \ex^{-0.3r}.
\]
Finally, for $\tau_2=1-(\ex {\d})^{\frac{2}{\eta}(1+\e')}$, as shown before in \eqref{eq:bound-p-details}, $r \log 2+0.5 d (\tau_2+\log(1-\tau_2))\leq -0.3\e r$.
Inserting the values of the parameters in \eqref{eq:root-ub}, we derive
\begin{eqnarray}
\lefteqn{\|\xv_o-\hat{\xv}_o\|_2} \nonumber \\
&\leq& \!\!\! {1\over (\ex {\d})^{(1+\e')/\eta}} \left({2\s \sqrt{\log_2(1/e{\d})} \over \sqrt{\eta}} \right) \nonumber \\
&+ &  \!\!\! {1\over (\ex {\d})^{(1+\e')/\eta}}  \left(\sqrt{{4\s^2 \log_2(1/e{\d}) \over \eta}+2{\d}^2+{4\s {\d}\over \sqrt{\eta}}}\right), \nonumber
\end{eqnarray}
which holds conditioned on $\Ec_1\cap\ldots\cap\Ec_4$.
\end{proof}
\subsection{Proofs of strong CS-applicability theorems}\label{proof:unifcsapp}
\begin{proof}[Proof of Theorem \ref{thm:finiteuniform}] \label{sec:finiteuniformproof}
 Let $\xv_o\in\Qc$, $\yv_o=A\xv_o$,  $\xvt_o=\Dc(\Ec(\xv_o))$, and $\xvh_o=\argmin_{\cv \in \Cc}  \|\yv_o- A\cv \|_2^2.$ As before, we have $\|\yv_o-A\hat{\xv}_o\|_2 \leq \|\yv_o- A \tilde{\xv}_o\|_2$. Hence,
\[
\| A \tilde{\xv}_o- A \hat{\xv}_o\|_2 - \|A \tilde{\xv}_o- A {\xv}_o \|_2  \leq \| A \xv_o - A \tilde{\xv}_o \|_2.
\]
Rearranging the terms proves that
\begin{equation}\label{eq:upperunif}
\| A \tilde{\xv}_o- A \hat{\xv}_o\|_2 \leq 2 \| A \xv_o - A \tilde{\xv}_o \|_2 \leq 2 \sigma_{\rm max}(A) {\d},
\end{equation}
where $\sigma_{\rm max}(A)$ is the maximum singular value of $A$. Define
\[
\Tc \triangleq \{\cv_1 -\cv_2 \ | \  (\cv_1,\cv_2)\in  \Cc_r\times \Cc_ \}.
\]
Note that $|\mathcal{T}| = 2^{2r}$. Given, $\tau\in(0,1)$, define  event $\Ec_1$ as
\begin{align}
\Ec_1  \triangleq\{ \forall \ \mathbf{h} \in \Tc \, ;  \,  \|A \mathbf{h}\|_2^2 > (1-\tau) d \|\mathbf{h}\|^2_2 \},\label{eq:E1}
\end{align}
and, for $t>0$, the event  $\Ec_2$ as
\begin{align}
\Ec_2 \triangleq \left\{\sigma_{\max}(A) - \sqrt{d} - \sqrt{n} < t \sqrt{d} \right\}.\label{eq:E2}
\end{align}
Conditioned on $\Ec_1$ and $\Ec_2$, \eqref{eq:upperunif} implies that
\begin{eqnarray}\label{eq:upperunif2}
\|\tilde{\xv}_o - \hat{\xv}_o \|_2 \sqrt{d(1-\tau) }  \leq 2 (\sqrt{n} + (t+1) \sqrt{d}) {\d}.
\end{eqnarray}
The last step is to find a lower bound for $\P(\Ec_1 \cap \Ec_2)$ or an upper bound for $\P(\Ec_1^c \cup \Ec_2^c)$. Note that $A \frac{ \mathbf{h}}{\|\mathbf{h}\|_2}$ is a vector of i.i.d.  $N(0,1)$ random variables. Using Lemma  \ref{lemma:chi} we obtain 
\begin{equation}\label{eq:probupp1}
\P( \|A\mathbf{h}\|_2^2 > (1 -\tau) d \|\mathbf{h}\|_2^2 ) \leq {\rm e} ^{\frac{d}{2}(\tau + \log(1- \tau))}.
\end{equation}
Employing union bound and \eqref{eq:probupp1} we have
\begin{equation}\label{eq:events1}
\P(\Ec_1^c) \leq  2^{2r}{\rm e} ^{\frac{d}{2}(\tau + \log(1- \tau))}.
\end{equation}
Finally, using the results on the  concentration of Lipschitz functions of a Gaussian random vector \cite{CaTa05}, we obtain
\begin{align}\label{eq:event2_nless}
\P\left(\Ec_2^{c}\right) &= \P\left(\sigma_{\max}(A) - \sqrt{d} - \sqrt{n} > t \sqrt{d} \right)\nonumber\\ 
&\leq {\rm e}^{-d t^2/2}.
\end{align}
This result is known as Davidson-Szarek theorem. Combining \eqref{eq:events1} and \eqref{eq:event2_nless} with \eqref{eq:upperunif2} finishes the proof.
\end{proof}

\begin{proof}[Proof of Theorem \ref{thm:csp_detnoise}]
The proof is very similar to the proof of Theorem \ref{thm:finiteuniform}.  Let $\xv_o\in\Qc$, $\yv_o=A\xv_o$,  $\xvt_o=\Dc(\Ec(\xv_o))$, and $\xvh_o=\argmin_{\cv \in \Cc}  \|\yv_o- A\cv \|_2^2.$ As before, we have $\|\yv_o-A\hat{\xv}_o\|_2 \leq \|\yv_o- A \tilde{x}_o\|_2$. Hence,
\[
\| A \tilde{\xv}_o- A \hat{\xv}_o\|_2 - \|A \tilde{\xv}_o- A {\xv}_o \|_2 - \|e\|_2  \leq \| A \xv_o - A \tilde{\xv}_o \|_2 + \|e\|_2.
\]
Rearranging the terms proves that
\begin{eqnarray}\label{eq:upperunifnoise}
\| A \tilde{\xv}_o- A \hat{\xv}_o\|_2 &\leq& 2 \| A \xv_o - A \tilde{\xv}_o \|_2 + 2 \|e\|_2 \nonumber \\
 &\leq& 2 \sigma_{\rm max}(A) {\d} + 2 \zeta,
\end{eqnarray}
Defining $\Ec_1$ and $\Ec_2$ as in the proof of Theorem \ref{thm:finiteuniform} and following the same approach proves:
\begin{equation}\label{eq:events1noise}
\P(\Ec_1^c) \leq  2^{2r}{\rm e} ^{\frac{d}{2}(\tau + \log(1- \tau))},
\end{equation}
and
\begin{align}\label{eq:event2_nlessnoise}
\P\left(\Ec_2^{c}\right) &= \P\left(\sigma_{\max}(A) - \sqrt{d} - \sqrt{n} > t \sqrt{d} \right)\nonumber\\ 
&\leq {\rm e}^{-d t^2/2}.
\end{align}
Combining \eqref{eq:events1noise} and \eqref{eq:event2_nlessnoise} with \eqref{eq:upperunifnoise} finishes the proof.
\end{proof}


\begin{proof}[Proof of Theorem \ref{thm:stocnoiseunif}]
Let $\tilde{\xv_o} = \mathcal{D} (\Ec (\xv_o))$. Since $\xvh_o = \arg\min_{\mathbf{c} \in \mathcal{C}} \|\yv_o - A \mathbf{c} \|_2^2$, $\|\yv_o - A \xvh_o \|_2 \leq \|\yv_o - A \tilde{\xv}_o \|_2$. Therefore,
\begin{align}
\| \yv_o - A\tilde{\xv}_o -A (\xvh_o-\tilde{\xv}_o)\|_2^2 \leq  \|\yv_o - A \tilde{\xv}_o \|^2_2
\end{align}
and consequently 
\begin{align}
&\|A (\xvh_o-\tilde{\xv}_o)\|_2^2 -2(\yv_o - A\tilde{\xv}_o)^TA (\xvh_o-\tilde{\xv}_o)=\nonumber\\
&\|A (\xvh_o-\tilde{\xv}_o)\|_2^2 -2(A(\xv_o-\tilde{\xv}_o)-\zv)^TA (\xvh_o-\tilde{\xv}_o)\leq0.\label{eq:step1-thm9}
\end{align}
Let 
\[
\mathbf{u} \triangleq \frac{A( \xvt_o-\xvh_o)}{\| A( \xvt_o-\xvh_o)\|_2}.
\]
Using the Cauchy-Schwarz inequality, from \eqref{eq:step1-thm9} we obtain
\begin{align}
\|&A \xvt_o - A \xvh_o\|_2^2 - 2 \| A (\xvt_o - \xvh_o)\|_2 \|  A(\xv_o - \xvt_o) \|_2 \nonumber\\
& -2 |\zv^T \mathbf{u} | \|A(\xvt_o - \xvh_o)\|_2 \leq 0
\end{align}
or
\begin{align}
\|&A \xvt_o - A \xvh_o\|_2 \leq 2 \|  A(\xv_o - \xvt_o) \|_2+2 |\zv^T \mathbf{u} |.\label{eq:uppernoiseunif1}
\end{align}
Define 
\[
\Ec_1  \triangleq \{ \sigma_{\max}(A) - \sqrt{d} - \sqrt{n} < t \sqrt{d} \},
\]
and 
\[
\Ec_2 \triangleq \{ \forall \cv_1, \cv_2 \in \Cc \ : \  |\zv^T A(\cv_1-\cv_2) | \leq \gamma \s \|A(\cv_1-\cv_2)\|_2\}.
\]
Note that conditioned on $\Ec_2$, $|\zv^T\uv|\leq \g_1\s$. Conditioned on $\Ec_1$ and $\Ec_2$, we can simplify \eqref{eq:uppernoiseunif1} and obtain
\begin{align}
\|A(\xvh_o - \xvt_o)\|_2 \leq 2 (\sqrt{n} + (t+1)\sqrt{d}) {\d} + 2 \gamma \sigma.\label{eq:uppernoiseunif2}
\end{align}
For $\tau\in(0,1)$, let
\[
\Ec_3 \triangleq \{ \forall \cv_1, \cv_2 \in \Cc: \  \| A(\cv_1 - \cv_2) \|_2 \geq \sqrt{(1- \tau)d} \|\cv_1-\cv_2\|_2   \}.
\]
Conditioned on $\Ec_3$ we can simplify \eqref{eq:uppernoiseunif2} and obtain
\[
\| \xvh_o- \xvt_o\|_2 \leq \frac{2(\sqrt{n} + (t+1) \sqrt{d}) {\d} + 2 \gamma \sigma }{ \sqrt{(1- \tau)d}}.
\]
Therefore, 
\begin{align}
\| \xvh_o- \xv_o\|_2& \leq \| \xvh_o- \xvt_o \| + \|\xvt_o- \xv_o \|_2\nonumber\\
&\leq  \frac{2(\sqrt{n} + (t+1) \sqrt{d}) {\d} + 2 \g \sigma }{(1- \tau) \sqrt{d}} + {\d}.\label{eq:finalstocnoiseu}
\end{align}
To finish the proof we need to find an upper bound on  $\P(\Ec_1^c \cup \Ec_2^c \cup \Ec_3^c  )$.
As mentioned before in the proof of Theorem \ref{thm:finiteuniform} we have
\begin{equation}
\P(\Ec_1^c) \leq {\rm e}^{-\frac{dt^2}{2}}. \label{eq:prob1stocnoiseu}
\end{equation}
To obtain an upper bound for $\P(\Ec_2^2)$, note that given $\cv_1,\cv_2\in \mathcal{C}$, $\zv^T(\cv_1-\cv_2)/\|\cv_1-\cv_2\| \sim \Nc(0,\sigma^2)$. Hence, by the union bound, since $|\Cc|\leq 2^r$, we obtain
\begin{align}
\P(\Ec_2^c) &\leq 2^{2r}  {\rm e}^{-\frac{\g^2}{2}}\label{eq:prob2stocnoiseu}
\end{align}
Finally, following a similar argument and Lemma \ref{lem:chisq} yield
\begin{align}
\P(\Ec_3^c) &\leq  2^{2r} {\rm e}^{\frac{d}{2} (\tau+ \log(1 - \tau))}.\label{eq:prob3stocnoiseu}
\end{align}
Combining \eqref{eq:prob1stocnoiseu}, \eqref{eq:prob2stocnoiseu}, \eqref{eq:prob3stocnoiseu} and \eqref{eq:finalstocnoiseu} completes the proof.
\end{proof}



\subsection{Proofs of analog CS} \label{proof:analogcs}
\begin{lemma}\label{lem:chisq}
Let $f \in L_2([0,1])$ and consider
\[
\mathcal{A}(f) = \left[\int_t f dW_1(t), \ldots, \int_t fdW_d(t)\right],
\]
where $W_1, W_2, \ldots, W_d$ are independent Brownian motions. Then $\frac{\|\Ac(f)\|_2^2}{\|f\|_2^2}$ is a $\chi^2$ random variable with $d$ degrees of freedom.
\end{lemma}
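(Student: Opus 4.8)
The plan is to reduce the statement directly to the definition of a $\chi^2$ random variable with $d$ degrees of freedom, namely a sum of squares of $d$ independent standard normal variables. Writing $Y_i \triangleq \int_0^1 f(t)\, dW_i(t)$ for the $i$-th component of $\Ac(f)$, so that $\|\Ac(f)\|_2^2 = \sum_{i=1}^d Y_i^2$, the argument splits into two parts: identifying the marginal law of each $Y_i$, and establishing the joint independence of $Y_1, \dots, Y_d$. Throughout I would assume $\|f\|_2 \neq 0$, since otherwise the normalized quantity in the statement is undefined.

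For the marginals I would simply invoke Theorem \ref{thm:gaussian_wiener}: because $f \in L_2([0,1])$, each Ito integral $Y_i = \int_0^1 f\, dW_i$ is normally distributed with mean zero and variance $\|f\|_2^2$, so that $Z_i \triangleq Y_i/\|f\|_2 \sim \Nc(0,1)$. The independence is the substantive point. The key observation is that $Y_i$ is, by its very construction, a $\sigma(W_i)$-measurable random variable: it is the mean-square limit of the simple-function approximations $\sum_k c_k\big(W_i(t_{k+1}) - W_i(t_k)\big)$, each of which depends only on the path $W_i$. Since the Wiener processes $W_1, \dots, W_d$ are independent, the generated $\sigma$-algebras $\sigma(W_1), \dots, \sigma(W_d)$ are independent, and hence so are $Y_1, \dots, Y_d$. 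Consequently the $Z_i$ are i.i.d.\ $\Nc(0,1)$, and
\[
\frac{\|\Ac(f)\|_2^2}{\|f\|_2^2} = \sum_{i=1}^d \left(\frac{Y_i}{\|f\|_2}\right)^2 = \sum_{i=1}^d Z_i^2,
\]
which is exactly a $\chi^2$ random variable with $d$ degrees of freedom by definition.

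The only genuinely delicate step is the independence claim, and I expect this measurability argument — rather than any distributional computation — to be the main (though still routine) obstacle, since it is the one place where the mean-square limiting definition of the Ito integral must be handled with a little care: one must confirm that the limit inherits $\sigma(W_i)$-measurability from the approximating simple integrals, and that independence of the driving processes is preserved under these limits. Once that is granted, the remainder is pure bookkeeping against the definition of $\chi^2_d$.
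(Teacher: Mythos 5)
Your proposal is correct and follows essentially the same route as the paper's proof: invoke Theorem \ref{thm:gaussian_wiener} for the marginal law of each component, deduce joint independence from the independence of the $W_i$'s, and conclude that the normalized vector is i.i.d.\ $\Nc(0,1)$ so its squared norm is $\chi^2(d)$. The paper states the independence step in one line, whereas you justify it via $\sigma(W_i)$-measurability of the mean-square limit, but this is a fleshing-out of the same argument rather than a different approach.
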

\begin{proof}
According to Theorem \ref{thm:gaussian_wiener} and independence of $W_i$s, the elements of $\Ac(f)$ are iid $N(0,\|f\|_2^2)$. Therefore $\Ac(f)/\|f\|_2$ is iid $N(0,1)$ vector, that proves \[
\frac{\|\Ac(f)\|_2^2}{\|f\|_2^2} \sim \chi^2(d).
\]
\end{proof}

\begin{proof}[Proof of Theorem \ref{thm:CSPInf}]
Given Lemmas \ref{lem:chisq} and Theorem \ref{thm:gaussian_wiener}  the proof is essentially the same as the proof of Theorem \ref{thm:CSP}.  Therefore, we briefly mention the main steps. Let $\tilde{f}_o =\Dc(\Ec(f_o))$ and $\hat{f}_o$ denote the reconstruction of CSP. Since $\hat{f}_o$ is the solution of $CSP$, we have
\[
\|\Ac(\hat{f}_o )- \Ac(f_o)\|_2^2 \leq \|\Ac(\tilde{f}_o )- \Ac(f_o)\|_2^2 
\]
According to Lemma \ref{lem:chisq} both $\|\Ac(\tilde{f}_o)- \Ac(f_o)\|_2^2/ \|\tilde{f}_o- f_o\|_2^2$ and $\|\Ac(\hat{f}_o)- \Ac(f_o)\|_2^2/ \|\hat{f}_o-f_o\|_2^2$ are $\chi^2$ random variables with $d$ degrees of freedom. Therefore it is straightforward to employ Lemma \ref{lemma:chi} and confirm
\[
\P(\|\Ac(\tilde{f}_o)- \Ac(f_o)\|_2  \geq {\d}\sqrt{d(1+ \tau)} ) \leq  {\rm e}^{-\frac{d}{2}(\tau - \log(1+\tau))},
\]
and also for every $\hat{f} \in \Cc_r$
\[
\|\Ac(\hat{f}_o)- \Ac(f_o)\|_2 \geq \|f_o- \hat{f}_o\|_2 \sqrt{d(1-\tau)}. 
\]
with probability $1-2^{r} {\rm e}^{d/2(\tau+ \log(1-\tau))}$.
Combining these results completes the proof.
\end{proof}

\section{Conclusion}\label{sec:conclusion}
In this paper, we studied the problem of employing a family of compression algorithms for compressed sensing, \ie recovering structured signals from their  undersampled set of random linear measurements. Addressing this problem  enables CS schemes to exploit complicated structures  integrated in compression algorithms.  We  proposed compressible signal pursuit (CSP) algorithm that outputs the codeword that  best matches the measurements.  We  proved that employing a family of compression algorithms whose  rate-distortion function satisfies $\lim \sup_{\d \rightarrow 0} {r(\d)}/{\log_2(1/\d)} \leq \alpha$, with $\alpha$ smaller than the ambient dimension, with high probability, CSP recovers signals from $2 \alpha$ measurements. We have also shown that this bound is sharp and the signal cannot be recovered if we have fewer measurements. CSP is also robust to measurement noise. Finally, CSP is also applicable to infinite-dimensional signal classes. CSP is still computationally demanding and requires approximation or simplification for practical applications. This important direction is left for  future research.

\bibliographystyle{unsrt}
\bibliography{myrefs}

\end{document}